\def\BibTeX{{\rm B\kern-.05em{\sc i\kern-.025em b}\kern-.08em
    T\kern-.1667em\lower.7ex\hbox{E}\kern-.125emX}}
\begin{document}

\title{Commitment Attacks on Ethereum’s Reward Mechanism}

\makeatletter
\newcommand{\linebreakand}{%
  \end{@IEEEauthorhalign}
  \hfill\mbox{}\par
  \mbox{}\hfill\begin{@IEEEauthorhalign}
}
\makeatother

\author{\IEEEauthorblockN{Roozbeh Sarenche}
\IEEEauthorblockA{\textit{COSIC, KU Leuven} \\
roozbeh.sarenche@esat.kuleuven.be}
\and
\IEEEauthorblockN{Ertem Nusret Tas\thanks{RS and ENT contributed equally and are listed alphabetically.}}
\IEEEauthorblockA{\textit{Stanford University} \\
nusret@stanford.edu}
\and
\IEEEauthorblockN{Barnabé Monnot}
\IEEEauthorblockA{\textit{Ethereum Foundation Research} \\
barnabe.monnot@ethereum.org}
\linebreakand
\IEEEauthorblockN{Caspar Schwarz-Schilling}
\IEEEauthorblockA{\textit{Ethereum Foundation Research} \\
caspar@ethereum.org}
\and
\IEEEauthorblockN{Bart Preneel}
\IEEEauthorblockA{\textit{COSIC, KU Leuven} \\
bart.preneel@esat.kuleuven.be}
}

\maketitle

\begin{abstract}
Validators in permissionless, large-scale blockchains, such as Ethereum, are typically payoff-maximizing, \emph{rational} actors. Ethereum relies on in-protocol incentives, like rewards for correct and timely votes, to induce \emph{honest} behavior and secure the blockchain. However, external incentives, such as the block proposer's opportunity to capture maximal extractable value (MEV), may tempt validators to deviate from honest protocol participation.

We show a series of commitment attacks on LMD GHOST, a core part of Ethereum's consensus mechanism. We demonstrate how a single adversarial block proposer can orchestrate long-range chain reorganizations by manipulating Ethereum's reward system for timely votes. These attacks disrupt the intended balance of power between proposers and voters: by leveraging credible threats, the adversarial proposer can coerce voters from previous slots into supporting blocks that conflict with the honest chain, enabling a chain reorganization.

In response, we introduce a novel reward mechanism that restores the voters' role as a check against proposer power. Our proposed mitigation is fairer and more decentralized – not only in the context of these attacks – but also practical for implementation in Ethereum.
\end{abstract}

\begin{IEEEkeywords}
blockchain, Ethereum, LMD GHOST, reward mechanism
\end{IEEEkeywords}

\section{Introduction}
\label{sec:introduction}

\subsection{Ethereum Consensus Security}
\label{sec:intro-eth-consensus-security}


Blockchains have emerged as one of the most prominent applications of distributed consensus, a field over $40$ years old.
Whereas earlier blockchains (\eg, Bitcoin) used proof-of-work (PoW) based protocols, the last few years witnessed a proliferation of proof-of-stake (PoS) protocols, where financial stake rather than compute power determines the protocol participants.
The largest proof-of-stake (PoS) blockchain by market capitalization is Ethereum, secured by a PoS Byzantine fault tolerant (BFT) consensus protocol (Gasper~\cite{gasper}).
Ethereum's consensus is run by a set of \emph{validators}, which output \emph{two} ledgers: the \emph{available ledger} and the \emph{finalized ledger}.
Requirements for these ledgers were formally captured by the \emph{ebb-and-flow} property~\cite{ebbandflow}: (i) The \emph{available ledger} must remain secure (\ie, safe and live) under a synchronous network despite unexpected and large drops in the number of active validators (this is also known as the sleepy network model~\cite{sleepy}, or dynamic availability~\cite{genesis}).
(ii) The \emph{finalized ledger} must satisfy \emph{accountable safety}\footnote{Namely, if safety is ever violated, the honest validators can identify the adversarial parties responsible for the violation (\cf Appendix~\ref{sec:slashing}).} under a partially synchronous network and must stay live if sufficiently many validators are active.
Finally, (iii) the finalized ledger must always be a \emph{prefix} of the available ledger.
To achieve the ebb-and-flow property, Ethereum employs two sub-protocols: LMD GHOST (Latest Message Driven
Greedy Heaviest Observed Sub-Tree) that outputs a \emph{canonical chain} as the available ledger, and Casper FFG (the Friendly Finality Gadget~\cite{casperffg}) that checkpoints blocks on this canonical chain.
These checkpointed blocks constitute the finalized ledger.

Ethereum's consensus has been subjected to much scrutiny, as evidenced by attacks on the security of both LMD GHOST~\cite{ebbandflow,3attacks,ethresearch-balancing-attack,ethresearch-balancing-attack2,twoattacks} and Casper FFG~\cite{ffg-attack-1,ffg-attack-2} as well as various protocol patches proposed as response (\cf~\cite{mitigations} for a comprehensive history).
Without the safety of the canonical chain (available ledger), Casper FFG would fail to checkpoint blocks, causing the finalized ledger to stall indefinitely~\cite{ethresearch-bouncing-attack}.
Similarly, without the liveness of the canonical chain, Casper FFG cannot ensure the liveness of the finalized ledger.
Thus, the security of LMD GHOST that outputs the canonical chain is crucial for the security of both ledgers.

Many Ethereum users today act on time-sensitive transactions as soon as they enter the canonical chain, without waiting the $12.8$ minutes required for their finalization by Casper FFG, or any other notion of confirmation by LMD GHOST.
This makes them susceptible to instability on the chain tip.
Making matters worse, validators are incentivized to reorganize (\emph{reorg}) the blocks at the end of the canonical chain by proposing new blocks on top of older blocks rather than the chain tip.
The main reason for this is the magnitude of maximal extractable value (MEV)~\cite{flashboys-website}, \ie, the payoff of a block proposer due to its control over the inclusion, exclusion, and ordering of transactions (\eg, the proposer can frontrun arbitrage transactions in trades, backrun liquidation events to buy the liquidated assets)~\cite{flashboys-2}.
Once the blocks at the chain tip are replaced (\ie, \emph{reorged}), the new blocks can re-include the transactions of the reorged blocks, and capture the associated MEV.
Therefore, besides security, preventing reorgs emerged as a goal of Ethereum consensus, leading to new reorg-resilient proposals to replace LMD GHOST (\eg, Goldfish~\cite{d2022goldfish}, RLMD-GHOST~\cite{rlmd-ghost}).

\subsection{Incentives in Ethereum}
\label{sec:intro-incentives}

LMD GHOST proceeds in slots of $12$ seconds, each with a unique block proposer (hereafter called the \emph{leader}) that proposes a new block extending the tip of the canonical chain.
Each slot is also assigned a committee of validators called \emph{attestors} that vote for the block at the tip of the canonical chain in their views.
To deter future leaders from reorging blocks, these attestors must be incentivized to vote for the proposal at the canonical chain's tip before the next slot starts; so that the proposed block gathers enough support against any future competitor.
Hence, incentivizing the timeliness of these votes is crucial for mitigating reorg attempts and achieving a notion of reorg-resilience in the presence of payoff-maximizing (\ie, \emph{rational}) validators, which might otherwise find it profitable to delay their votes or vote for old blocks.
For this purpose, Ethereum uses a so-called \emph{head vote reward mechanism}: a vote sent for some slot $t$ is rewarded on the canonical chain only if the vote is included in a block $B$ proposed at slot $t+1$ (\ie, \emph{timely}) and if the vote is for the last block from slots $[0,t]$ on the canonical chain (\ie, \emph{correct}).
Ethereum also implements an \emph{inclusion reward mechanism}: a slot $t+1$ leader is rewarded for including the (timely) slot $t$ votes in its block.
Opposing these rewards, future leaders are motivated by MEV to explore ways of deterring previous slots' attestors from sending timely votes.
The tension between the opposing incentives has so far resisted resolution in the form of an attack or a security proof considering rational validators.

\subsection{Commitment Attacks on LMD GHOST}
\label{sec:intro-incentive-attack}

In this work, we present a series of attacks on the reorg-resilience and security of the latest version of LMD GHOST~\cite{mitigations}.
Their scope ranges from reorging a single block to violating the security of LMD GHOST (\ie, the available ledger), which in turn causes a liveness violation for Casper FFG's finalized ledger.
The attacks exploit the head vote reward mechanism in the presence of rational attestors.
Recall that the attestor sending a slot~$t$ head vote is rewarded only if the vote is included in a block of slot $t+1$.
The core idea behind our attacks is that an adversarial leader can abuse its power over the reward of the previous slot's attestors to incentivize them to vote for blocks favorable to the adversary.

The attacks can be conducted by even a single adversarial validator who owns the minimum amount of stake. 
However, the greater the number of validators controlled by the adversary, the more frequently the attacks can be executed.
The adversarial strategy involves only \emph{committing} to a credible threat, which the attestors must be aware of to react rationally (\cf Section~\ref{sec:credibility} for details).
In today's Ethereum, over 90\% of the validators run an out-of-protocol software, `mev boost', in which block proposers outsource the building of their block content to specialized third parties~\cite{mev-boost-stats,mev-boost-github}. 
It is easily imaginable to extend this widely distributed software program to also facilitate the communication between the adversary and attestors in our attacks.
Then, as long as the majority of the attestors are rational (and the remaining ones can be correct), the attacks are guaranteed to be successful.


\smallskip
\noindent
\subsubsection{ Simple Attack (Section~\ref{sec:solo-validators-simple}, Figs.~\ref{fig:simple_game_unsuccessful} and~\ref{fig:simple_game_successful})}
Suppose the leader of some slot $t+1$ is adversarial and aims
to reorg the block $B$ proposed at slot $t$. 
Before slot $t$ starts, it informs the slot $t$ attestors that it is the leader of slot $t+1$,
and asks them to 
vote for $B$'s parent block instead of $B$ itself. 
The adversary commits to including in its block only the votes that \emph{comply} with its instruction.
The slot $t$ attestors are incentivized to comply as otherwise, they do not receive the head vote reward\footnote{Head vote reward constitutes $1/4$ of the total attestation reward~\cite{rewards}, which could be the difference between making a profit vs. loss while validating new blocks.}.  
We show 
a Nash equilibrium in which the attack is successful and the adversary receives the attestation inclusion reward and an additional MEV boost, given both non-colluding (called \emph{solo}) attestors and staking pools of colluding attestors. 

\subsubsection{Strong Simple Attack (Section~\ref{sec:solo-validators-strong-simple})}
Since complying with the adversary is a weakly dominant strategy for the attestors in the simple attack, there exist other equilibria, where a coalition of players can refuse to comply with the adversary to foil the attack.
Hence, in Section~\ref{sec:solo-validators-strong-simple}, we introduce a strong simple game, where an adversary controlling multiple leaders can incentivize the attestors to follow the adversarial strategy regardless of the actions of other attestors, and there is no equilibrium where the adversary fails.
\begin{figure*}[t]
    \centering
    \begin{subfigure}[!hb]{0.49\textwidth}
        \centering
        \includegraphics[height=1.1in]{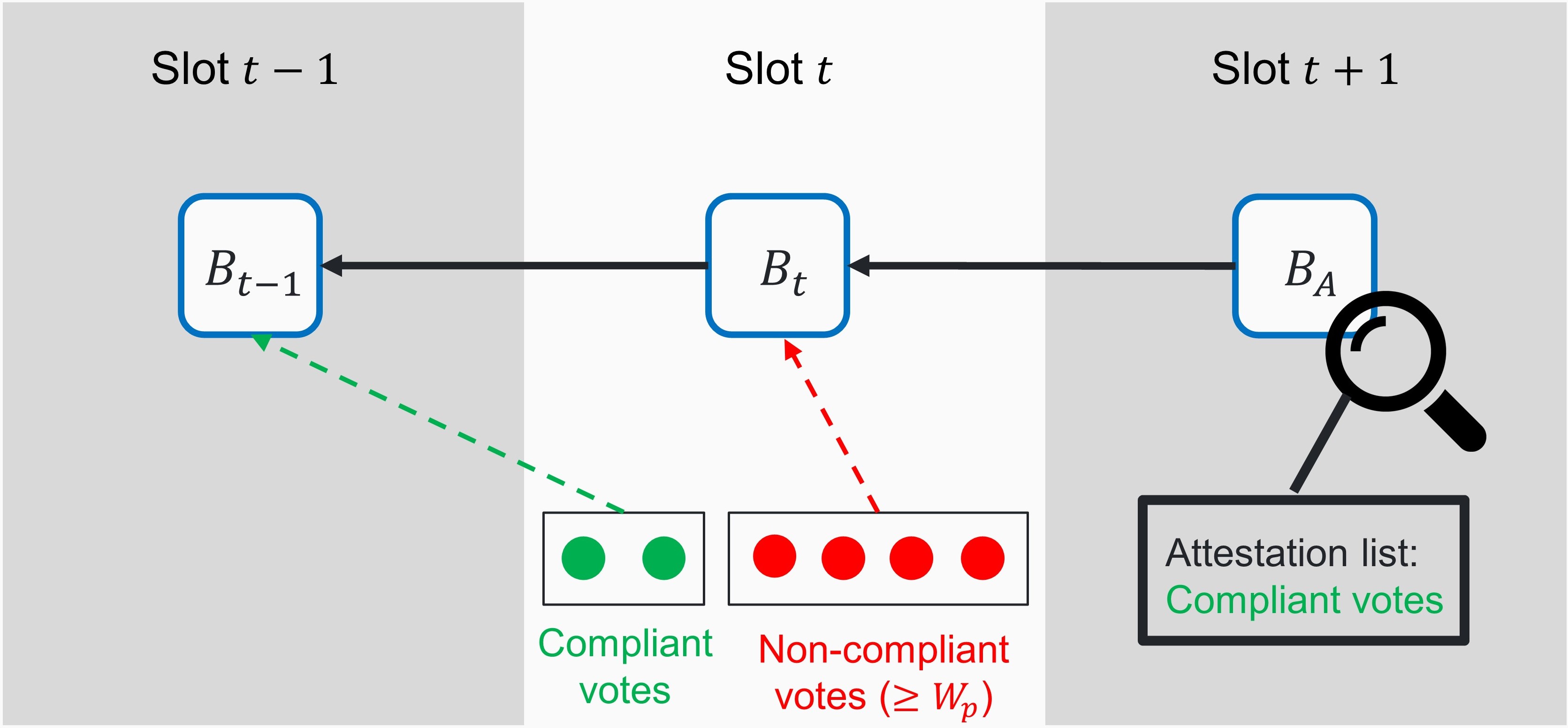}
        \caption{Unsuccessful attack}
        \label{fig:simple_game_unsuccessful}
    \end{subfigure}%
    ~ 
    \begin{subfigure}[!hb]{0.5\textwidth}
        \centering
        \includegraphics[height=1.1in]{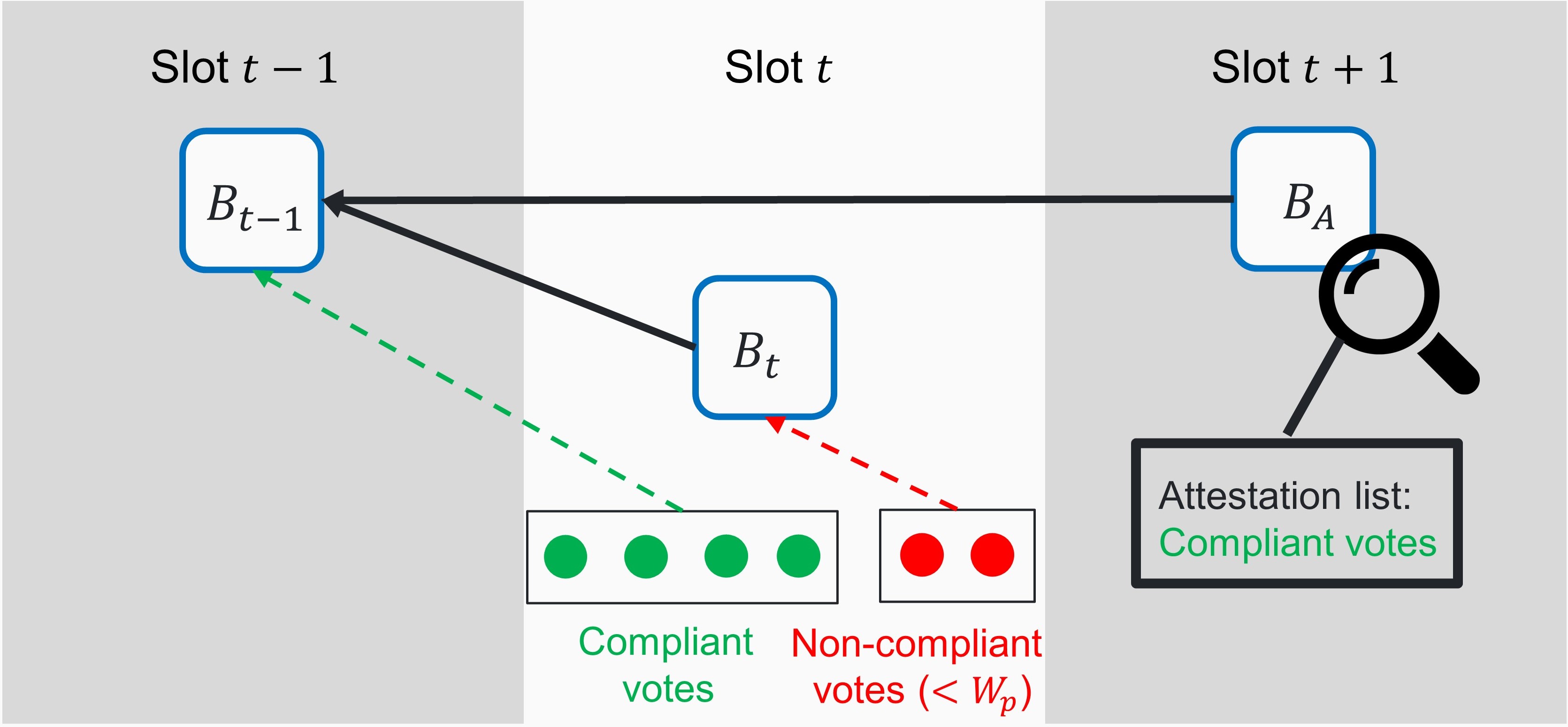}
        \caption{Successful attack}
        \label{fig:simple_game_successful}
    \end{subfigure}
    \caption{The simple game in the presence of non-colluding (solo) validators. As the adversary is committed to excluding the non-compliant votes, even when the attack is unsuccessful, non-compliant votes are not rewarded. $W_p$ denotes the \emph{proposer boost}, equal to 40\% of the slot committee size, that puts a temporary weight on new proposals.}
\end{figure*}

\smallskip
\noindent
\subsubsection{ Extended Attack (Section~\ref{sec:solo-validators-extended}, Fig.~\ref{fig:extended_attack_intro})} 
In the extended attack, the adversary reorgs a sequence of consecutive blocks 
with \emph{empty} (`transaction-less') blocks.
To give insight into the attack, we describe a sketch, where some slot $t$ is normalized to be slot $0$, and the adversary is the leader of slot $3$.
It aims to reorg the two consecutive blocks $B_{-1}$ and $B_{0}$ from slots $-1$ and $0$
with a fork of empty blocks $B_1$ and $B_2$ proposed on top of block $B_{-2}$.
\begin{figure}[!b]
    \centering
    \centering
    \includegraphics[height=1.5in]{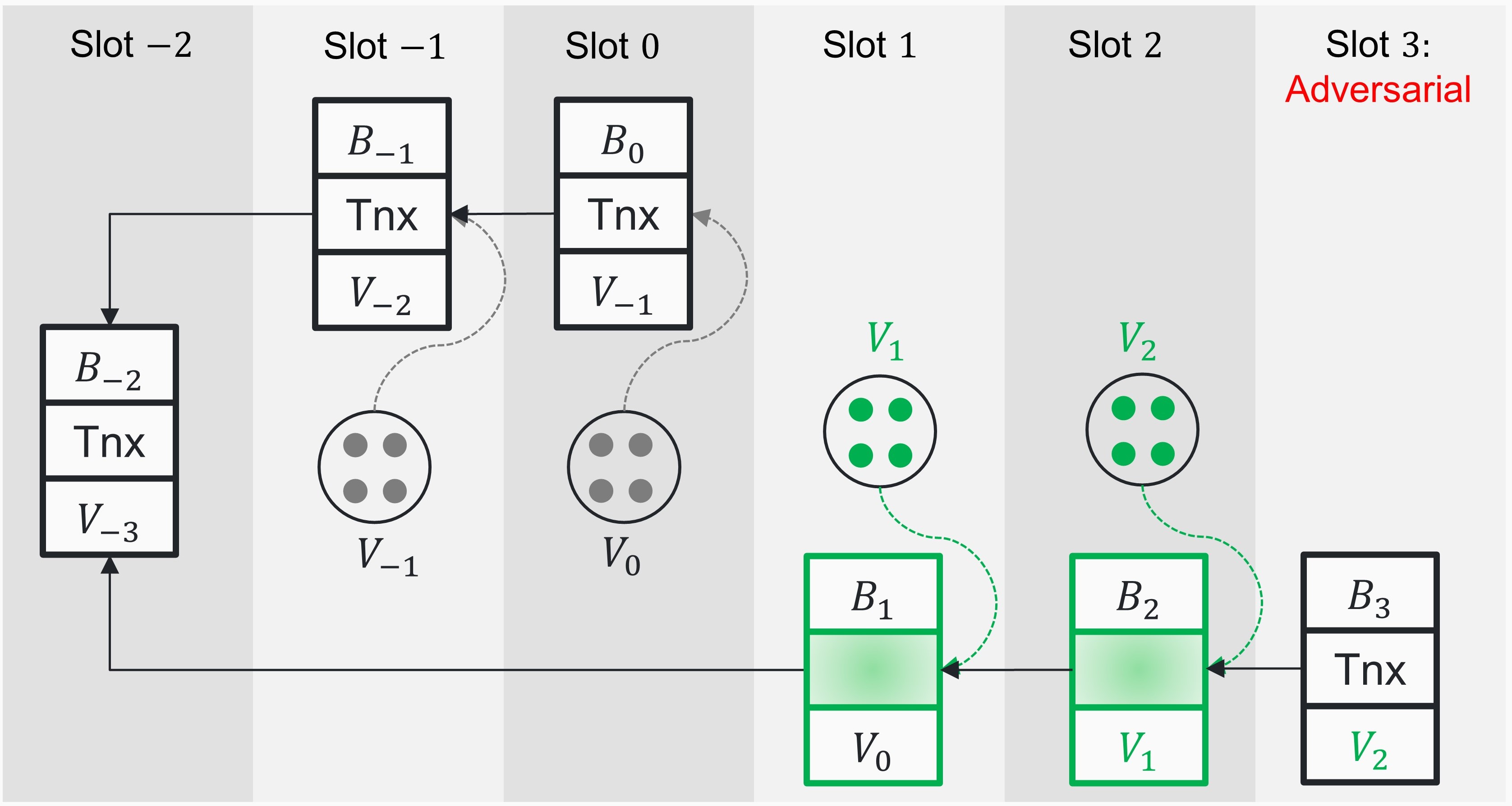}
    \caption{The extended attack with $p=2$. $B_i$ and $V_i$ represent the block and votes for slot $i$, respectively. The votes and blocks in green are compliant.}
    \label{fig:extended_attack_intro}
\end{figure}
%
Using its block $B_{3}$, the adversary can incentivize not only the slot $2$ attestors but also the leaders and attestors of previous slots to follow its instructions; since the slot $2$ votes affect the confirmation status of the slot $2$ block, 
which in turn influences the rewards of the slot $1$ attestors.

Formally, the adversary devises a set of instructions for the leaders and attestors of slots $1$ and $2$ to propose and vote for a fork of empty blocks extending $B_{-2}$ rather than the tip $B_{0}$ of the canonical chain.
Such blocks and votes that comply with the adversary's instructions are called \emph{compliant}.
The adversary stipulates the slot $1$ leader to propose an empty block $B_{1}$ on top of $B_{-2}$, and the slot $1$ attestors to either vote for $B_{1}$ if it is empty and extends $B_{-2}$, or to vote for $B_{-1}$ otherwise.
Now, if $B_{1}$ is compliant and gathers sufficiently many votes, the adversary demands the slot $2$ leader to propose an empty block $B_{2}$, including the compliant slot $1$ votes, on top of $B_{1}$.
It then asks the slot $2$ attestors to either vote for $B_{2}$ if it is empty, includes the compliant slot $1$ votes and extends $B_{1}$, or to vote for $B_{1}$ otherwise (if $B_{1}$ is compliant).
Finally, the adversary commits to including only the compliant slot $2$ votes in its block $B_{3}$.
Note that no rational attestor is obliged to comply with the instructions \emph{a priori}, but as we show next, they are all \emph{incentivized} to do so.

Due to the adversary's commitment, to receive head vote rewards, the slot $2$ attestors vote for $B_{2}$ only if $B_{2}$ is compliant, \ie, is empty and contains only the compliant slot $1$ votes.
Therefore, to ensure that $B_{2}$ receives enough votes and thus cannot be reorged by $B_{3}$, the slot $2$ leader proposes a compliant block $B_{2}$.
Since a compliant $B_{2}$ includes only the compliant slot $1$ votes, to receive head vote rewards, the slot $1$ attestors vote for $B_{1}$ only if $B_{1}$ is compliant, \ie, is empty and extends $B_{-2}$.
Then, to ensure that $B_{1}$ receives enough votes and thus cannot be reorged by $B_{2}$ or $B_{3}$, the slot $1$ leader proposes a compliant block $B_{1}$.
Consequently, all slot $1$ and $2$ leaders and attestors vote for a chain of empty blocks ($B_{1}$ and $B_{2}$) conflicting with $B_{-1}$ and $B_{0}$, enabling the adversary to reorg $B_{-1}$ and $B_{0}$.
%
As shown in Section~\ref{sec:solo-validators-extended}, the extended attack results in a subgame perfect Nash equilibrium with arbitrarily long reorgs.

\subsubsection{Staking Pools (Section~\ref{sec:staking-pools})}
The simple and extended attacks assume the existence of only solo validators. 
However, in practice, staking pools control a large subset of validators, which can collude to mitigate the attacks, indicating a need to revisit the analysis. 
In Section~\ref{sec:staking-pools}, we illustrate Nash equilibria for both the simple and extended attacks, where the adversary succeeds even in the presence of staking pools of bounded size.
However, the extended attack cannot be considered a practical threat if sufficiently many attestors coordinate 
their strategies. 
To this end, we present a new, selfish mining-inspired attack in Section~\ref{sec:staking-pools}, where all staking pools, regardless of their size,
are incentivized to comply with the adversary under mild assumptions on the fraction of adversarial validators. 

\subsubsection{Attack Quantification (Appendix~\ref{sec:appendix_attack_quantification})} 
Our analysis of both the consensus layer reward and the execution layer rewards shows that upon a successful reorganization of a single block, a single adversarial slot leader can achieve an additional reward of $71.1$ million Gwei ($170$ EUR\footnote{As of 23 October 2024, the price of 1 ETH is approximately 2400 EUR~\cite{eth_price}.}), representing a $56\%$ increase in the block reward. To better illustrate the economic impact behind reorg attacks, consider that if the largest Ethereum staking pool\footnote{As of October 23, 2024, the largest staking pool is LIDO, holding $27.8$ of stake shares\cite{lido}.} manages to reorg one block for each of its blocks, it could collect, on average, an extra reward of almost 234,000 EUR per day.

\subsection{A New Reward Mechanism: \NewMechanism}
\label{sec:introduction-solution}

\begin{figure}[b]
    \centering
    \includegraphics[height=2in]{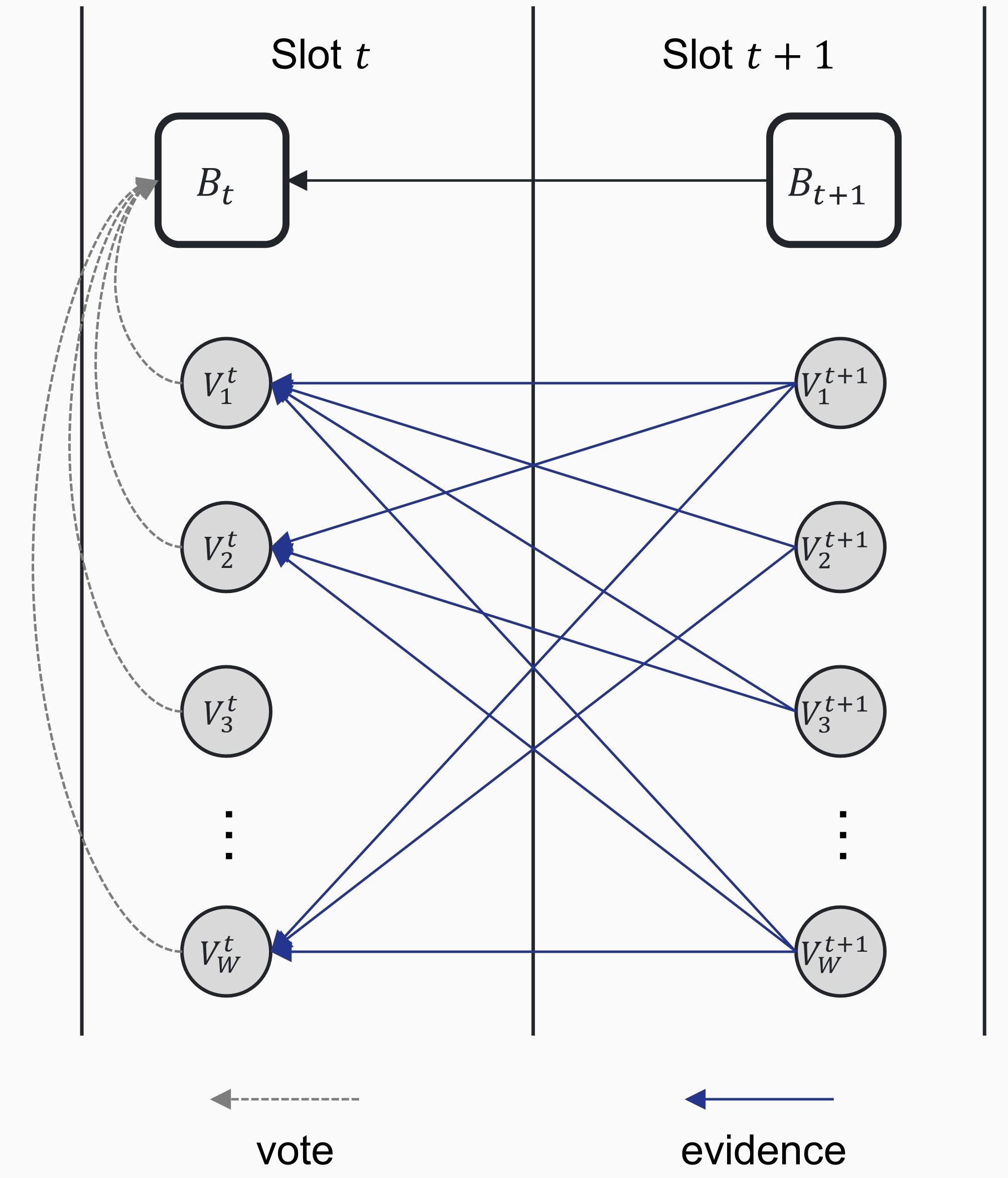}
    \caption{The \newmechanism reward mechanism. $V^t_i$ represents the $i^\text{th}$ vote (by the $i^\text{th}$ attestor) in slot $t$. Vote $V^t_1$ has received signatures from more than half of the slot $t+1$ attestors, \ie, it is timely. Vote $V^t_3$ has not received any signature from slot $t+1$ attestors and is not timely.}
    \label{fig:new-mechanism}
\end{figure}

The attacks above are due to the \emph{monopoly} of a slot leader over which votes get included and subsequently rewarded by the canonical chain.
Hence, the protocol can be reinforced against the leaders' abuse of power by \emph{decentralizing} the reward mechanism~\cite{decentralization}.
Towards this goal, we propose a new vote reward mechanism for LMD GHOST (Fig.~\ref{fig:new-mechanism}, Section~\ref{sec:solution}) called \emph{\newmechanism}, reflecting the directed acyclic graph (DAG) topology of vote connections.
\Newmechanism replace the leader of slot $t+1$ with the slot $t+1$'s committee of attestors when determining whether slot $t$ votes were timely.
It requires each slot $t+1$ attestor to sign the slot $t$ votes in its view.
Then, an attestor is rewarded for its slot $t$ vote if there are sufficiently many slot $t+1$ attestors that have signed the 
vote, \emph{whenever} these signatures are included on-chain\footnote{The inclusion reward mechanism used by Ethereum can be extended to incentivize the inclusion of the slot $t+1$ attestors' signatures on the slot $t$ votes.}.
The adversary must now control a large fraction of the attestor committees to affect the head vote rewards.
In Section~\ref{sec:dag-votes-analysis}, we prove the existence of a subgame perfect Nash equilibrium, in which LMD GHOST equipped with DAG votes remains secure under some mild assumptions.

\Newmechanism not only mitigate the attacks above,
but also 
protect the honest attestors' rewards.
Currently, if a slot leader does not propose any block, the attestors of the previous slot lose their head vote rewards,
which is particularly discouraging for solo validators selected as attestors once per epoch.
The new mechanism ensures rewards even when a minority of the attestors are offline.

\subsubsection{DAG Votes Made Practical}
Considering the number of Ethereum validators ($>1$ million), the DAG votes mechanism imposes a high computational and communication overhead on the Ethereum network. 
To address this issue, in Section~\ref{sec:practical}, we introduce a practical version of the mechanism that is backward-compatible with the Ethereum infrastructure and can be implemented at the cost of a slight increase in block space, computation, and communication resources. 
In the practical version, only a subset of the slot $t+1$ attestors, referred to as aggregators, become responsible for signing the slot $t$ attestations.
Under synchrony, these signatures can be quickly aggregated into a single one, resulting in only a slight increase in the required block space.

In Section~\ref{sec:implementation_DAG}, we present an implementation of a simplified version of our DAG votes mechanism, deployed on a peer-to-peer local blockchain running across multiple Amazon Web Services instances.
Our experiments show that the block space increases by merely 163 bytes due to the DAG votes mechanism, with minimal increase in the average creation and verification time for the aggregated attestations. 
Additionally, in Appendix~\ref{sec:appendix-optimistic}, we analyze the overhead of our scheme if implemented on the Ethereum mainnet (as opposed to a few instances as in our experiments).
Our analysis shows that under the new mechanism, the block space increases by merely 14 kilobytes, which represents a $14\%$ increase in block size. 
We also find that on a consumer laptop, the new mechanism imposes a computational overhead of an additional 0.69 seconds, 1.36 seconds, and 0.4 seconds for an Ethereum aggregator, block proposer, and validator, respectively, compared to the current version of Ethereum.
Regarding the communication overhead of the new mechanism, each aggregator needs to send an extra pair of messages: an aggregated attestation and a signature on the aggregated attestation, resulting in only 192 extra bytes of message. 

\section{Related Work}
\label{sec:related-work}

\subsection{Attacks on Ethereum Consensus}
\label{sec:related-work-existing-attacks}
%

The LMD GHOST protocol as introduced by Buterin \emph{et al.} in~\cite{gasper} is susceptible to the balancing attacks~\cite{ebbandflow,3attacks,ethresearch-balancing-attack,ethresearch-balancing-attack2}, where the adversary causes two equally-sized sets of honest validators to vote for conflicting blocks indefinitely under synchrony by carefully timing the release of a few withheld adversarial votes.
The protocol was subsequently changed to bestow the latest block proposals with a temporary extra weight called the proposer boost, so as to push the validators towards the latest proposal~\cite{mitigations}.
However, proposer-boosting failed to prevent a resurgence of the balancing attack due to a feature of the LMD rule that requires each validator to consider only the first slot $t$ vote it receives from an equivocating validator.
The absence of this rule in turn enabled a so-called avalanche attack that showed the incompatibility of the PoS Sybil resistance mechanism with the GHOST fork-choice rule~\cite{twoattacks}.
Finally, these attacks were addressed with another patch called equivocation discounting, removing equivocating votes for the same slot but different blocks from the validators' views~\cite{mitigations}.
A confirmation rule and the associated security proof for LMD GHOST are provided in~\cite{ghost-confirm}.

In our work, we investigate the effect of the reward mechanism on the protocol's security \emph{in the presence of rational validators}.
Our attacks do not require the adversary to 
control more than a single block proposer or send equivocating blocks/votes, making the bar for the attack much lower than the earlier consensus attacks.
In fact, when the attack is successful, the adversary does not even have to violate the protocol.

\subsection{Analogous Attacks on PoW Blockchains}
The attacks introduced in this paper are similar to those in PoW blockchains, such as selfish mining~\cite{selfish_mining}, undercutting~\cite{carlsten2016instability} and bribery attacks~\cite{Bonneau16}, which aim to orphan non-adversarial blocks. 
In the selfish mining attack~\cite{selfish_mining}, the adversarial miner withholds a newly mined block and releases it later to orphan other miners' blocks. 
Extensive literature~\cite{sapirshtein2017optimal, bar2022werlman, sarenche2023deep, nayak2016stubborn, wang2021blockchain, mirkin2020bdos} exists analyzing the impact of selfish mining on the security of longest-chain protocols (Nakamoto consensus), such as Bitcoin. 
In the undercutting attack~\cite{carlsten2016instability, gong2022towards}, the adversarial miner attempts to orphan the tip of the chain by incentivizing others to mine on its block during a fork race, using transaction fees as a form of bribery.
In bribery attacks~\cite{Bonneau16}, the adversary bribes miners to temporarily control a majority of the hash power and attack the chain.

Note that there is an important difference between the incentives for orphaning (or reorging) blocks in PoW blockchains and those in Ethereum. 
In PoW, the primary reason orphaning blocks is profitable is that it reduces mining difficulty, thereby accelerating the process of mining blocks and collecting rewards~\cite{grunspan2018profitability, sarenche2024time}. 
In contrast, in Ethereum, the main incentive for reorging blocks is to steal MEV and inclusion rewards from other blocks.

\subsection{Timing Games}
\label{sec:related-work-timing-games}
Our model of validators shares some similarities with the honest-but-rational validators of \cite{timing-games,timing-games-2}.
These validators do not strictly violate the protocol rules but delay their blocks to the maximum extent possible to capture more MEV, while ensuring their timely inclusion in the canonical chain.
The advantage of well-connected validators and the resulting higher profit in MEV has serious potential to hurt fairness and decentralization by incentivizing collusions for more connectivity.
Like these works, our model also assumes rational, payoff-maximizing validators, and just like delaying blocks, protocol deviations by our rational validators are not \emph{accountably} identifiable or \emph{slashable} (\ie, punishable financially).
However, unlike honest-but-rational proposers, our validators can explicitly disobey the protocol rules that are not enforceable via slashing, \eg, by refusing to propose on the tip of the canonical LMD GHOST chain.

\subsection{Game Theory and Consensus Protocols}
\label{sec:incentives-consensus}
Behavior of distributed protocols with payoff-maximizing participants has been studied for secret sharing and multi-party computation, that were shown to have Nash equilibria with desirable properties under bounded collusions~\cite{game-theory-distributed-computing}.
Kiayias and Stouka in~\cite{kiayias2021coalition} have introduced the concept of coalition-safe equilibria to investigate the potential increase in participant payoff through protocol deviation in Bitcoin and Fruitchain~\cite{fruitchain}.
In our modelling of validators, we follow the BAR (Byzantine, altruistic, rational) model of protocol participants introduced by~\cite{bar-model}, which we respectively call adversarial, honest and rational.
The same work~\cite{bar-model} provided the first safe and live asynchronous state machine replication protocol in the presence of both adversarial (Byzantine) and rational participants.
For a taxonomy of the incentive mechanisms of major blockchain protocols, we refer the reader to~\cite{incentive-survey-1}.

\section{Preliminaries}
\label{sec:preliminaries}

\subsection{Model}
\label{sec:model}

Given any event $A$, let $NA$ denote the \emph{complement} of the event $A$ such that $A \cap NA = \emptyset$ and $\Pr[A \cup NA] = 1$.
We define notation $[p]$, $p \in \mathbb{N}$, to denote the set $\{1, \ldots, p\}$.

\subsubsection{ State machine replication and blockchains}
In state machine replication consensus protocols, a set of nodes called the \emph{validators} agree on a growing sequence of transactions (or blocks) called the \emph{ledger} (or chain) and denoted by $\ch$.
Each validator can thus obtain the same end state upon executing these transactions in the order determined by $\ch$.
%
%
Each chain starts with a publicly known genesis block $B_0$.
Each block except $B_0$ refers to its \emph{parent} through its hash.
A block $B$ \emph{extends} $B'$, denoted by $B' \preceq B$, if $B = B'$ or $B'$ lies on the path from $B$ to $B_0$.
All valid blocks extend $B_0$.
For each $B$, the path from $B$ to $B_0$ determines a unique chain $\ch$.
The chain held by a validator $\val$ at time $t$ is denoted by $\ch^{\val}_t$.

\subsubsection{ Adversary}
The adversary $\Adv$ is an efficient algorithm that controls a subset of the validators called \emph{adversarial}.
These validators can violate the consensus rules in an arbitrary fashion coordinated by $\Adv$ (Byzantine faults).
The remaining validators are either honest, \ie, follow the prescribed protocol, or \emph{rational}, \ie, take the action providing the largest expected payoff in any specified game. 
In later sections, we will first consider a model with adversarial and \emph{solo validators}, which are rational and do not collude.
Afterwards, we will analyze the interaction between the adversarial validators and 
colluding sets of rational validators.

\subsubsection{ Network}
We consider a synchronous network, where the adversary can control the delivery time of the messages sent by the rational and honest validators up to a known delay upper bound $\Delta$.
We hereafter normalize $\Delta$ to be $1$ and adopt a lock-step communication model, where messages can only be sent at discrete intervals, and a message sent at some time $t$ (\ie, wall clock time $\Delta t$) is delivered by time $t+1$, (\ie, $\Delta t+ \Delta$).
Since our attacks do not require the adversary to delay the honest messages in arbitrary ways, we assume that the views of all validators and the adversary are the same at all `lock-step' times.
A validator is said to \emph{broadcast} a message if its recipients include all validators.

\subsubsection{ Security}
%
An SMR protocol is secure with latency $\Tconfirm$ if:
%
    
    \smallskip
    \noindent
    \textbf{Safety:} For any times $t,t'$ and validators $\val,\val'$, either $\ch_{t}^{\val} \preceq \ch_{t'}^{\val'}$ or vice versa. 
    For any $t' \geq t$, $\ch^{\val}_{t} \preceq \ch^{\val}_{t'}$.
    
    \smallskip
    \noindent\textbf{Liveness:} If a transaction $\tx$ is input to an honest validator at some time $t$, then there exists a finite time $t^* > t$ such that $\tx \in \ch_{t'}^{\val}$ for any $t' \geq t$ and any $\val$.
%

\subsection{The LMD GHOST Protocol}
\label{sec:lmd-ghost-background}

The LMD GHOST 
protocol proceeds in \emph{epochs} with $32$ \emph{slots}, each with duration $3\Delta = 3$. 
Before an epoch $e$, a randomness beacon called RANDAO divides the validator set of epoch $e$ into $32$ disjoint \emph{committees} of size $W$ assigned to $32$ unique slots within the epoch~\cite{shuffling}.
It also selects a validator from each committee as the unique \emph{leader} for its slot.
A slot with an adversarial (rational, honest) leader is called an \emph{adversarial} (rational, honest) slot.
The adversary knows which slots are adversarial before the epoch starts\footnote{In RANDAO, slot leaders know their slots before the epoch.}. 

Validators use the LMD GHOST fork-choice rule to identify a \emph{canonical} (beacon) chain of blocks (later, Casper FFG finalizes the blocks on this canonical chain and thus outputs the finalized chain).
As we assume that all validators have the same view of the protocol, we hereafter omit the mention of views.
At the beginning of a slot $t$, the leader $L$ signs and proposes a new block $B = \langle e, t, \txs \rangle_L$ with transactions $\txs$, that extends the canonical chain tip.
After $\Delta$ time, each slot $t$ attestor $\val$ signs and broadcasts a vote $\langle e, t, B \rangle_\val$ for the block at the tip of the canonical chain.
Within $\Delta$ time, these votes are received and aggregated by a special subset of attestors, 
who in turn broadcast the aggregated votes to be included by the next slot's block~\cite{aggregators}. 
A slot $t$ vote is \emph{valid} only if it is for a block with slot $t' \le t$, and the vote was sent by a slot $t$ attestor.

The votes sent within LMD GHOST are in fact called head votes~\cite{rewards} to distinguish them from the Casper FFG votes. 
Since we exclusively focus on LMD GHOST in this work, we will refer to head votes as simply votes unless specified otherwise.

When invoked at slot $t$, the fork-choice rule obtains the canonical chain iteratively starting at $B_0$.
At each block $B$, it inspects the subtrees rooted at $B$'s children.
It then selects the child block, whose tree has the largest weight, ties broken by the adversary.
The weight is equal to the number of unique votes by the validators for the blocks in the tree, plus a proposal boost $\Wp$ (currently $\Wp/W = 40\%$) if the tree contains a block proposed for the current slot $t$\footnote{The original GHOST rule counted blocks rather than votes~\cite{ghost}}.
If there are multiple valid votes by the same validator within a tree, the rule ignores all but one of the votes for the latest slot, thus the name `latest message driven (LMD)'.
In future sections, we will consider slots from the same epoch, with distinct attestor committees, unless stated otherwise.

\subsubsection{ Reorgs}
Suppose a block $B$ proposed at some time $t'$ is not in the canonical chain for the first time at some $t \geq t'$.
Then, $B$ is said to be \emph{reorged} at time $t$.
Reorg of a few honestly proposed blocks need not imply a liveness violation; however, reorgs of arbitrarily long block sequences would be a violation of LMD GHOST's safety and liveness
(\cf~Appendix~\ref{sec:appendix-reorgs}).

\subsection{The Ethereum Reward Mechanism}
\label{sec:reward_mechanism_background}

We review two main sources of the validator's consensus layer reward: the inclusion (proposing) and the attestation rewards. 
The inclusion reward is the reward that a slot leader receives for including \emph{attestations} in its proposed block. 
The attestation reward is the reward that a validator receives for sending attestations.
Each attestation consists of $3$ different votes: the source checkpoint, the target checkpoint, and the head vote (\cf Section~\ref{sec:lmd-ghost-background}). 
An attestation is eligible to receive the attestation reward if it is included in the canonical chain and is \emph{correct} and \emph{timely}~\cite{rewards}. 
Here, we only elaborate on the correctness and timeliness of the head votes (hereafter called just the votes), moving the discussion of these attributes for other votes to Appendix~\ref{sec:reward_mechanism_background_appendix}.

A slot $s$ vote for a block is correct if that block is the last block within the canonical chain from slots $s' \in [0,s]$.
If the tip of the canonical chain in the view of an attestor is extended by a block $B_t$ at slot $t$, the attestor should vote for block $B_t$. 
However, if the slot block is missing, the attestor can vote for the previous block in the canonical chain, and the vote is still correct. 
Since Ethereum's Altair upgrade~\cite{altair},
a slot $t$ vote must be included in the \emph{subsequent} block of slot $t+1$ to be considered timely.
A complete overview of the Ethereum reward mechanism is presented in Appendix~\ref{sec:reward_mechanism_background_appendix}.

\section{The Simple Attack in the Presence of Solo Validators}
\label{sec:solo-validators-simple}
The attacks 
exploit Ethereum's reward mechanism to incentivize validators to act in favor of the adversary. 
%
%
%
%
%
Each attack can be viewed as a `game' induced by the adversary that commits to a fixed set of actions and communicates this to the validators.
The players are the rational validators that best respond to the actions committed by the adversary.
Depending on the game, the players can include the slot leaders, attestors or both.
The possible actions available to the players are as follows:
\begin{itemize}[leftmargin=*]
    \item A slot $t$ leader can build a \emph{single} new block for slot $t$, denoted by $B_{t}$, extending any block in its view at any time. 
    \item A slot $t$ attestor can send a \emph{single} slot $t$ vote for any block in its view at any time.  
\end{itemize}
We assume that a rational validator never proposes two different blocks for the same slot or sends two or more votes with the same slot number for distinct blocks.
This is because these violations are provably detectable and cause the slashing (burning) of the validator's stake.
Although the honest actions for the slot $t$ leader and attestors are to respectively send their block and votes at times $3t$ and $3t+1$, the rational validators can delay their messages or withhold them if that gives them a higher payoff.

The players' payoffs come from the inclusion and attestation rewards (\cf, Section~\ref{sec:reward_mechanism_background}). 
We denote by $r$ the reward an attestor receives for submitting a single correct and timely vote and by $R$ the reward a slot leader receives for including a single correct and timely vote in its block.

\subsection{The Simple Game}
\label{sec:simple-game}
The simple game with solo validators (Fig.~\ref{fig:simple_game_unsuccessful}) starts at the beginning of some slot $t$ at time $3t$ such that the leader of slot $t+1$ is adversarial.
The players are the slot $t$ attestors that respond to the following strategy by the adversary:
\begin{definition}[Adversary's Strategy for the Simple Game]
\label{def:adversary-action-simple}
    Let $B_{t-1}$ denote the block at the tip of the canonical chain at the end of slot $t - 1$.
    At the start of slot $t$, the adversary $\mathcal{A}$ 
    %
    \begin{enumerate}[leftmargin=*]
        \item Generates a proof asserting its leadership in slot $t+1$.
        \item Sets the instructions for the validators (called the game rule) 
        as follows: The attestors of slot $t$ should vote for the block $B_{t-1}$
        by the end of slot $t$.
        \item Commits to the following action: Votes of the attestors that do not vote for $B_{t-1}$ will not be included in the adversary's block $B_A$ in slot $t+1$.
        \item Broadcasts the leadership proof, the game rule and its commitment
        to all slot $t$ attestors.
    \end{enumerate}
    %
    At the start of slot $t+1$, adversary $\mathcal{A}$
    \begin{enumerate}[leftmargin=*]
        \item Proposes a block $B_A$ that includes only the slot $t$ votes for the block $B_{t-1}$.
        \item If $\Wp$ or more slot $t$ attestors have voted for $B_{t}$ (block of slot $t$) by time $3t+3$, it proposes its block $B_A$ on top of block $B_{t}$\footnote{In this case, the compliant votes included in $B_A$ only receive the source and target checkpoint rewards and do not receive the head vote reward. This is because the votes are for $B_{t-1}$ and do not satisfy the correctness property, as they are not for $B_A$'s parent $B_t$.}.
        In this scenario (Fig.~\ref{fig:simple_game_unsuccessful}), the attack \textbf{fails}.
        \item If less than $\Wp$ attestors have voted for $B_{t}$ by time $3t+3$, it proposes its block $B_A$ on top of $B_{t-1}$.
        In this scenario (Fig.~\ref{fig:simple_game_successful}), the attack \textbf{succeeds}.
    \end{enumerate}
    %
\end{definition}

The simple game ends, and the payoffs are realized at the end of slot $t$, at time $3t+3$.
Although the rewards for the votes included in $B_A$ are actually realized upon $B_A$'s finalization, we assume that the behavior of the adversarial and rational validators after the game ends does not affect $B_A$'s finalization, which happens within a short period ($\sim 12$ minutes for Casper FFG).

\subsection{The Simple Game Analysis}
\label{sec:simple-game-analysis}
%
The slot $t$ attestors who follow (do not follow) the game rule above are called compliant (non-compliant).
\begin{theorem}
\label{thm:simple-game-theorem}
There exists a Nash equilibrium of the simple game in the presence of solo validators, where the attack succeeds.
\end{theorem}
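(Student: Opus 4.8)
The plan is to exhibit an explicit strategy profile for the slot $t$ attestors, verify that it constitutes a Nash equilibrium, and check that in this profile the attack succeeds. The natural candidate is: every slot $t$ attestor is compliant, i.e., votes for $B_{t-1}$ by time $3t+1$ (broadcasting so all validators receive it within $\Delta$). Under this profile, zero attestors vote for $B_t$, so in particular fewer than $\Wp$ do, and by item~3 of the adversary's slot $t+1$ behavior the adversary proposes $B_A$ on top of $B_{t-1}$ and the attack succeeds. It remains to argue no single attestor can profitably deviate.

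The key step is the payoff comparison for a unilateral deviation by one attestor $\val$. Fix all other attestors as compliant. If $\val$ is compliant, then (since the attack succeeds and $B_A$ extends $B_{t-1}$) its vote for $B_{t-1}$ is correct — $B_{t-1}$ is the last block on the canonical chain from slots $[0,t]$ — and it is timely, being included in the slot $t+1$ block $B_A$; hence $\val$ collects the full per-vote attestation reward $r$, including the head-vote component. If instead $\val$ deviates, it either (a) votes for $B_t$, or (b) votes for $B_{t-1}$ but delays/withholds, or (c) votes for some other block. In case (a): a single extra vote for $B_t$ still leaves the count below $\Wp$ (assuming $\Wp \ge 2$, which holds since $\Wp/W = 40\%$ and committees are large), so the attack still succeeds, $B_A$ still extends $B_{t-1}$, and $B_t$ is reorged; the vote for $B_t$ is then neither on the canonical chain nor included in $B_A$ (the adversary commits to excluding non-compliant votes), so $\val$ gets $0$ from its head vote — strictly worse than $r$. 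Cases (b) and (c) likewise yield a vote that is either non-compliant (hence excluded from $B_A$ by the commitment) or not timely, so again $\val$ forfeits the head-vote reward. In every case the deviation payoff is at most the compliant payoff, so no profitable unilateral deviation exists.

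I would structure the write-up as: (i) state the equilibrium profile; (ii) observe the attack succeeds and $B_A$ extends $B_{t-1}$; (iii) compute the compliant attestor's payoff; (iv) case-split the deviations and show each is weakly dominated; (v) conclude Nash equilibrium with a successful attack. One should be careful to note that this is only a \emph{weakly} dominant strategy — a deviating attestor loses nothing on the source/target checkpoint votes (those still get included in $B_A$ and remain correct regardless), only the head-vote reward $r$ is at stake — which is consistent with the paper's later remark that the simple attack admits other, coalitional equilibria; the theorem only claims existence of one equilibrium where the attack succeeds.

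The main obstacle is bookkeeping around the timeliness and correctness definitions in the boundary case where the attack could flip: I must confirm that a single deviating vote cannot push the $B_t$-count to $\Wp$ (ruling out the branch where $B_A$ extends $B_t$ and compliant votes lose their head-vote reward), and that in the successful-attack branch the compliant vote for $B_{t-1}$ genuinely satisfies both correctness (it is the last $[0,t]$ block on the canonical chain since $B_t$ is reorged) and timeliness (included in the slot $t+1$ block $B_A$). Both follow from $\Wp \ge 2$ and the definitions in Section~\ref{sec:reward_mechanism_background}, so the argument goes through cleanly.
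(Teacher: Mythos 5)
Your proposal is correct and follows essentially the same route as the paper: the paper's proof tabulates the per-attestor payoffs (compliant gets $r$ when the attack succeeds, $0$ otherwise; non-compliant always gets $0$) and concludes that compliance weakly dominates, so the all-compliant profile is a Nash equilibrium in which the attack succeeds. Your only addition is the explicit check that a single unilateral deviation cannot push the $B_t$-vote count to $\Wp$, which the paper's weak-dominance argument handles implicitly; this is a fine (and slightly more careful) way to verify the same equilibrium.
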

\begin{proof}[Proof of Theorem~\ref{thm:simple-game-theorem}]
The pay-off matrix of a solo slot $t$ attestor $\val$ is presented by Table~\ref{table:pay-off_matrix-simple_game}.
If over $W_p$ slot $t$ attestors vote for $B_t$, then $B_A$ extends $B_t$ and does not include any of the votes by the non-compliant attestors ($\texttt{Fail}$-$NC_\val$ corner).
In this case, none of the votes by the compliant attestors included in $B_A$ can be correct either, as they would be for the parent of $B_t$ rather than $B_t$ itself ($\texttt{Fail}$-$C_\val$ corner). 
Thus, the payoff is $0$ for both actions $C_\val$ and $NC_\val$ when attack fails.
In contrast, if less than $W_p$ slot $t$ attestors vote for $B_t$, $B_A$ extends $B_{t-1}$ and stays in the canonical chain after slot $t$.
Therefore, the compliant attestors, whose votes within $B_A$ would be correct and timely, receive a payoff of $r$ ($\texttt{Succeed}$-$C_\val$ corner), whereas no votes by the non-compliant attestors are included in $B_A$, making their payoff $0$ ($\texttt{Succeed}$-$NC_\val$ corner).
Hence, voting for $B_{t-1}$ weakly dominates voting for $B_{t}$ or other blocks, and the case where solo slot $t$ attestors follow the game rule is a Nash equilibrium.
\begin{table}
    \centering
    \caption{Pay-off matrix of a solo slot $t$ attestor in the simple game. Let $C_\val$ denote the event that the validator $\val$ votes for $B_{t-1}$ (\ie, $\val$ is compliant). 
   Let $\texttt{Succeed}$ denote the event that fewer than $\Wp$ slot $t$ attestors vote for $B_t$ (the attack succeeds), and $\texttt{Fail}$ denote the event that greater than or equal to $\Wp$ slot $t$ attestors vote for $B_t$ (the attack fails).}
    \begin{tabular}{|c|c|c|}
        \hline
          & $C_\val$ & $NC_\val$ \\
         \hline
         $\texttt{Succeed}$ & $r$ & $0$ \\
         \hline
         $\texttt{Fail}$ & $0$ & $0$\\
         \hline
    \end{tabular}
    \label{table:pay-off_matrix-simple_game}
\end{table}
%
%
\end{proof}

\begin{note}
\label{note:4.1}
There are many Nash equilibria, where the adversary $\mathcal{A}$ is not successful. 
Suppose over $\Wp+1$ slot $t$ attestors vote for $B_t$ (the attack fails). 
Then, $\mathcal{A}$ cannot reorg $B_t$ with $B_A$, and no vote for a block other than $B_t$ can be both correct and timely.
As $\mathcal{A}$ also commits to excluding votes for $B_t$, no solo slot $t$ attestor receives a positive payoff for any action.
Therefore, 
voting for $B_t$ weakly dominates any other action for the slot $t$ attestors, making this a Nash equilibrium.
\end{note}


\begin{note}
\label{note:4.3}
Even under an honest minority assumption (\ie, $<\Wp$ honest attestors per slot), LMD GHOST is still susceptible to the simple attack, \ie, there is a Nash equilibrium where the adversary succeeds; since the number of honest validators would not be sufficient to overcome the proposer boost exploited by the adversary's block. 
\end{note}

In Appendix~\ref{sec:appendix_attack_quantification}, we present a quantification of the reorganization attack to estimate the average reward an adversary gains from a successful simple game attack, as well as the cost incurred in the event of an unsuccessful attempt.



\subsection{Credibility}
\label{sec:credibility}
So far, we have considered an adversary that commits to carrying out a certain action.
Although the adversary was not modeled as a rational player, a major motivation for such attacks is financial profit: by reorging $B_{t}$, it can include a more valuable set of transactions in its block $B_A$, resulting in a higher profit via inclusion rewards and MEV capture.
However, since a slot leader profits from including \emph{all} correct votes due to the inclusion reward, if the adversary were financially motivated, its threat about excluding the non-compliant votes would not be considered \emph{credible} by the rational players, unless the adversary can irrevocably commit to following upon its threat.
One way the adversary can enforce credibility is by depositing collateral, larger than the maximum possible inclusion reward, in a public contract that burns the funds if the committed strategy is violated. 
Although this requires the adversary to deposit new collateral, this initial cost can be bypassed in Ethereum using restaking protocols.

The restaking protocols, such as Eigenlayer~\cite{eigenlayer}, are mechanisms that enable Ethereum stakers to leverage their staked Ether (ETH) to secure multiple services and applications simultaneously. 
They allow users to \emph{restake} their existing ETH stake in various new protocols or services without having to lock up additional funds.
Although restaking protocols offer enormous opportunities to Ethereum validators, they can also eliminate the initial cost of commitment attacks, a major concern for cryptoeconomic security.
Indeed, using restaking protocols, an adversarial validator can reuse its ETH stake as collateral in the public contract instead of putting up new deposit.
Moreover, the adversary can set the withdrawal address for its inclusion rewards to be a contract that complies with the restaking protocol, and then specify the inclusion of non-compliant votes as an additional slashing condition for burning this reward.

\section{The Strong Simple Attack in the Presence of Solo Validators}
\label{sec:solo-validators-strong-simple}

The strong simple attack targets a slot $t$ at time $3t$ in epoch $e$, such that the adversary is the leader of slot $t+1$ and is presumed to be the slot leader in at least one slot belonging to the upcoming epochs $\{e + 2, e + 3, \ldots\}$\footnote{If the adversary's stake share is greater than $\epsilon$, there should exist a slot after the start of epoch $e + 2$ whose leader is the adversary.}. 
It is similar to the simple game, with the difference that the adversary is the leader of an additional slot $t' + 1$, where $t' \in \{e + 2, e + 3, \ldots\}$. 
Slot $t'$ should belong to an epoch $e' \ge e + 2$ to ensure that the slot $t'$ attestors are unknown at the time of the attack execution at slot $t$ in epoch $e$. We denote the adversarial block of slots $t+1$ by $B_A$ and refer to the adversarial block of slot $t' + 1$, denoted by $B'_A$ as the supporting block.

\begin{definition}[Adversary's Strategy for the Strong Simple Game]
\label{def:adversary-action}
    Let $t'+1$ denote the first adversarial slot after the start of epoch $e+2$.
    At the start of slot $t$ in epoch $e$, the adversary $\mathcal{A}$
    \begin{enumerate}[leftmargin=*]
        \item Generates a proof asserting ownership over the private keys of adversarial validators, including the adversarial leader of slot $t + 1$.
        \item Sets the instructions for the validators (called the game rule) 
        as follows: The attestors of slot $t$ should vote for the block $B_{t-1}$
        by the end of slot $t$.
        \item Commits to the following action: The vote of a non-compliant attestor (one that does not vote for $B_{t - 1}$) will not be included in the adversary's block $B_A$ in slot $t + 1$. Furthermore, if a non-compliant slot $t$ attestor is among the attestors of slot $t'$ in epoch $\ge e+2$, its vote for slot $t'$ will also not be included in the adversary's block $B'_A$ in slot $t' + 1$.
        \item Broadcasts the ownership proof, the game rule, and its commitment
        to all slot $t$ attestors.
    \end{enumerate}
    %
    
    At the start of slot $t+1$, adversary $\mathcal{A}$ follows the same action as in simple game (Definition~\ref{def:adversary-action-simple}). 
    At the start of slot $t'+1$, adversary $\mathcal{A}$ 
    %
    \begin{itemize}[leftmargin=*]
        \item Includes only the votes of those slot $t'$ attestors who were either not among the slot $t$ attestors or were among the compliant slot $t$ attestors.
    \end{itemize}
    %
\end{definition}

The strong simple game ends at the end of slot $t$, at time $3t+3$. However, the payoffs are realized at the end of the supporting slot $t'$, at time $3t'+3$.

\subsection{The Strong Simple Game Analysis}
\label{sec:strong-simple-game-analysis}
%

\begin{theorem}
\label{thm:strong-simple-game-theorem}
There exists a strong Nash equilibrium of the strong simple game in the presence of solo validators, where the adversary succeeds.
\end{theorem}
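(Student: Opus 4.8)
The plan is to show that complying with the adversary's game rule is a strictly dominant strategy for each solo slot~$t$ attestor once we account for the supporting block $B'_A$, which upgrades the weak dominance of the simple game (Theorem~\ref{thm:simple-game-theorem}) into strict dominance, and hence yields a \emph{strong} Nash equilibrium (no coalition of attestors can profitably deviate). First I would set up the payoff of an arbitrary solo slot~$t$ attestor $\val$ as a function of (i) its own action (compliant $C_\val$ vs.\ non-compliant $NC_\val$) and (ii) the event \texttt{Succeed}/\texttt{Fail} determined by whether fewer than $\Wp$ slot~$t$ attestors vote for $B_t$. The key new ingredient is a third component of $\val$'s payoff coming from slot~$t'$: conditioned on $\val$ being selected as a slot~$t'$ attestor (an event with some probability $q = W/|\text{validator set}|>0$ from $\val$'s perspective, since the epoch-$e'$ committees are not yet fixed at slot~$t$ in epoch~$e$), $\val$ earns the slot~$t'$ head-vote reward $r$ under $C_\val$ but loses it under $NC_\val$, because the adversary commits to excluding non-compliant slot~$t$ attestors' slot~$t'$ votes from $B'_A$, and $B'_A$ is the unique slot $t'+1$ block that can make a slot~$t'$ vote timely.

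Next I would fill in the $2\times 2$ payoff table exactly as in Table~\ref{table:pay-off_matrix-simple_game} but with an additive term on each entry: in the \texttt{Fail} row both entries remain the simple-game values ($0$), but are shifted by $+q r$ for $C_\val$ and $+0$ for $NC_\val$; likewise in the \texttt{Succeed} row, $C_\val$ gives $r + q r$ and $NC_\val$ gives $0$. Thus, regardless of the realized event and regardless of the other attestors' actions, $C_\val$ yields payoff at least $q r > 0$ more than $NC_\val$ — so $C_\val$ \emph{strictly} dominates. I would then invoke that a profile of strictly dominant actions is a strong Nash equilibrium: any deviating coalition $S$ contains some member $\val$ whose payoff strictly decreases (indeed every member's does) when $\val$ switches to a non-compliant action, so no coalitional deviation is weakly improving for all its members. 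Since the strictly dominant profile has every slot~$t$ attestor compliant, fewer than $\Wp$ (in fact zero) attestors vote for $B_t$, and by Definition~\ref{def:adversary-action} the adversary proposes $B_A$ on top of $B_{t-1}$, i.e.\ the attack succeeds in equilibrium.

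A couple of details need care. First, the independence/measurability point: the event that $\val$ is a slot~$t'$ attestor is determined by RANDAO after slot~$t$, so from the vantage point of the slot~$t$ game it is genuinely random and its probability $q$ is strictly positive whenever the adversary is guaranteed a leadership slot in some epoch $\ge e+2$ (guaranteed for stake share $>\epsilon$, as the footnote notes); I would state this as the mild assumption under which the theorem holds. Second, one must check that being non-compliant at slot~$t$ cannot create a \emph{compensating} gain at slot~$t'$ — it cannot, since the only lever the attestor has at slot~$t'$ is its own slot~$t'$ vote, whose reward is gated by inclusion in $B'_A$, which the adversary withholds precisely for non-compliant attestors; all other slot~$t'$ rewards (source/target) and the behavior of other validators at slot~$t'$ are unaffected by $\val$'s slot~$t$ choice. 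The main obstacle — and the place I'd spend the most words — is arguing the \emph{strong} (coalition-proof) part rigorously: I need to rule out a coalition $S$ of slot~$t$ attestors that jointly vote for $B_t$, pushing the count past $\Wp$ so the attack fails, and ask whether \emph{every} member is then at least as well off; but by the table above each member's payoff drops from $\ge q r$ to $0$, so the deviation is strictly bad for all of them, which is exactly the condition that rules out a strong-equilibrium-breaking deviation. I would close by noting this is the qualitative improvement over the simple game, whose weak dominance left room for the failure equilibria of Note~\ref{note:4.1}.
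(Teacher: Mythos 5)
Your proposal is correct and follows essentially the same route as the paper: compute the expected two-slot payoff matrix (the paper pins your $q$ down to exactly $1/32$, since each validator attests in exactly one of the 32 slots per epoch), observe that the $B'_A$ term makes compliance strictly dominant in both the \texttt{Succeed} and \texttt{Fail} rows, and conclude a strong Nash equilibrium in which the attack succeeds. Your added elaboration of the coalition-proofness step and the explicit check that non-compliance cannot be compensated at slot $t'$ are consistent with, and slightly more detailed than, the paper's argument.
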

\begin{proof}[Proof of Theorem~\ref{thm:strong-simple-game-theorem}]
In every epoch of Ethereum consisting of $32$ slots, each validator is assigned to be the attestor of exactly one slot. Therefore, a slot $t$ attestor in epoch $e$ can be a slot $t'$ attestor in an epoch $\geq e + 2$ with probability $\frac{1}{32}$.

The expected pay-off matrix of a solo slot $t$ attestor $\val$, resulting from head vote rewards in both slot $t$ and $t'$, is presented in Table~\ref{table:pay-off_matrix-strong_simple_game}.
Vote of a slot $t$ non-compliant attestor (regardless of the attack result) is excluded from block $B_A$. Also, if the non-compliant attestor is among slot $t'$ attestors, the adversary excludes its vote from block $B_A'$. Therefore, the expected reward for a non-compliant attestor over 2 slots $t$ and $t'$ is 0 ($\texttt{Fail}$-$NC_\val$ and $\texttt{Succeed}$-$NC_\val$ corners).

Consider a compliant slot $t$ attestor $\val$. If fewer than $W_p$ slot $t$ attestors vote for $B_t$, the attack succeeds. In this case, block $B_A$ includes the slot $t$ vote of $\val$, which is considered a correct vote. Additionally, if $\val$ is among the slot $t'$ attestors, its slot $t'$ vote is also included in block $B_A'$. Therefore, the expected reward for a compliant slot $t$ attestor over 2 slots $t$ and $t'$ under a successful attack is $r + \frac{r}{32}$ ($\texttt{Succeed}$-$C_\val$ corner).

Assume that $\val$ is a compliant slot $t$ attestor. If $W_p$ or more slot $t$ attestors vote for $B_t$, the attack fails. In this case, the slot $t$ vote of $\val$ is considered a non-correct vote. However, if $\val$ is among the slot $t'$ attestors, its slot $t'$ head vote will still be included in block $B_A'$. Therefore, the expected reward for a compliant slot $t$ attestor over 2 slots $t$ and $t'$ under a failed attack is $\frac{r}{32}$ ($\texttt{Fail}$-$C_\val$ corner).
Hence, voting for $B_{t-1}$ strongly dominates voting for $B_{t}$ or other blocks, and the case where solo validators follow the game rule is a strong Nash equilibrium.
\begin{table}
    \centering
    \caption{Expected payoff matrix of a solo slot $t$ attestor in the strong simple game resulting from head vote rewards in both slot $t$ and $t'$. Let $C_\val$ denote the event $\val$ is compliant. 
     Let $\texttt{Succeed}$ denote the event of a successful attack, and $\texttt{Fail}$ that of a failed attack.}
    \begin{tabular}{|c|c|c|}
        \hline
          & $C_\val$ & $NC_\val$ \\
         \hline
         $\texttt{Succeed}$ & $r+\frac{r}{32}$ & $0$ \\
         \hline
         $\texttt{Fail}$ & $\frac{r}{32}$ & $0$\\
         \hline
    \end{tabular}
    \label{table:pay-off_matrix-strong_simple_game}
\end{table}
%
%
\end{proof}

\section{The Extended Attack in the Presence of Solo Validators}
\label{sec:solo-validators-extended}

We next describe the extended attack,
where the adversary engages with the validators of multiple consecutive slots to reorg multiple blocks.

\subsection{The Extended Game}
\label{sec:repeated-game}

The extended game starts at the beginning of some slot $t+p$ at time $3(t+p)$, where the leader of slot $t+2p+1$ is adversarial.
We normalize $t+p = 0$. 
Let $B_i, i \in \{-p,\ldots, 0\}$, denote the block from slot $i$ (the non-normalized slot $t+i$).
Suppose at the end of slot $0$, the canonical chain ends with the block sequence $B_{-p}, \ldots, B_0$, each block with $W$ votes.
The game has 
the attestors and the leaders of the slots in $[p]$\footnote{We define notation $[p]$, $p \in \mathbb{N}$, as $[p] = \{1,2, \ldots, p\}.$} 
as its players (excluding the adversarial leader of slot $p+1$).
All players are rational solo validators.
The actions available to the players are as in Section~\ref{sec:simple-game}.

The adversary \emph{succeeds} if the canonical chain at the end of slot $p$ is a sequence of empty blocks extending $B_{-p}$, \ie, if the chain consisting of the blocks $B_{-p+1}, \ldots, B_0$ has been reorged by a chain of empty blocks $B^e_{1}, \ldots, B^e_p$ from slots $1, \ldots, p$ (Fig.~\ref{fig:proposal-rule-extended-game}).
It \emph{fails} otherwise.
The game ends, and the payoffs are realized at the end of slot $p$.

\subsection{The Adversarial Strategy}
\label{sec:repeated-game-adversarial-action}

\begin{algorithm}[b]
    \captionsetup{font=small} 
    \caption{The algorithm used by the leader and the committee members of a slot $i \in [p]$ to identify \emph{the compliant tip}. Let $\T$ denote the tree of blocks that includes $B_{-p}$ and its descendants 
    at time $3i$. Let $B.W$ denote the weight of a block $B$ as determined by the votes for $B$ and its descendants. The function $\textsc{LMD-GHOST}$ takes a blocktree as input and returns the canonical chain identified by the LMD-GHOST fork-choice algorithm. The function $\textsc{Index}$ takes a chain of blocks as input and returns the slot number of the non-compliant block with the largest slot within the chain.}
    \label{alg.compliant}
    \begin{algorithmic}[1]\small
    \Function{\textsc{ProposalCompliant}}{$\T$}
        \Let{S}{\{\}}
        \For{$B \in \T$}
        \label{line:iterate}
            \Let{B.W}{B.W + (p-i+1) W + \Wp}
            \label{line:add}
            \Let{\chain_B}{\textsc{LMD-GHOST}(\T)}
            \label{line:lmd-ghost}
            \If{$B \in \chain_B$}
            \label{line:check}
                \Let{S[B]}{\textsc{Index}(\chain_B)}
                \label{line:record}
            \EndIf
            \Let{B.W}{B.W - (p-i+1) W - \Wp}~\Comment{Restore $\T$.}
        \EndFor
        \Let{B^*}{\argmin_{B}(S[B])}
    \EndFunction
    \end{algorithmic}
\end{algorithm}

Before describing the adversary's strategy, we define \emph{compliant} blocks and votes (Fig.~\ref{fig:proposal-rule-extended-game}, Def.~\ref{def:compliant-blocks}). 
To this end, we first define the concept of \emph{compliant} tip at slot~$i$ (\cf~Alg.~\ref{alg.compliant}).
Consider the leader $L_i$ of a slot $i \in [p]$ that wishes to propose a compliant block at time $3i$.
To identify the compliant tip at time $3i$, $L_i$ 
iterates over all blocks $B$ in the block tree of $B_{-p}$ and its descendants at time $3i$ (Alg.~\ref{alg.compliant}, Line~\ref{line:iterate}).
For each block $B$, it envisions the hypothetical canonical chain $\chain_B$ that would be returned by the LMD-GHOST fork-choice rule at the start of slot $p+1$, if \emph{all the attestors of slots $i, \ldots, p$ (in total $(p-i+1) W$) were to vote for $B$'s descendants (but not $B$), and $\mathcal{A}$'s block $B_A$ were to extend the tip of the chain descended from $B$} (Alg.~\ref{alg.compliant}, Lines~\ref{line:add} and~\ref{line:lmd-ghost}).
It subsequently checks if $B$ would be part of this canonical chain $\chain_B$ (Alg.~\ref{alg.compliant}, Line~\ref{line:check}).
If so, $L_i$ records the slot number of the non-compliant block with the largest slot in $B$'s prefix in a dictionary $S$ indexed by blocks (Alg.~\ref{alg.compliant}, Line~\ref{line:record}).
After repeating this procedure for all blocks in the block tree of $B_{-p}$, $L_i$ identifies the block $B^*$ such that $B^*$ is in $\chain_{B^*}$ and the slot number of the last \emph{non-compliant} block in $B^*$'s prefix is smallest among all blocks considered by Alg.~\ref{alg.compliant} (ties are broken deterministically).
We call block $B^*$ \emph{the compliant tip} at slot $i$.
The same steps are used by the slot $i$ attestors to identify the compliant tip.

\begin{definition}[Compliance]
\label{def:compliant-blocks}
\begin{figure}[t!]
    \centering
    \includegraphics[height=1.5in]{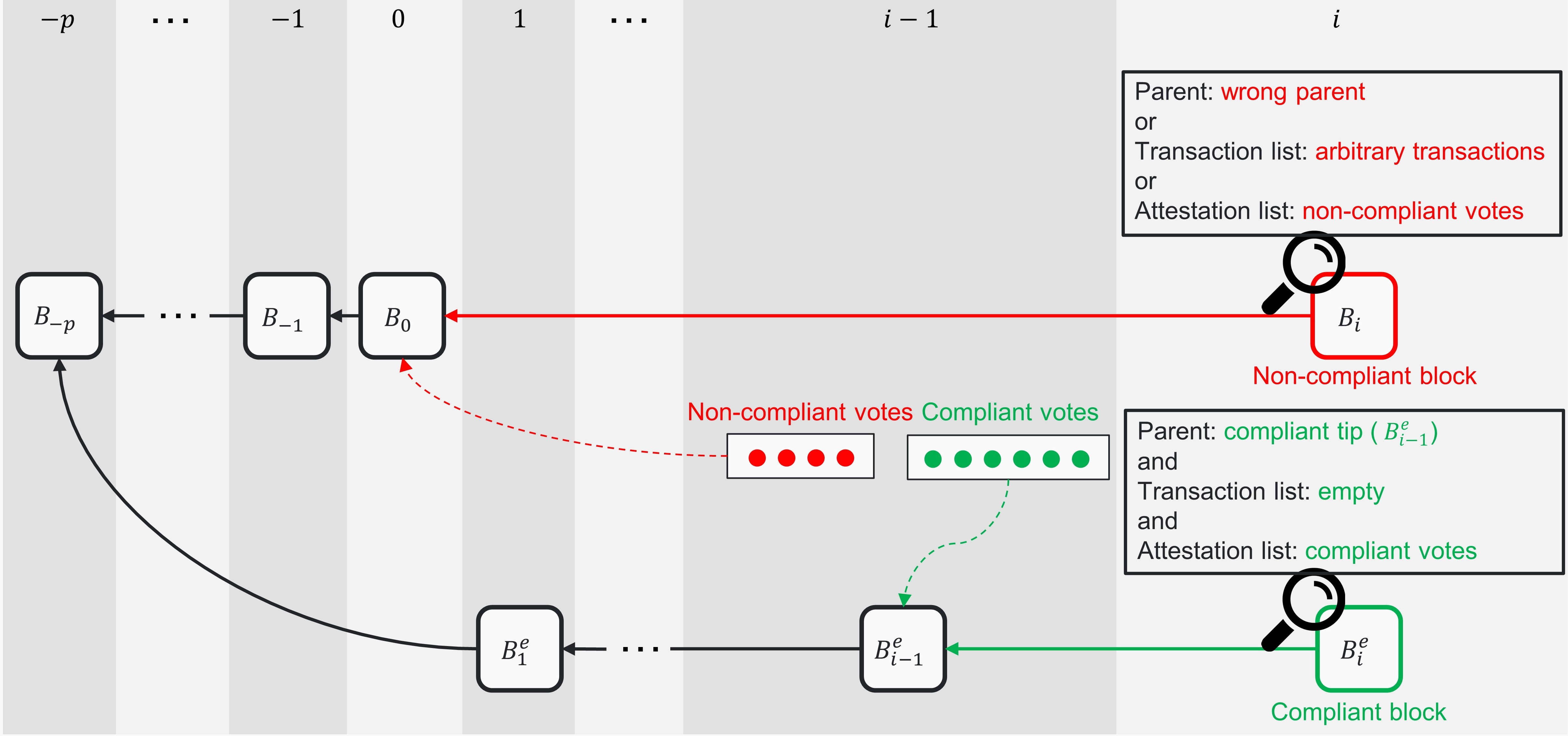}
    \caption{Compliant blocks and votes in the extended game. The extended game is successful if the chain of blocks $B_{-p+1}, \ldots, B_0$ gets reorged by the chain of empty blocks $B^e_{1}, \ldots, B^e_p$.}
    \label{fig:proposal-rule-extended-game}
\end{figure}
A compliant slot $i \in [p]$ block is denoted by $B^e_i$ ($B^e_0 = B_{-p}$).
\begin{itemize} [leftmargin=*]
    \item A compliant slot $1$ block is \emph{empty}, and is proposed on top of block $B_{-p}$.
    \item A slot $i$ vote is compliant if it is for the compliant tip returned by Alg.~\ref{alg.compliant} at time $3i + 1$.
    \item For slots $i = 2,\ldots,p$, a slot $i$ block is compliant if it is \emph{empty}, is proposed on top of the compliant tip returned by Alg.~\ref{alg.compliant} at time $3i$, and contains \emph{only} the compliant slot $i-1$ votes for its parent.
\end{itemize}
\end{definition}
Validators observe the protocol throughout the slots in $[p]$ and iteratively judge whether a block or vote is compliant according to the compliant tip of the slots in $[p]$ as time proceeds.
Thus, they can identify the last non-compliant block at Alg.~\ref{alg.compliant}, Line~\ref{line:record}.

%

\begin{definition}[Adversary's Strategy for the Extended Game]
\label{def:adversary-action-extended}
    Before slot $1$ starts, the adversary $\mathcal{A}$
    \begin{enumerate}[leftmargin=*]
        \item Generates a proof of its leadership in slot $p+1$,
        \item Sets the game rule as follows: For each $i \in [p]$, the slot $i$ leader should propose a compliant block, and the slot $i$ attestors should send compliant votes.
        \item Commits to excluding non-compliant slot $p$ votes from its block $B_A$.
        \item Broadcasts the leadership proof and the game rule to all validators of slots in $[p]$.
    \end{enumerate}
    
    At the start of slot $p+1$, $\mathcal{A}$
    \begin{enumerate}[leftmargin=*]
        \item Proposes its block $B_A$ on top of the compliant tip returned by Alg.~\ref{alg.compliant} at time $3p$.
        \item Includes \emph{only} the compliant slot $p$ votes in $B_A$.
    \end{enumerate}
\end{definition}

\subsection{Analysis}
\label{sec:repeated-game-analysis}
%
As the extended game lasts multiple slots and the future players observe the past actions, we show a subgame perfect Nash equilibrium (SPNE), where solo players follow the game rule.
\begin{theorem}
\label{thm:repeated-game-theorem}
There exists a SPNE of the extended game in the presence of solo validators, where the adversary succeeds 
and violates the security of LMD GHOST.
\end{theorem}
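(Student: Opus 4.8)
The plan is to establish the SPNE by backward induction over the slots $1, \ldots, p$, showing at each stage that a rational leader or attestor strictly (or weakly) prefers the compliant action given that all later players follow the game rule, and then to argue that the resulting equilibrium path is exactly the successful reorg of $B_{-p+1}, \ldots, B_0$. First I would set up the induction hypothesis: fix any subgame beginning at slot $i$ and assume that in the continuation every slot-$j$ leader ($j > i$) proposes a compliant block and every slot-$j$ attestor ($j \ge i$, $j > i$ for the induction step) votes compliantly. Under this hypothesis, the key structural fact to verify is that Algorithm~\ref{alg.compliant} run at slot $i$ returns the tip $B^e_{i-1}$ of the chain of compliant empty blocks built so far (when the earlier compliant blocks were actually proposed), and returns $B_{-p+1}$ or the appropriate fallback otherwise — in other words, the ``compliant tip'' tracks exactly the deepest point from which a chain of compliant blocks still has a chance of winning the fork choice at slot $p+1$, given the weight $(p-i+1)W + \Wp$ that the adversary's instructions guarantee to that chain.

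\textbf{Attestor incentives.} The heart of each inductive step mirrors the simple game analysis (Theorem~\ref{thm:simple-game-theorem}), lifted one slot at a time. For a slot-$i$ attestor, I would argue: the only votes that are ever included in $B_A$ (and hence eligible for the head-vote reward) are the compliant slot-$p$ votes, and — crucially — a compliant slot-$i$ block $B^e_i$ by definition contains \emph{only} compliant slot-$(i-1)$ votes; propagating this constraint forward, a slot-$i$ vote can only ultimately earn its head-vote reward if it is the compliant vote, because a non-compliant slot-$i$ vote is excluded from the compliant slot-$(i+1)$ block, which (by the induction hypothesis) is the block that actually gets proposed and enters the canonical chain. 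Moreover, when all slot-$i, \ldots, p$ attestors vote compliantly, the chain of empty blocks accumulates $pW + \Wp \ge W_p$ plus the votes, which by the choice of $\Wp/W = 40\%$ and the weight bookkeeping in Alg.~\ref{alg.compliant} suffices to reorg $B_{-p+1}, \ldots, B_0$; hence the compliant vote is correct and timely and pays $r$, while any deviation pays $0$. This makes compliance weakly dominant for each attestor in its subgame. For a slot-$i$ leader, the argument is the mirror of Note~\ref{note:4.1}/the simple-game leader reasoning: if it proposes a non-compliant block, then by the induction hypothesis the slot-$i$ attestors will not vote for it (they vote for the compliant tip instead), so it gathers too few votes, gets reorged, and earns no inclusion reward on those transactions; proposing the compliant empty block on the compliant tip, it collects the inclusion rewards for the compliant slot-$(i-1)$ votes it includes. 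So compliance is the best response at every node.

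\textbf{Base case, assembly, and the security violation.} The base case is slot $p$: given the adversary's commitment to include only compliant slot-$p$ votes in $B_A$ and to build $B_A$ on the compliant tip (Definition~\ref{def:adversary-action-extended}), a slot-$p$ attestor earns $r$ iff it votes compliantly and $0$ otherwise, and the slot-$p$ leader earns the inclusion rewards iff it proposes the compliant empty block so that $B_A$ can extend it — exactly the simple game. Folding the induction back to slot $1$ yields a strategy profile that is a best response in every subgame, i.e.\ a SPNE, and on its equilibrium path all of $B^e_1, \ldots, B^e_p$ are proposed and fully voted, so $B_{-p+1}, \ldots, B_0$ are reorged. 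Finally, to get the security violation, I would invoke that $p$ is arbitrary: by repeating the construction (the adversary need only be the leader of slot $t+2p+1$ for the chosen $p$, which happens with positive probability for any fixed $p$, and infinitely often for an adversary with any $\epsilon$ stake share), reorgs of unbounded length occur, contradicting both safety and liveness of LMD GHOST as formalized in Appendix~\ref{sec:appendix-reorgs}.

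\textbf{Main obstacle.} I expect the delicate part to be the weight/fork-choice bookkeeping in Algorithm~\ref{alg.compliant} — precisely verifying that the hypothetical weight injection $(p-i+1)W + \Wp$ is the right quantity so that (i) the compliant tip is well-defined and stable as slots progress, (ii) on the equilibrium path the compliant empty chain genuinely beats $B_{-p+1},\ldots,B_0$ under LMD-GHOST with proposer boost and equivocation discounting, and (iii) off-path deviations (a leader proposing a non-empty or misplaced block, or an attestor splitting weight) do not accidentally create an alternative compliant tip that breaks a later player's incentive. Handling the fork-choice ties (broken by the adversary) and the interaction with the $40\%$ proposer boost across $p$ consecutive slots is where the argument needs the most care; everything else reduces to the simple-game payoff table applied slot by slot.
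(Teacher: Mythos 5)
Your proposal follows essentially the same route as the paper's proof: backward induction over the per-slot subgames, with the simple-game payoff table lifted slot by slot to show that compliant voting is weakly dominant for each attestor (a non-compliant slot-$i$ vote is excluded from the compliant slot-$(i+1)$ block and earns $0$, while a compliant vote earns $r$ on the equilibrium path) and that proposing the compliant empty block is weakly dominant for each leader (payoff $R$ versus $0$), after which the compliant chain $B^e_1,\ldots,B^e_p,B_A$ becomes canonical and the reorg of $B_{-p+1},\ldots,B_0$ violates safety and liveness. The fork-choice bookkeeping you flag as the main obstacle is handled in the paper only at the same level of detail you give (via the compliant-tip definition of Alg.~\ref{alg.compliant} with deterministic tie-breaking), so your argument is correct and matches the paper's.
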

Proof of Theorem~\ref{thm:repeated-game-theorem} is provided in Appendix~\ref{sec:appendix-proof-extended-game}.
It shows that all players are incentivized to follow the adversary's game rule rather than the honest protocol at each subgame corresponding to a slot in $[p]$.
Therefore, they create a chain of empty blocks $B^e_{1}, \ldots, B^e_p$ conflicting with the chain $B_{-p+1}, \ldots, B_0$, and thus violating LMD GHOST's safety and liveness\footnote{The attack considers solo validators, and as such, would work only within an epoch, \ie, for $2p \leq 32$. Even in the absence of a confirmation rule for LMD GHOST, the attack violates the liveness of Casper FFG by causing arbitrarily long reorgs within an epoch.}.

\begin{note}
\label{note:5}
(i) There are many SPNE, where the adversary $\mathcal{A}$ is not successful, 
and (ii) when the number of honest validators per slot is $W_h<(W/2)$ (honest minority), LMD GHOST is still susceptible to the extended attack. 
\end{note}

\section{Attacks in the Presence of Staking Pools}
\label{sec:staking-pools}
In this section, we analyze the attacks in the presence of staking pools.
We model a staking pool as a group of colluding rational validators across different slots.
We call a staking pool that controls $W_s$ attestors at each slot a $W_s$-staking pool.
Staking pools can react differently towards the adversary's committed actions than solo validators.
For instance, 
analyzing the simple game with solo validators, we had to consider only the reward that a solo validator received for its slot $t$ vote.
However, in the presence of staking pools, we should consider the reward in both slots $t-1$ and $t$ as the staking pool controls attestors in both slots.
Indeed, if the adversary reorgs the slot $t$ block, the rewards of the staking pool for its slot $t-1$ votes are affected as well.

\subsection{The Simple Game}
\label{sec:fixed-validator-set-simple-game}
The simple game and the adversarial strategy are the same as in Section~\ref{sec:simple-game} except that a staking pool's payoff depends on its votes in each of the slots $t-1$ and $t$.
The players other than the staking pool are solo attestors.

\begin{theorem}
\label{thm:fixed-validator-set-simple-game-theorem}
Given a $W_s$-staking pool, where $W_s < \Wp$, there exists a Nash equilibrium of the simple game, where the adversary succeeds.
\end{theorem}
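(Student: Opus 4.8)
The plan is to exhibit one explicit strategy profile and verify it is a Nash equilibrium, reusing the single‑attestor reasoning behind Theorem~\ref{thm:simple-game-theorem} and Table~\ref{table:pay-off_matrix-simple_game} and adding exactly one bookkeeping ingredient: a staking pool's slot $t-1$ head votes are rewarded only if $B_t$ survives, since a slot $t-1$ vote is timely only when included in a slot $t$ block that stays on the canonical chain, and here that block is $B_t$. The candidate equilibrium $\sigma^\star$ is: every solo slot $t$ attestor votes for $B_{t-1}$ (is compliant), and the pool instructs all $W_s$ of its slot $t$ attestors to vote for $B_{t-1}$ as well; consistent with the honest execution preceding the game, the slot $t-1$ votes of all parties are for $B_{t-1}$ and were included in $B_t$. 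Under $\sigma^\star$ nobody votes for $B_t$, so fewer than $\Wp$ attestors do; by Definition~\ref{def:adversary-action-simple} the adversary then proposes $B_A$ on top of $B_{t-1}$, the attack succeeds, and $B_t$ together with the slot $t-1$ votes it carried is reorged.

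Next I would tally payoffs and rule out unilateral deviations. Under $\sigma^\star$ each solo attestor's compliant slot $t$ vote is in $B_A$ and is correct and timely, for payoff $r$; the pool collects $W_s r$ for its slot $t$ votes and $0$ for its orphaned slot $t-1$ votes, with all other rewards unaffected. For a solo attestor, switching to $B_t$ or any block $\neq B_{t-1}$ leaves the count for $B_t$ at most $1 < \Wp$, so the attack still succeeds and, by the adversary's commitment, the deviating vote is excluded from $B_A$, yielding $0 < r$ — exactly the comparison encoded in Table~\ref{table:pay-off_matrix-simple_game}. The genuinely new case is a deviation by the pool: if it reroutes $k \leq W_s$ of its slot $t$ attestors to $B_t$, then since $W_s < \Wp$ the count for $B_t$ is at most $W_s < \Wp$, so the attack still succeeds, $B_t$ is still reorged (hence the slot $t-1$ payoff stays $0$), and the slot $t$ payoff drops to $(W_s-k)r \leq W_s r$, strictly smaller when $k>0$ because non‑compliant votes are excluded from $B_A$; rerouting to a third block or abstaining is no better. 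Thus $\sigma^\star$ is a Nash equilibrium in which the adversary succeeds.

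The step needing care — and the reason the hypothesis $W_s < \Wp$ is invoked — is that a staking pool, unlike a solo attestor, has a positive incentive for $B_t$ to remain canonical (to protect its slot $t-1$ rewards), so one must check it cannot act on that incentive. With $W_s < \Wp$, the pool's votes alone can never lift $B_t$ past the proposer boost while the solo attestors are compliant, so $B_t$ is doomed no matter what the pool does; its slot $t-1$ reward is a sunk loss and its only remaining lever is the slot $t$ head‑vote reward, which is maximized by full compliance. (For $W_s \geq \Wp$ the same profile remains an equilibrium but only weakly, since the pool could instead force the attack to fail and recover $W_s r$ from slot $t-1$, a tie; I would leave that borderline case outside the stated hypothesis.) I expect the write‑up to be short, the only mild subtlety being the justification, from the timeliness rule recalled in Section~\ref{sec:reward_mechanism_background}, that the pool's slot $t-1$ votes earn nothing on the successful path.
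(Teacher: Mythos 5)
Your proposal is correct and follows essentially the same route as the paper's Appendix proof: it accounts for the pool's head-vote rewards over both slots $t-1$ and $t$, uses the fact that a reorg of $B_t$ wipes out the slot $t-1$ rewards, and invokes $W_s < \Wp$ to conclude the pool cannot flip the outcome, so compliance is its best response alongside the solo-attestor analysis of Theorem~\ref{thm:simple-game-theorem}. The only cosmetic difference is that the paper presents this as a $2\times 2$ payoff matrix ($C_P$/$NC_P$ versus $\texttt{Succeed}$/$\texttt{Fail}$) and argues weak dominance, whereas you verify the all-compliant profile directly, including partial deviations of $k \le W_s$ votes — a slightly more explicit but equivalent check.
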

Proof of Theorem~\ref{thm:fixed-validator-set-simple-game-theorem} is presented in Appendix~\ref{sec:appendix-proof-simple-game-fixed} and is very similar to the proof for the simple game with solo validators as $W_s < \Wp$. Fig.~\ref{fig:simple_game_staking_pool} illustrates the intuition behind the proof.

Note that if the staking pool controls over $\Wp$ attestors\footnote{As of October 23, 2024, the largest staking pool is LIDO, holding $27.8\%<\Wp = 40\%$ of the total Ethereum stake~\cite{lido}.} per slot, acting non-compliantly 
weakly dominates any other action in the simple game, even though voting 
compliantly still remains weakly dominant.
Therefore, we can still establish the existence of weak Nash equilibrium of the simple game in the presence of a staking pool \emph{with any size}, where the adversary succeeds.
\begin{figure*}[t]
    \centering
    \begin{subfigure}{0.45\textwidth}
        \centering
        \includegraphics[height=1in]{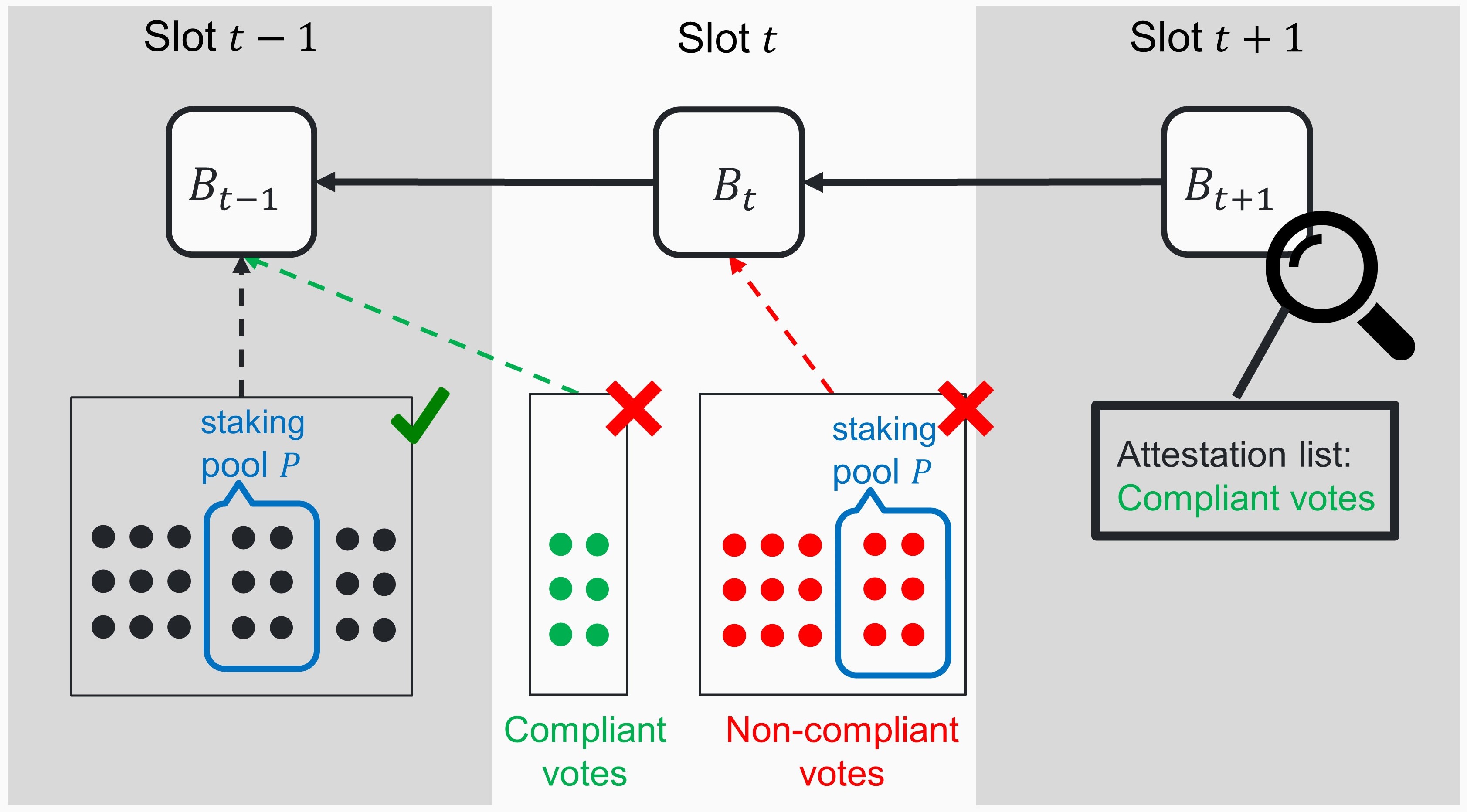}
        \caption{$\texttt{Fail}$, $NC_P$. The payoff of staking pool $P$ is equal to $mr$.}
        \label{fig:simple_game_staking_valool_AV}
    \end{subfigure}%
    ~ 
    \begin{subfigure}{0.45\textwidth}
        \centering
        \includegraphics[height=1in]{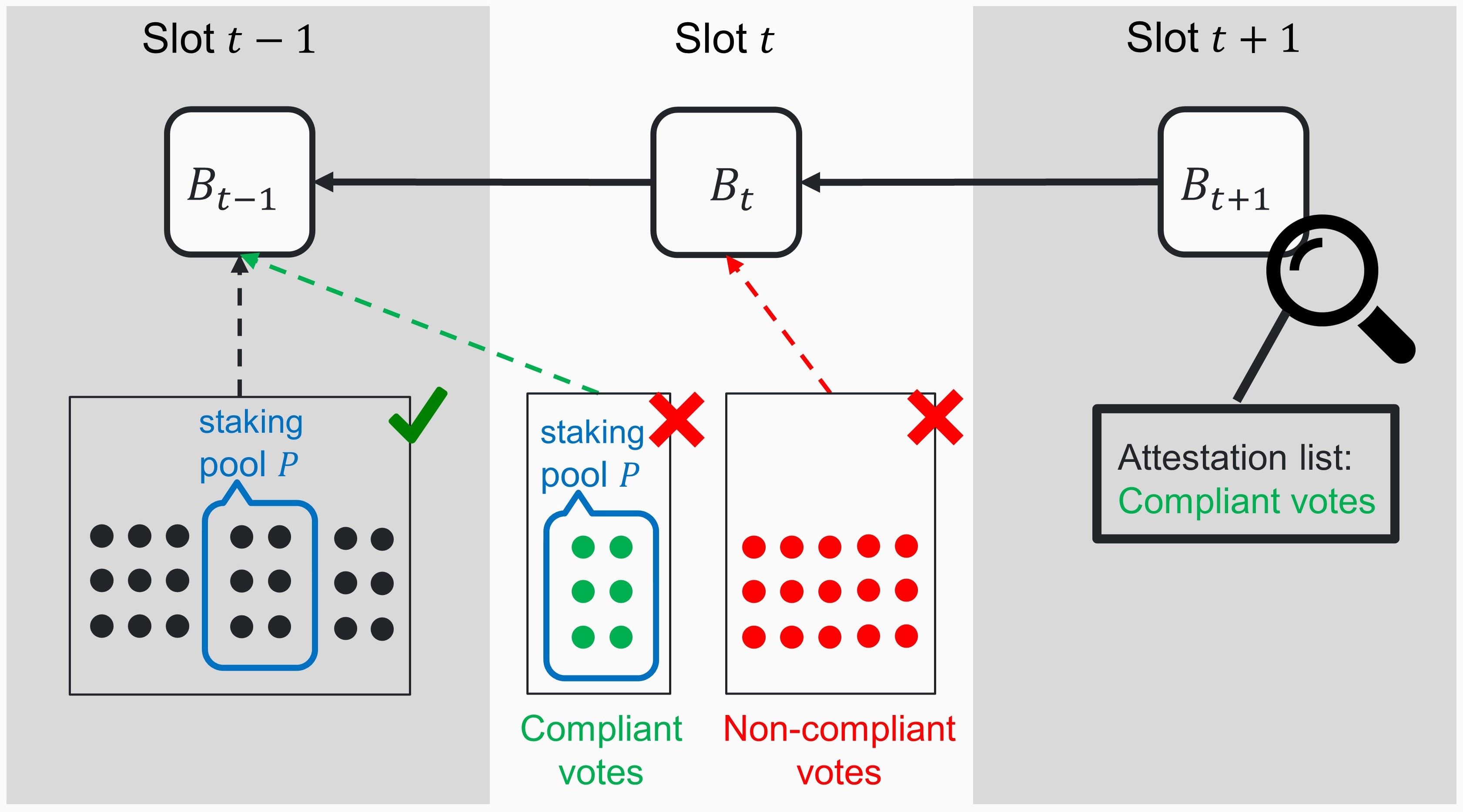}
        \caption{$\texttt{Fail}$, $C_P$. The payoff of staking pool $P$ is equal to $mr$.}
        \label{fig:simple_game__staking_valool_ANV}
    \end{subfigure}
    \vskip\baselineskip
    \begin{subfigure}{0.45\textwidth}
        \centering
        \includegraphics[height=1in]{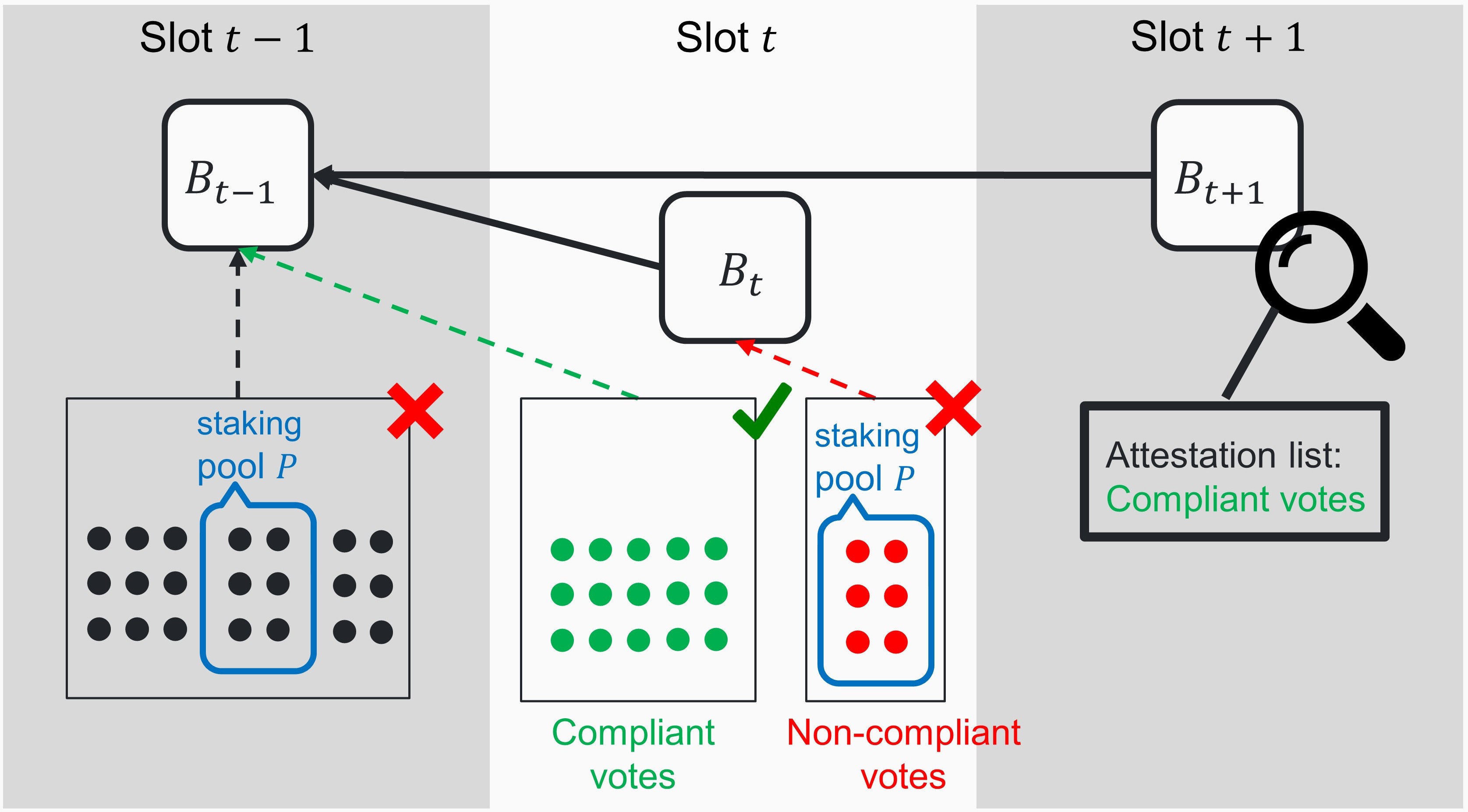}
        \caption{$\texttt{Succeed}$, $NC_P$. The payoff of staking pool $P$ is equal to $0$.}
        \label{fig: simple_game__staking_valool_NAV}
    \end{subfigure}
    ~ 
    \begin{subfigure}{0.45\textwidth}
        \centering
        \includegraphics[height=1in]{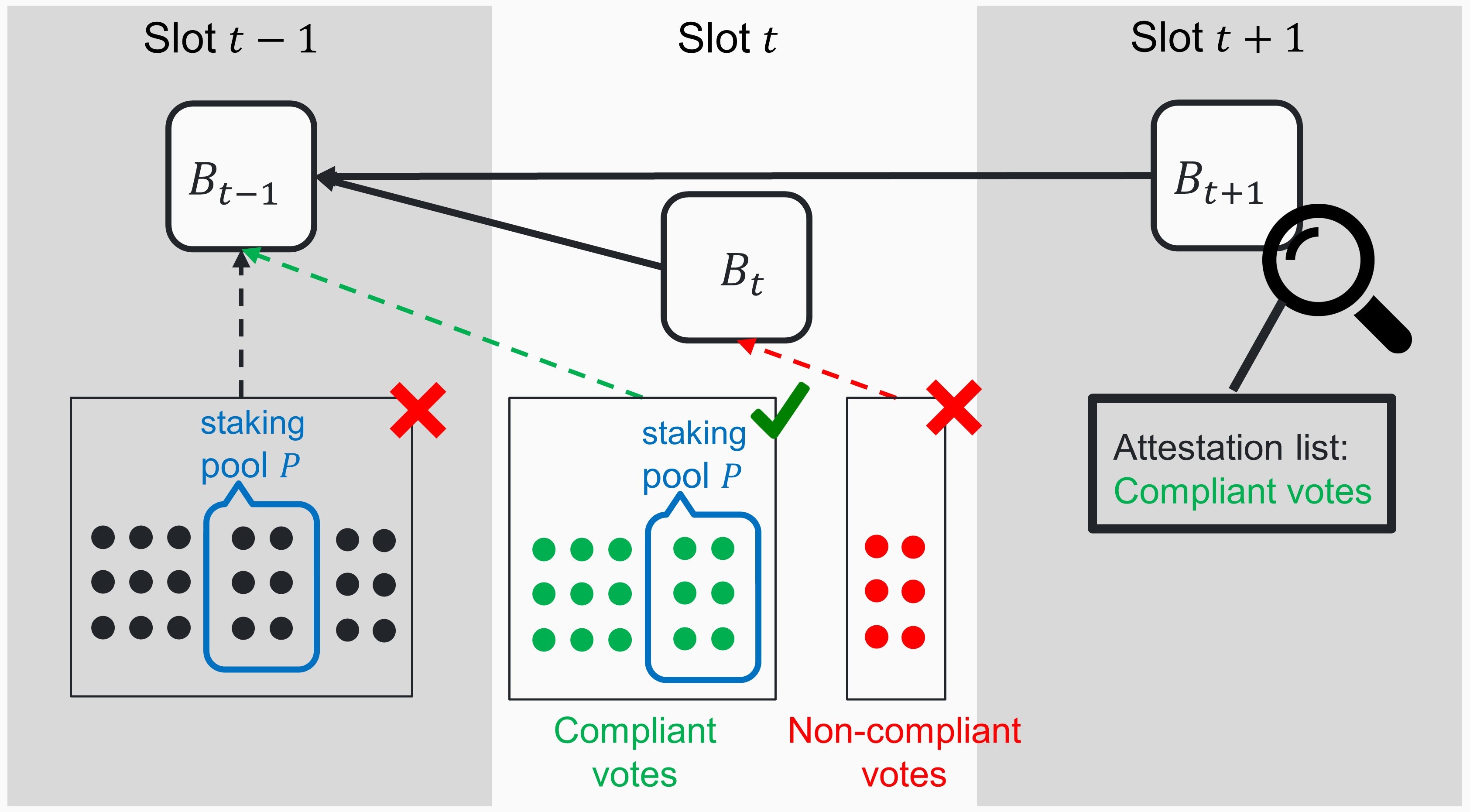}
        \caption{$\texttt{Succeed}$, $C_P$. The payoff of staking pool $P$ is equal to $mr$.}
        \label{fig:simple_game_staking_valool_NANV}
    \end{subfigure}%
    \caption{The payoff of a staking pool $P$ with a total stake share less than $\Wp$ in the simple game. $P$ controls $m$ votes in each slot, where $m \geq 1$.
    $r$ represents the reward for a single correct and timely vote. 
    Let $C_P$ and $NC_P$ denote the events that the staking pool $P$ votes and does not vote for $B_{t-1}$ (acts compliantly and non-compliantly), respectively. Let $\texttt{Succeed}$ denote the event that less than $\Wp$ slot $t$ attestors vote for $B_t$ (the attack succeeds), and $\texttt{Fail}$ the event that $\Wp$ or more slot $t$ attestors vote for $B_t$ (the attack fails). The green (red) votes are compliant (non-compliant). The green tick and the red cross represent whether a vote receives or does not receive a reward, respectively. As can be seen, $C_P$ (compliance) weakly dominates $NC_P$ (non-compliance).}
    \label{fig:simple_game_staking_pool}
\end{figure*}

\subsection{The Extended Game}
\label{sec:fixed-validator-set-extended-game}
The extended game and the adversarial strategy are the same as in Section~\ref{sec:solo-validators-extended} except that a staking pool's payoff depends on its votes in each of the slots.
The players other than the staking pool are solo attestors and leaders.
%
\begin{theorem}
\label{thm:fixed-validator-set-extended-game-theorem}
Given a $W_s$-staking pool, where $W_s < W_p$, there exists a SPNE of the extended game, where the adversary succeeds and violates the security of LMD GHOST.
\end{theorem}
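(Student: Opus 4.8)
The plan is to mimic the backward-induction argument used for the solo-validator extended game (Theorem~\ref{thm:repeated-game-theorem}), but at each subgame I must replace the payoff of a single solo attestor with the aggregate payoff of the $W_s$-staking pool across \emph{all} the slots $1,\ldots,p$ in which it controls attestors, and also across the slot(s) $\le 0$ whose vote rewards depend on whether $B^e_1,\ldots,B^e_p$ or $B_{-p+1},\ldots,B_0$ ends up canonical. First I would fix the SPNE candidate: every slot $i\in[p]$ leader proposes the compliant block $B^e_i$ and every slot $i$ attestor (solo or pool-controlled) casts the compliant vote returned by Alg.~\ref{alg.compliant}. I then argue subgame by subgame, starting from slot $p$ and moving backwards, exactly as in Appendix~\ref{sec:appendix-proof-extended-game}, that no player — now including the pool acting jointly — can profit by deviating, given that the future play follows the game rule.

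The key observation that makes this go through is the hypothesis $W_s<\Wp$. In the compliant-tip computation of Alg.~\ref{alg.compliant}, the pivotal weight that decides whether a compliant chain survives against $B_A$ is $(p-i+1)W+\Wp$; since the pool controls at most $W_s<\Wp$ attestors in any single slot, the pool alone cannot flip the compliant tip, cannot by itself push a non-compliant block into $\chain_B$, and cannot single-handedly overcome the proposer boost on $B_A$. Hence in every subgame the pool's best response is governed by the same reward table as a solo attestor: a compliant vote in slot $i$ is rewarded iff $B^e_i$ stays canonical, which (given future compliance) it does; a non-compliant slot $p$ vote is always excluded from $B_A$ by the adversary's commitment; and the pool's votes in slots $\le 0$ are rendered incorrect by a successful reorg but that loss is incurred symmetrically whether or not the pool complies in slots $[p]$, so it does not enter the incentive comparison. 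Summing $r$ over the (at most $32$) slots where the pool is an attestor, compliance weakly dominates for the pool, and strictly so whenever it is an attestor in some slot $i\in[p]$ that is pivotal for $B^e_i$'s survival.

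For the leaders' subgames I reuse the argument from the solo case verbatim: a slot $i\in[p]$ leader (whether solo or, if the pool also controls leaders, acting jointly) proposes a compliant $B^e_i$ because a non-compliant proposal would be reorged by the subsequent compliant blocks and $B_A$ — whose combined weight exceeds $\Wp$ even after subtracting the pool's $W_s<\Wp$ attestors — so the leader would forfeit both its own inclusion reward and (if it also holds attestors in later slots) those attestation rewards. Putting the leader and attestor arguments together by backward induction over slots $p,p-1,\ldots,1$ yields the SPNE in which the compliant chain $B^e_1,\ldots,B^e_p$ reorgs $B_{-p+1},\ldots,B_0$, violating LMD GHOST safety and liveness exactly as in Theorem~\ref{thm:repeated-game-theorem}. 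The main obstacle I anticipate is bookkeeping the pool's cross-slot payoff carefully enough to confirm that the slots-$\le 0$ terms genuinely cancel out of every deviation comparison (so that the $W_s<\Wp$ bound on per-slot influence is all that is needed) rather than introducing a term that scales with the number of the pool's pre-attack votes; handling the degenerate cases where the pool controls attestors in two consecutive slots of $[p]$ at once needs a short separate check. Accordingly, the full proof is deferred to Appendix~\ref{sec:appendix-proof-extended-game-staking}.
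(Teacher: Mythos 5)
Your proposal follows essentially the same route as the paper's proof in Appendix~E: a backward induction mirroring the solo-validator argument, in which the pool's payoff is aggregated over the slots it controls within $[p]$ (compliance at slot $i$ yields an extra $mr$ because the compliant slot $i{+}1$ block, respectively $B_A$, is the only block that includes its slot $i$ votes), the hypothesis $W_s < W_p$ guarantees the pool alone cannot alter the compliant tip or the attack's outcome, and the leaders' subgames are handled exactly as in Theorem~\ref{thm:repeated-game-theorem}. Your explicit observation that the pool's slot-$\leq 0$ vote rewards are lost regardless of its play in $[p]$ (so they cancel from every deviation comparison) is a point the paper's proof leaves implicit, but it is consistent with, and if anything slightly more careful than, the published argument.
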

Proof of Theorem~\ref{thm:fixed-validator-set-extended-game-theorem} is in Appendix~\ref{sec:appendix-proof-extended-game-fixed} and has a similar outline as the proof of Theorem~\ref{thm:repeated-game-theorem}.

If the staking pool controls over $\Wp$ attestors, compliance with the adversary's game rule ceases to be a dominant strategy, leading to the mitigation of the attack. 
In the next section, we introduce the \emph{selfish mining-inspired attack}, where even colluding validators are incentivized to comply with the adversary.
In the presence of staking pools of any size, it threatens the protocol's liveness regardless of the size of the staking pool, albeit requiring a larger share of the stake.

\subsection{Selfish Mining-Inspired Attack} \label{sec:selfish_mining_game}

In this section, we present a new, selfish mining-inspired attack that enables the adversary to do a long-range reorg attack even in the presence of large staking pools.
Although the attack cannot violate safety or liveness permanently when the adversary's stake is $<1/3$, over short periods where adversarial slots outnumber honest slots, it enables the reorg of non-adversarial blocks as in the original selfish-mining attack.
Thus, it is an attack on reorg-resilience.

\begin{figure}[b]
    \centering
    \includegraphics[height=1in]{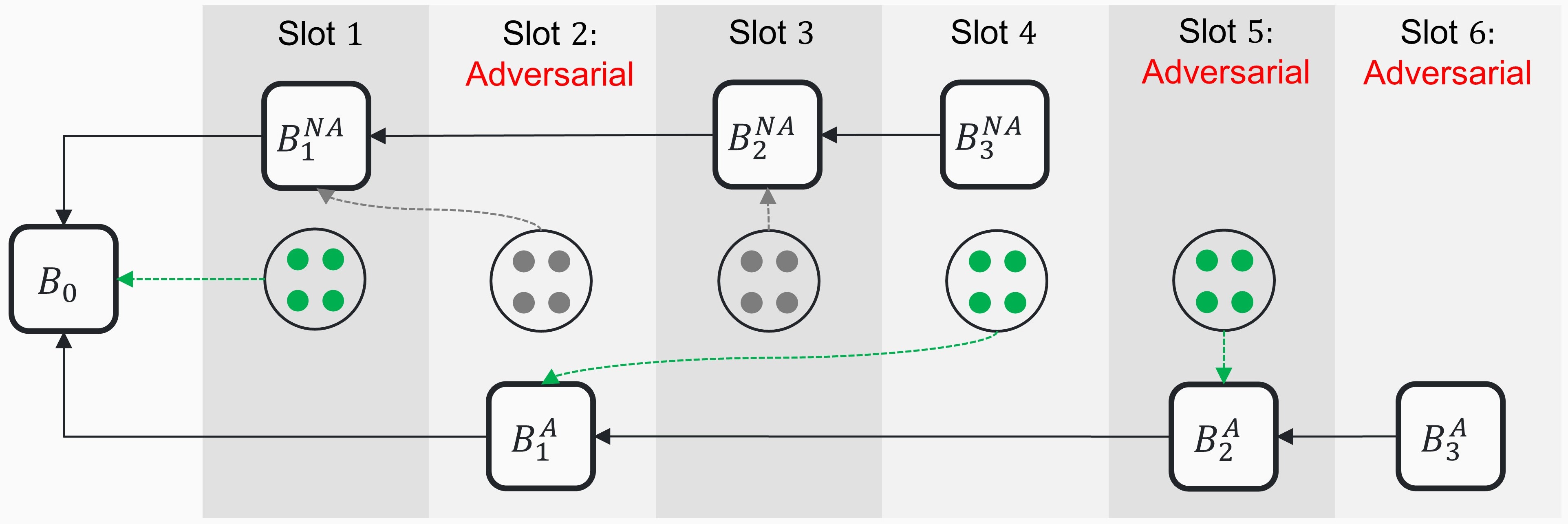}
    \caption{The selfish mining-inspired attack, where $p = 5$ and $N_\mathcal{A}=N_\mathcal{NA}=3$. The upper and the lower forks are non-adversarial and adversarial, respectively. The attestors in slots 1, 4, and 5 are incentivized to vote for the adversarial fork as their subsequent slots are adversarial. At the start of slot $6$, the weights of the non-adversarial and adversarial forks are $2W$ and $2W+\Wp$, respectively, resulting in the adversarial fork winning.}
    \label{fig:selfish-mining}
\end{figure}

%
The \emph{selfish mining-inspired attack} starts at the beginning of some slot $t$, where the number of adversarial slots in the interval $t, \ldots, t+p$ is greater than or equal to the number of its non-adversarial slots, and slot $t+p$ is adversarial.
We normalize slot $t-1 = 0$, implying that the attack occurs over the slots $[p+1]$.
The game has the attestors of those slots preceding the adversarial slots in $[p+1]$ as its players. 
Let $N_\mathcal{A}$ and $N_\mathcal{NA}$ denote the number of adversarial and non-adversarial slots within the slot range $[p+1]$.
Let $B_0$ denote the block at the tip of the canonical LMD GHOST chain at the end of slot $0$.
Let $s_i$ and $B^\mathcal{A}_i$, $i \in \{1, \ldots, N_\mathcal{A}\}$, denote the $i^\text{th}$ adversarial slot within the range $[p+1]$ and its corresponding block, respectively.
We have $N_\mathcal{A} \ge N_\mathcal{NA}$. In this attack, the adversary commits to the following strategy:
\begin{definition}[Adversary's Strategy for the Selfish Mining-Inspired Game]
\label{def:selfish_mining_game}
    Before slot $1$ starts, the adversary $\mathcal{A}$ generates a proof asserting its leadership in the adversarial slots within the range $[p+1]$.
    Then, for the slots $s_i-1$, $i \in \{1, \ldots, N_\mathcal{A}\}$, it sets the suggested game rule $\texttt{Game}_i$ as follows: 
    The attestors of slot $s_i-1$ should refrain from publishing their votes at time $3(s_i-1)+1$ (\ie, should withhold their votes).
    They should later vote for block $B^\mathcal{A}_{i-1}$ that will be published by the adversary at a point in the time range $[3p, 3(p+1)+1]$\footnote{The time range is after all the non-adversarial blocks within $[p+1]$ have been published and before the voting time of slot $p+1$ starts.}.
    The adversary does not specify any rules for the attestors of the remaining slots.

    At the start of slot $p$, when all the non-adversarial blocks within $[p+1]$ have been published, the adversary $\mathcal{A}$ takes the following action for $i \in \{1, \ldots, N_\mathcal{A}\}$:
    \begin{itemize}[leftmargin=*]
        \item It publishes block $B^\mathcal{A}_{i-1}$ for the attestors of slot $s_i-1$ ($B^\mathcal{A}_0$ is defined to be $B_0$).
        \item It collects the votes submitted by the attestors of slot $s_i-1$ for block $B^\mathcal{A}_{i-1}$ and inserts them into the block $B^\mathcal{A}_{i}$ as compliant votes.
    \end{itemize}
    Before time $3(p+1)+1$, the adversary publishes the sequence of adversarial blocks $B^\mathcal{A}_{i}$, $i \in [N_\mathcal{A}]$ proposed on top of block $B_0$ for all the validators.
    
\end{definition}

The attack \emph{succeeds} if the canonical chain at the end of slot $p+1$ is the sequence of adversarial blocks $(B^\mathcal{A}_{1}, \ldots, B^\mathcal{A}_{N_{\mathcal{A}}})$ proposed on top of block $B_0$.
The attack \emph{fails} otherwise.
The game ends, and the payoffs are realized before the voting time of slot $p+1$ starts at time $3(p+1)+1$.
Similar to the selfish mining attack in longest-chain blockchains, the adversary in this attack hides the adversarial fork until the end of the attack interval. The adversarial fork publication is delayed to prevent non-adversarial slot leaders from proposing their blocks on top of it, which would allow them to avoid being reorganized. Fig.~\ref{fig:selfish-mining} illustrates the selfish mining-inspired attack conducted over six consecutive slots.

\subsection{Analysis of Selfish Mining-Inspired Attack}
\label{sec:selfish_mining_game-analysis}
Let the non-adversarial fork be the fork containing blocks published by the non-adversarial validators by the end of slot $p$, and the adversarial fork be the sequence $(B^\mathcal{A}_{1}, \ldots, B^\mathcal{A}_{N_{\mathcal{A}}})$. 
If the attestors of all slots $s_i-1$, $i \in [N_\mathcal{A}]$ are incentivized to follow the adversary's game rule rather than the honest protocol, the weights of the non-adversarial fork and the adversarial fork at time $3(p+1)+1$ will be equal to $N_\mathcal{NA}W$ and $N_\mathcal{A}W+\Wp$, respectively.
Then, if $N_\mathcal{A}\ge N_\mathcal{NA}$, the weight of the adversarial fork will be higher, and the attack will succeed. 
\begin{theorem}
\label{thm::selfish_mining_game}
There exists a Nash equilibrium of the selfish mining-inspired game in the presence of staking pools, where the adversary succeeds.
\end{theorem}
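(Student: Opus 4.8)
The plan is to show that if every attestor of a slot $s_i-1$ (for $i \in [N_\mathcal{A}]$) follows the game rule $\texttt{Game}_i$, then no such attestor has a profitable unilateral deviation, so the profile is a Nash equilibrium in which the adversarial fork has weight $N_\mathcal{A}W + \Wp \geq N_\mathcal{NA}W + \Wp > N_\mathcal{NA}W$ and therefore wins the fork-choice at time $3(p+1)+1$. Since the players are staking pools, I would analyze the payoff of a generic pool $P$ holding $m$ attestors in some slot $s_i-1$, and argue about the total reward $P$ receives across all slots in $[p+1]$ where it holds attestors, just as the staking-pool analyses of the simple and extended games do. The key observation is that, conditioned on all other attestors complying, the adversarial fork $(B^\mathcal{A}_1,\ldots,B^\mathcal{A}_{N_\mathcal{A}})$ becomes canonical regardless of what $P$ does (a single pool with $W_s < \ldots$ — here even with an arbitrary-size pool, because $N_\mathcal{A} \geq N_\mathcal{NA}$ plus the proposer boost gives slack of at least $\Wp$ votes, which exceeds any pool's stake only under the stated mild stake assumption; more carefully, the relevant slack is $\Wp$ and what matters is that deviation cannot flip the winner). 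So the game outcome — which blocks are canonical, hence which slot leaders get inclusion rewards and which attestation targets are "correct" — is fixed, and $P$'s only lever is whether its own votes land in the canonical chain and are counted as correct and timely.

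Next I would go through the payoff comparison for $P$'s attestors in a slot $s_i - 1$. If $P$ complies (withholds and later votes for $B^\mathcal{A}_{i-1}$), the adversary collects these votes and places them in $B^\mathcal{A}_i$; since the adversarial fork is canonical and $B^\mathcal{A}_{i-1}$ is the correct head for slot $s_i-1$ on that fork, and $B^\mathcal{A}_i$ is the block of slot $s_i$ (the immediately following slot), the votes are correct and timely, earning $mr$. If $P$ deviates — votes at the honest time $3(s_i-1)+1$ for the non-adversarial head, or votes for some other block, or withholds permanently — its votes are for a block not on the canonical chain (hence incorrect), or are included only in a non-canonical block, or are never included; in every such case the head-vote reward is $0$ for those $m$ attestors. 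For $P$'s attestors in slots that are not of the form $s_i-1$, the adversary specifies no rule and those attestors simply vote honestly for the canonical head; their reward is the same $mr$ regardless of how $P$ acts in the $s_i-1$ slots, so these contribute an additive constant that does not affect the comparison. Summing over all slots, complying weakly dominates (in fact strictly dominates on the head-vote component for the relevant slots), so $P$ has no incentive to deviate.

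I would also need to dispatch a few side cases to make the equilibrium argument airtight. First, the leaders of the non-adversarial slots are not players here (the game's players are only the specified attestors), so I take their behavior as fixed/honest — I should note this matches Definition~\ref{def:selfish_mining_game}, where the adversary publishes the adversarial fork only after all non-adversarial blocks are out, so honest leaders cannot build on the hidden fork. Second, I must check that a compliant attestor's withheld vote really is timely: the adversary publishes $B^\mathcal{A}_{i-1}$ and collects the slot-$(s_i-1)$ votes into $B^\mathcal{A}_i$, and since slot $s_i$ is the slot right after slot $s_i-1$, the Altair timeliness condition (a slot-$t$ vote must appear in the block of slot $t+1$) is met. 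Third, I should confirm that deviating attestors cannot, by voting early for the non-adversarial fork, tip the fork-choice: at time $3(p+1)+1$ the non-adversarial fork has at most $N_\mathcal{NA}W + (\text{early-published adversarial-slot votes that defected})$ weight, and even in the worst case this stays below $N_\mathcal{A}W + \Wp$ under the theorem's assumption that adversarial slots are frequent enough — this is essentially where the "mild assumptions on the fraction of adversarial validators" enters, and I would state it explicitly rather than belabor the arithmetic.

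The main obstacle I anticipate is the "deviation cannot flip the winner" step: unlike the simple and extended games, a deviating attestor in a slot $s_i-1$ can publish its vote early for the non-adversarial chain, and if enough of these slots' attestors defect they could in principle add up to a non-adversarial weight that overcomes $N_\mathcal{A}W + \Wp$. Since Nash equilibrium only requires robustness to \emph{unilateral} deviation, a single pool $P$ deviating alone cannot move the needle as long as $m$ (or $W_s$) is small relative to $\Wp$ — but I should be precise that the theorem statement is about Nash (not strong) equilibrium, so I only need unilateral stability, and under that reading the argument goes through cleanly for pools of any size whose single-slot weight is at most $\Wp$; for truly arbitrary-size pools I would note, as in the simple-game discussion, that the outcome is then outcome-independent of $P$ in the relevant corner and compliance remains (weakly) best. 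I would keep the write-up parallel to the proof of Theorem~\ref{thm:fixed-validator-set-extended-game-theorem}, reusing its subgame/backward-style bookkeeping but noting that here there is no cascading dependence between slots — each $s_i-1$ pool's decision is independent given the others — which actually makes this proof shorter than the extended-game one.
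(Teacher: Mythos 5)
Your main line of argument matches the paper's in spirit (a payoff comparison showing compliance is at least as good, given the others comply), but it has a genuine gap precisely in the case this theorem exists to cover: a staking pool large enough that its \emph{unilateral} deviation can flip the outcome from \texttt{Succeed} to \texttt{Fail}. Your bound is also not the right one — with all others compliant, a deviating pool of per-slot size $W_s$ shifts $N_\mathcal{A}W_s$ votes off the adversarial fork and onto the non-adversarial one, so the winner flips unless roughly $2N_\mathcal{A}W_s < (N_\mathcal{A}-N_\mathcal{NA})W + \Wp$; ``$W_s$ at most $\Wp$'' does not suffice (e.g.\ when $N_\mathcal{A}=N_\mathcal{NA}$), so the flipping case is not a corner case but the typical one for realistic pools. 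Your proposed resolution for arbitrary-size pools (``the outcome is outcome-independent of $P$ in the relevant corner'') does not resolve it, and your claim that the pool's attestors in slots \emph{not} of the form $s_i-1$ contribute ``an additive constant'' is false exactly when the outcome can flip: under \texttt{Succeed} those honest votes sit on the reorged fork and earn nothing, while under \texttt{Fail} they earn $m_s r$ each, so they are outcome-dependent and cannot be dropped from the comparison.

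The paper's proof closes this by keeping those slots in the ledger. Writing $S_\mathcal{A}$ (resp.\ $S_\mathcal{NA}$) for the slots whose next slot is adversarial (resp.\ non-adversarial), the pool's payoff is $\sum_{s\in S_\mathcal{A}} m_s r$ if it complies and the attack succeeds, $0$ if it deviates and the attack still succeeds, and $\sum_{s\in S_\mathcal{NA}} m_s r$ under \texttt{Fail} for either action (the game rule leaves the pool free to vote honestly in $S_\mathcal{NA}$ slots, and those votes are rewarded only when the non-adversarial fork survives). Hence even when a large pool's deviation flips the outcome, the deviation payoff $\sum_{s\in S_\mathcal{NA}} m_s r = N_\mathcal{NA}W_s r$ does not exceed the compliance payoff $\sum_{s\in S_\mathcal{A}} m_s r = N_\mathcal{A}W_s r$, because $N_\mathcal{A}\ge N_\mathcal{NA}$; compliance weakly dominates for pools of \emph{any} size, which is the theorem's whole point and the reason this attack, unlike the simple and extended ones, needs no $W_s<\Wp$ assumption. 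Your write-up would be correct if you replaced the ``deviation cannot move the needle'' step with this explicit cross-case payoff comparison.
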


\begin{proof}
    Let $S_\mathcal{A}$ denote the set of slots whose next slot is adversarial, and let $S_\mathcal{NA}$ denote the set of slots whose next slot is non-adversarial. Let the staking pool $P$ control $m_s$ attestors in slot $s$. Note that the adversarial game rule implies that the staking pool should vote on the adversarial fork only in slots preceding adversarial slots, meaning that in other slots, the staking pool is free to vote on non-adversarial forks. The payoff matrix of a staking pool $P$ in the selfish mining-inspired game is presented in Table~\ref{table:pay-off-matrix_selfish_mining_game}.
    \begin{table}[b]
    \centering
    \caption{Pay-off matrix of a staking pool $P$ in the selfish mining-inspired game. Let $C_P$ and $NC_P$ denote the events that the staking pool $P$ follows the adversarial game rule (being compliant) and the honest protocol (being non-compliant), respectively. Let $\texttt{Succeed}$ denote that the adversarial fork at time $3(p+1)+1$ outweighs the non-adversarial fork. Conversely, $\texttt{Fail}$ denotes that the non-adversarial fork at time $3(p+1)+1$ outweighs the adversarial fork.}
    \begin{tabular}{|c|c|c|}
         \hline
         & $C_P$ & $NC_P$  \\
         \hline
         $\texttt{Succeed}$ & $\sum_{s \in S_\mathcal{A}}{m_sr}$ & $0$ \\
         \hline
         $\texttt{Fail}$ & $\sum_{s \in S_\mathcal{NA}}{m_sr}$ & $\sum_{s \in S_\mathcal{NA}}{m_sr}$ \\
         \hline
    \end{tabular}
    \label{table:pay-off-matrix_selfish_mining_game}
\end{table}
Following the game rule weakly dominates following the honest protocol. Therefore, a rational staking pool follows the game rule, resulting in the adversary's success.
\end{proof}

\begin{note}
There are Nash equilibria, where the adversary $\mathcal{A}$ is not successful.
However, if in a slot interval, the number of adversarial slots is strictly greater than the number of non-adversarial ones, the selfish mining-inspired attack gives a higher payoff to the attestors in the equilibria it succeeds, as with high probability we have: $\sum_{s \in S_\mathcal{A}}{m_sr}> \sum_{s \in S_\mathcal{NA}}{m_sr}$.
This implies that unlike in the extended attack, even coordination of strategies and forming stake pools might not be sufficient to disincentivize validators from following the game rules.
\end{note}


\section{\NewMechanism Mechanism}
\label{sec:solution}


The \newmechanism mechanism (Fig.~\ref{fig:new-mechanism}) requires each slot $t+1$ attestor to sign all slot $t$ head votes in its view at the beginning of slot $t+1$.
A signature of a slot $t+1$ attestor on a slot $t$ vote is called a \emph{slot $t$ evidence}.
A slot $t$ attestor gains an attestation reward for its vote if the vote is included in \emph{any} slot $t'>t$ block $B$ within the canonical chain, and the vote is \emph{correct} and \emph{timely} with respect to $B$.
Correctness and timeliness are re-defined as follows:
%

\smallskip
\noindent
\textbf{Correctness:}
The slot $t$ vote is correct if it is for the highest block from slots $\leq t$ in the prefix of $B$ (let $B_\ell$ denote this block). 

\smallskip
\noindent
\textbf{Timeliness:}
The slot $t$ vote is timely if either $B'$ is of slot $t+1$, or there are over $W/2$ unique slot $t$ evidences for the slot $t$ vote in the prefix of $B'$ after block $B_\ell$.

As the mechanism relies on slot $t+1$'s committee to infer whether a slot $t$ vote is timely, it reduces the adversary's influence over the attestation rewards, unless it controls half of the committee.
An adversary with bounded stake cannot delay the inclusion of evidences in the chain indefinitely, implying that its previous attacks fail (\cf Sections~\ref{sec:solo-validators-simple},~\ref{sec:solo-validators-extended} and~\ref{sec:selfish_mining_game}).
The \newmechanism mechanism also allows rewarding the slot $t+1$ committee for generating slot $t$ evidences and the leaders for including the past votes and evidences.

%
%

\subsection{Analysis of the \NewMechanism Mechanism}
\label{sec:dag-votes-analysis}
To show the effectiveness of \newmechanism, we prove the existence of a subgame perfect Nash equilibrium in an idealized model, in which LMD GHOST equipped with \newmechanism remains secure under the following assumptions:
\begin{enumerate}
    \item A slot $t$ vote is considered `timely', only if there are over $W/2$ unique slot $t$ evidences for the vote, where $W$ is the number of attestors per slot.
    This assumption prevents a slot $t+1$ block $B$ that was not proposed at the tip of the canonical LMD GHOST chain (called `canonical chain' from now on for brevity) from rewarding `late-released' slot $t$ votes, when $B$ later enters the canonical chain. 
    \item All rational, non-colluding (\ie, solo) slot $t+1$ attestors create evidences for the slot $t$ votes they observe by $2\Delta$ time into slot $t$ (\ie, $\Delta$ time before the start of slot $t+1$) and broadcast these votes to all other attestors (they do not create evidences for the votes observed later). This is necessary to avoid penalizing the votes broadcast in a timely manner.
    \item If a vote by an attestors is not detected as a `timely' vote, that attestor is eventually penalized. This captures the inactivity leak functionality~\cite{inactivity-leak} ingrained in Ethereum consensus and is a necessary assumption for proving Theorem~\ref{thm:dag-votes-secure}.
    \item Rational leaders broadcast their blocks at the beginning of their respective slots\footnote{Timing games, where leaders delay their blocks for financial gain, are beyond the scope of this work. For more discussion, \cf~Section~\ref{sec:related-work-timing-games}}.
    \item There is no proposer boost\footnote{Although LMD GHOST is not secure against Byzantine attestors in the absence of proposer boost, it is not necessary for Theorem~\ref{thm:dag-votes-secure} that considers rational attestors.}. If instead the proposer boost is set to $W_p>0$, Theorem~\ref{thm:dag-votes-secure} would require the number of solo attestors to be at least $1+((W+W_p)/2)$.
\end{enumerate}
We note that the equilibrium strategies described below constitute a strong Nash equilibrium within each subgame of the extended game, up to collusions of size $(W/2)-1$.
\begin{theorem}
\label{thm:dag-votes-secure}
Consider executions of LMD GHOST with \newmechanism, where all attestors are rational, and at any given slot, over $1+(W/2)$ attestors are solo attestors.
Suppose that the slot leaders are rational, except the leader of some slot $t+1$, which is adversarial.
Then, there exists a subgame perfect Nash equilibrium, where the protocol satisfies security (\ie, safety and liveness).
%
\end{theorem}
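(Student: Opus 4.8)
The plan is to exhibit one strategy profile $\sigma^*$ under which every rational player behaves like an honest validator (with the timing forced by Assumptions~2 and~4), to argue that the execution it induces is safe and live, and then to verify subgame perfection by a node‑by‑node best‑response check. Under $\sigma^*$, a rational slot $s$ leader proposes at time $3s$ a block on top of the current LMD GHOST head, including all pending transactions and every valid vote and slot $s-1$ evidence in its view; a rational slot $s$ attestor broadcasts at time $3s+1$ a head vote for the current LMD GHOST head, and, as a member of slot $s$'s committee, broadcasts at time $3s-1$ an evidence for every slot $s-1$ vote then in its view (Assumption~2). On this path the canonical chain is exactly $B_0, B_1, \dots$, each non‑adversarial $B_s$ carrying weight $W$, and the adversarial slot $t+1$ leader cannot rewrite this prefix no matter what it does: with no proposer boost (Assumption~5), any block it proposes on a fork that conflicts with the head competes with a sibling subtree of weight at least $W$ while itself having weight $0$, so it is never selected by the fork choice. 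This gives safety; and since at most one slot fails to extend the canonical chain while rational leaders keep including pending transactions, it gives liveness. A companion observation closes the loop on \newmechanism itself: an on‑time slot $s$ vote is delivered before $2\Delta$ into slot $s$, hence evidenced by all solo slot $s+1$ attestors, of which there are more than $1+(W/2)$, so it is timely and — once a later rational leader includes it, as one always will once it has been broadcast — rewarded.

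For subgame perfection, fix any history and the subgame it roots, and assume every non‑deviating player follows $\sigma^*$ in the continuation. A leader's block enters the canonical chain only if it extends the block the continuation keeps as head; since each future slot adds $W$ votes to the then‑current head and there is no proposer boost, the head is sticky, so proposing anywhere but on the current head forfeits all inclusion reward, while including more votes and evidences only increases it — hence $\sigma^*$ is optimal for a leader. An attestor is rewarded only if its vote is included, correct, and timely: inclusion is eventually guaranteed by a future rational leader once the vote is broadcast, regardless of anything else the attestor does; correctness forces the vote to point at the current LMD GHOST head, \ie, the highest block of slot at most $s$ on the canonical chain; and timeliness, which now requires more than $W/2$ evidences (Assumption~1), is attainable exactly for on‑time, on‑head votes and unattainable otherwise, because the solo slot $s+1$ attestors evidence only votes seen by $2\Delta$ into slot $s$ (Assumption~2) and number more than $1+(W/2)$, while the complementary non‑solo attestors, at most $(W/2)-1$ in number, cannot by themselves supply a majority of evidences. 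The inactivity leak (Assumption~3) makes the penalty for a non‑timely vote strict, so every deviation is strictly worse. The same weight‑and‑evidence bookkeeping shows that a coalition of at most $(W/2)-1$ attestors can neither move the fork‑choice head nor push an off‑time vote to timely status, so no such coalition has a profitable joint deviation either, which yields the strong‑Nash‑within‑each‑subgame claim stated just before the theorem. Combining the path analysis with this best‑response check gives a SPNE in which the protocol is secure.

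I expect the hardest part to be making the subgame‑perfection argument against the adversarial slot $t+1$ leader watertight across all off‑path histories, since this is precisely where one must certify that the commitment attacks of Sections~\ref{sec:solo-validators-simple},~\ref{sec:solo-validators-extended}, and~\ref{sec:selfish_mining_game} no longer succeed. The crux is that under \newmechanism the timeliness of a slot $t$ vote is decided by the slot $t+1$ committee's evidences rather than by the slot $t+1$ leader's inclusion choice: the adversary controls no attestor, the solo attestors form a strict majority of every committee, and a vote excluded from the adversarial block $B_A$ is still included and still timely once a later rational leader carries it together with the already‑broadcast evidences. Consequently no threat the adversary can credibly commit to — withholding vote or evidence inclusion in $B_A$, or releasing a hidden fork in the selfish‑mining style — changes the best response of even a single solo attestor, and the recursive leverage that powered the extended attack (slot $t$ rewards depending on the slot $t+1$ leader, whose rewards depend on the slot $t+2$ leader, and so on) is severed. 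Pushing this through uniformly over every history, including those in which earlier rational leaders or a bounded coalition of attestors have already deviated, and checking that the ``$>1+(W/2)$'' bound is exactly what makes both the fork‑choice and the evidence‑count comparisons go through, is where most of the work lies.
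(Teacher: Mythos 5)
Your proposal is correct and follows essentially the same route as the paper: exhibit the profile where leaders propose and attestors vote for the canonical tip, argue via the $>1+(W/2)$ solo-attestor majority, the inactivity-leak penalty, and the absence of proposer boost that compliant votes are the only timely-and-correct (hence rewarded) ones, and establish subgame perfection by a backward-induction/one-shot-deviation check. The paper merely packages the observation you make inline (on-time votes are evidenced by the solo majority, so rational attestors broadcast on time and all past votes are public at each slot's start) as a standalone lemma before the per-slot attestor and leader subgame analysis.
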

\begin{proof}
We will show the existence of a subgame perfect Nash equilibrium (SPNE), where the solo attestors vote for the block at the tip of the canonical chain at the voting time of their respective slots, and the rational slot leaders create blocks at the tip of the canonical chain at all slots.
In this setting, the adversarial leader cannot cause a reorg of the canonical chain, as its block will fail to gather any votes, if it is not placed at the tip of the canonical chain at slot $t+1$.

For backward induction, suppose that rational leaders and attestors after some slot $s$ follow the prescribed strategy above.
%
\begin{lemma}
\label{lem:all-observe}
For any slot $s$, votes of all attestors of the slots less than $s$ are observed by the slot $s$ attestors by the start of slot $s$.
Moreover, all attestors send timely votes.
\end{lemma}
\begin{proof}
By assumption (1), a slot $s'<s$ vote is considered timely only if there are over $W/2$ unique slot $s'$ evidences by the slot $s'+1 \leq s$ attestors.
Now, by assumption (3), a rational attestor incurs a loss if its vote is not considered timely.
In contrast, its payoff for a timely vote is zero in the worst case.
More specifically, a rational attestor gains a positive reward if its vote is both correct and timely; if the vote is timely but not correct, its payoff becomes zero.
Therefore, in all slots $s'<s$ (as well as later slots), rational attestors ensure that their vote will be considered timely, \ie, there will be more than $W/2$ unique slot $s'$ evidences from the attestors of slot $s'+1$.
This implies that a solo slot $s'+1$ attestor must have observed the slot $s'$ vote $\Delta$ time before the start of slot $s'+1$, which by assumption (2), is broadcast to and heard by all other attestors by the start of slot $s'+1 \leq s$.
Then, as all attestors are rational, votes of all attestors of the slots less than $s$ are observed by the slot $s$ attestors by the start of slot $s$.
\end{proof}
\noindent
\textbf{Subgame for slot $s$ attestors:} By Lemma~\ref{lem:all-observe}, all solo slot $s$ attestors observe all votes sent for slots $<s$ by the start of slot $s$ (\ie, all past votes are public).
Therefore, when solo slot $s$ attestors vote for the tip of the canonical chain at the voting time of slot $s$, they end up voting for the same block $B$, which is at the tip of the canonical chain according to the votes from slots $\leq s$.
Now, by the assumption that later slot leaders and attestors follow the prescribed strategy, block $B$ and its descendants gain over $1+(W/2)$ votes at each slot $s' \in [s,t]$, thus remain in the canonical chain.
We note that even if $s=t$, the adversary cannot reorg $B$; since (i) by the start of slot $t+1$, all votes of the previous slots will be public by Lemma~\ref{lem:all-observe}, and (ii) $B$ will be at the tip of the canonical chain with an advantage of at least one vote at that time, implying that the solo attestors of slot $t+1$ will not vote for the adversary's block unless it extends $B$.

Now, by assumption (4), block $B$ is the latest block on the canonical chain from slots $\leq s$.
Thus, the votes of the solo slot $s$ attestors are correct (and timely by Lemma~\ref{lem:all-observe}), implying that they receive a reward of $r>0$ for their votes.
In contrast, if a solo slot $s$ attestor does not vote for $B$, its vote would not be correct, implying a reward of $\leq 0$.
Hence, there exists an equilibrium of the subgame of the slot $s$ attestors, where all solo slot $s$ attestors send timely votes for the canonical chain tip in their views.

\smallskip
\noindent
\textbf{Subgame for slot $s$ leader:} Suppose all attestors of the slots $s' \geq s$ and the leaders of the slots $s' > s$ follow the prescribed strategy above, and consider the payoff of the slot $s$ leader.
By Lemma~\ref{lem:all-observe}, the slot $s$ leader observes all votes sent for slots $<s$ by the start of slot $s$.
Then, if the leader places its block $B$ at the tip of the canonical chain, by the assumption on the slot $s' \geq s$ attestors and the slot $s'>s$ leaders, $B$ and its descendants remain part of the canonical chain, implying a reward of $R>0$ for the leader (see the reasoning in the paragraph above).
Otherwise, the leader's reward would be $0$.
Therefore, there exists an equilibrium of the subgame of the slot $s$ leader, where it proposes the slot $s$ block at the tip of the canonical chain in its view.

Finally, by backward induction, all solo attestors vote for the block at the tip of the canonical chain, and rational slot leaders create blocks at the tip of the chain in all slots up to the slot $t+1$.
Moreover, all of these blocks remain as part of the canonical chain.
This implies that the prescribed strategy is an SPNE, where the protocol satisfies safety and liveness.
\end{proof}

\subsection{\NewMechanism Made Practical}
\label{sec:practical}
We now present a practical version of \newmechanism and compare its performance with Ethereum's reward mechanism.
Recall that an attestation includes a signature on the head, source checkpoint, and target checkpoint votes.

\subsubsection{ Current version of Ethereum} 
Let $N$ denote the total number of validators\footnote{As of October 20, 2024, there are 1,073,375 active Ethereum validators~\cite{ethereum_info}.}. Each slot has $N/32$ attestors, 
which are divided into $64$ sub-committees, each with $\frac{N}{32*64}$ attestors.
A subset of the sub-committee is selected to be the aggregators of the sub-committee\footnote{Average number of aggregators in each sub-committee is 16.}. 
These aggregators are responsible for aggregating the attestations (signatures) of their sub-committee. 
Ethereum uses the BLS signature for attestations due to its key-homomorphism~\cite{bls}.
In the ideal scenario, where all attestations of the sub-committee members are the same, these attestations can be aggregated into a single aggregated attestation (aggregate for short). 
Each aggregate contains an aggregated signature and an aggregation list. 
The $i^\text{th}$ element in the aggregation list is $1$ if the signature of the $i^\text{th}$ attestor in the corresponding sub-committee is included in the aggregated signature, and $0$ otherwise. 
In the ideal scenario, there exist $64$ aggregates per slot in total. Each block can include up to $128$ aggregates, where
the slack between $64$ and $128$ can cover attestations for different blocks or prior slots\footnote{These attestations may be eligible to get the source/target rewards.}.

\subsubsection{ Practical version of \newmechanism} \label{sec:practical_DAG_votes}
In our solution introduced in Section~\ref{sec:practical_DAG_votes}, all the attestors in each slot are responsible for generating evidences. 
This can increase the communication and storage overheads drastically. 
To address this issue, we propose a practical version of \newmechanism. 
In this version, a subset of attestors is sampled in each slot to serve as the \emph{aggregators} responsible for generating the evidence. In contrast to the main version, only the aggregators produce the evidences in the practical version.

Let sub-committee $(t,i)$ represent the $i^\text{th}$ sub-committee of slot $t$, $N^\text{att}$ denote the number of attestors per sub-committee, $N^\text{agg}$ denote the number of aggregators per sub-committee. An attestation must gather over $N^\text{limit}$ evidences to be considered timely.
Each attestor in sub-committee $(t+1,i)$ sends an attestation at time $3(t+1)+1$. 
Then, the aggregator $j \in [N^\text{agg}]$ in sub-committee $(t+1,i)$ takes the following actions at time $3(t+1)+2$:
\vspace{-2 pt}
\begin{itemize}[leftmargin=*]
    \item Collect the individual attestations of sub-committee $(t+1,i)$ members and aggregate them into a single aggregated attestation (aggregate) denoted by $\mathsf{Ag}^j_{t+1,i}$.
    \item Generate an evidence for slot $t$ attestations if the block of slot $t+1$ is not published or excludes some of the slot $t$ attestations.
    Collect the 
    aggregates by the aggregators of sub-committee $(t,i)$, \ie, $\mathsf{Ag}^k_{t,i}$ for $k \in [N^\text{agg}]$, select and sign the one with the greatest number of signatures 
    to generate the $j^\text{th}$ sub-committee $(t,i)$ evidence denoted by $E^j_{t,i}$.
    \item Broadcast slot $t+1$ aggregate $\mathsf{Ag}^j_{t+1,i}$ and slot $t$ evidence $E^j_{t,i}$.
\end{itemize}
If slot $t$ attestations are not included in the slot $t+1$ block, any block proposed at slots $t'>t+1$ can include the slot $t$ evidences. 
Each evidence is composed of 3 parts: the signature of the evidence generator, the aggregated attestation signature, and an aggregation list, 
specifying which validators' attestations are included in the aggregated signature.

The parameters $N^\text{agg}$ and $N^\text{limit}$ 
should be configured to ensure that each sub-committee has more than $N^\text{limit}$ non-adversarial available aggregators with high probability. 
This ensures the adversary cannot prevent the timeliness verification of attestations and cannot prove delayed votes as timely by publishing their evidences late. 

%
\subsection{Implementation of the DAG Votes} \label{sec:implementation_DAG}
Due to the difficulty of modifying existing testnets for LMD GHOST, we decided to estimate the overhead ratio of evidence creation from limited experiments.
In this context, we have implemented a simplified version of our DAG votes mechanism in Python~\cite{DAG_Vote_Implementation}, utilizing BLS signatures~\cite{blspy} over the BLS12-381 curve. This implementation features a peer-to-peer local blockchain running across multiple Amazon Web Services (AWS) EC2 instances. One instance serves as a Bootstrapping node to initialize the blockchain and assist other nodes in discovering peers.

Our implementation includes two configurations: the Ethereum basic implementation and the DAG votes mechanism. The Ethereum basic implementation replicates the behavior of Ethereum's LMD GHOST protocol, incorporating functionalities such as block proposal, vote proposal, and aggregation. For any arbitrary slot duration of \(3\Delta\) seconds, the slot \(t\) block is proposed at time \(3t\Delta\), the slot \(t\) votes are broadcast at time \((3t+1)\Delta\), and the slot \(t\) aggregations are broadcast at time \((3t+2)\Delta\).
\begin{table*}[th!]
    \centering
    \caption{Comparison of Basic Ethereum Implementation and DAG Votes Mechanism}
    \label{tab:comparison}
    \begin{tabular}{|l|c|c|}
    \hline
    \textbf{Metric}                            & \textbf{Basic Ethereum} & \textbf{DAG Votes Mechanism} \\ \hline
    Block size (excluding transaction payload) & 582 bytes               & 745 bytes                    \\ \hline
    Aggregation size                           & 400 bytes               & 804 bytes                    \\ \hline
    Average aggregation creation time          & 0.0138 s                & 0.0142 s                     \\ \hline
    Average aggregation verification time      & 0.0076 s                & 0.0117 s                     \\ \hline
    \end{tabular}
\end{table*}

The DAG votes mechanism extends this implementation by introducing an evidence generation and verification process to support the DAG votes protocol. In this setup, aggregators not only aggregate votes from the same slot but also create evidence for votes from the previous slot. Specifically, the slot \(t\) aggregation includes the aggregated signature of slot \(t\) votes and a piece of evidence for slot \(t-1\) votes. In this setup, block proposers are tasked with including the aggregated evidence in the blocks they produce.

\noindent \textbf{Comparison of our Ethereum basic implementation and DAG votes mechanism}:  
We compare the performance of the basic Ethereum implementation with the implementation extended by our DAG votes mechanism in Table~\ref{tab:comparison} to analyze the overhead of our solution. 
This comparison includes four metrics: block size, aggregation size, aggregation creation time, and aggregation verification time. The comparison results are obtained by running our implementation on 5 AWS EC2 instances. Each instance is of the \texttt{t3.micro} type, featuring 2 vCPUs and 1.0 GiB of memory. 
At each slot, one validator is randomly selected to propose a block, while all validators submit votes and aggregations. The number of aggregators per slot can be set in our implementation; however, we set it to be equal to the number of validators to ensure all validators create evidence for the previous slot's votes.
A slot $t$ aggregation in our DAG votes mechanism, in addition to including the aggregated votes of the slot $t$ attestors, includes a piece of evidence, which is the aggregator's signature on the aggregated votes of slot $t-1$. The inclusion of this extra piece of evidence compared to basic Ethereum increases the size of the aggregation by 404 bytes. A slot $t+1$ block proposer in our DAG votes mechanism, instead of including the aggregated slot $t$ votes as in basic Ethereum, includes the aggregated pieces of evidence generated in slot $t$. As the aggregated evidence has an extra signature compared to the aggregated votes in basic Ethereum, the size of the block increases by 163 bytes in our solution. 
The average aggregation creation and verification times are calculated based on the time taken by each validator in each slot throughout the local blockchain runtime.


\noindent \textbf{Performance analysis of practical DAG votes deployed in the Ethereum mainnet}:
We provide estimates for the storage, computation, and communication overheads of the practical DAG votes mechanism (introduced in Section~\ref{sec:practical_DAG_votes}) over the current Ethereum protocol if it were to be deployed in the current Ethereum mainnet in two scenarios: (i) the optimistic scenario, where the majority of slot $t+1$ aggregators sign the same slot $t$ aggregate, and ii) the worst-case scenario, where the slot $t+1$ aggregators sign different slot $t$ aggregates, in Appendices~\ref{sec:appendix-optimistic} and~\ref{sec:appendix-worst-case} respectively.

\section{Conclusion}
\label{sec:appendix_stackelberg}

As our attacks show, the ability for adversaries to commit to conditional courses of action presents challenges for the design of protocol mechanisms. 
This holds more so given the availability of commitment devices within the protocol's enclosure, the devices possibly executing the protocol itself. Recent literature has uncovered such attacks on fee market mechanisms, a core function of blockchains~\cite{landis2023stackelberg}, naming the deployment of a smart contract committing the author to certain actions of a ``Stackelberg attack'', after the classic model of leader-followers games~\cite{von1952theory}.
In our attacks, we assumed that the adversary is the first to commit to any strategy and communicate a credible threat to the other parties.
Indeed, the order of commitments is known to be important: players will often gain 
advantage from the ability to commit \textit{first}~\cite{schelling1960strategy, bono2014game}.

The attacks presented in this paper offer a \textit{better} response compared to the honest protocol specifications for some protocol participants, in the presence of commitment devices. We leave for future work deciding whether these attacks constitute \textit{best} responses for rational attackers;  which equilibria exist if any, when all participants benefit from the commitment ability (\eg, proposers attempting to reorg one another in the spirit of undercutting attacks~\cite{carlsten2016instability}); and the importance of the sequence of play to the determination of outcomes.

As Ethereum is an ever-evolving protocol, in Appendix~\ref{sec:appendix_discussion}, we analyze the impact of potential future changes in the protocol (\eg, block-slot voting~\cite{block_slot_voting}, secret leader election~\cite{secretLeaderElection}, proposer-builder separation~\cite{PBS} and single-slot finality~\cite{singleSlotFinality}) on our attacks, and observe that none of these modifications can mitigate them.
Similarly, in Appendix~\ref{sec:appendix-simple-attack-wo-boost}, we present a version of the simple attack without proposer boost, lest the future changes above obviate the need for proposer boost.

\section{Responsible Disclosure}
Since our work exposes vulnerabilities in the existing system, we collaborated closely with researchers at the Ethereum Foundation, a non-profit organization supporting the Ethereum ecosystem. Upon identifying these attack vectors, we promptly contacted the Ethereum Foundation, whose researchers supported our work by providing access to datasets and other resources.

\section{Acknowledgment}
We would like to thank Toni Wahrstätter for his assistance in gathering data used to quantify the MEV reward in Ethereum.

\bibliographystyle{splncs04}
\bibliography{references}

\appendices
\section{Extended Preliminaries}

\subsection{The Ethereum Reward Mechanism}
\label{sec:reward_mechanism_background_appendix}

Validator rewards in Ethereum originate from two sources: the execution layer reward and the consensus layer reward.
The execution layer reward refers to the total reward of a slot leader for including transactions in his block. 
This can be either the sum of all priority fees included in the block or the amount that a block builder pays for buying the leader's block space in a proposer-builder-separation auction~\cite{heimbach2023ethereum}. 
The consensus layer reward includes the incentivization reward 
paid out to the validators for their participation in the consensus protocol execution.
In this paper, 
we focus on the impact of consensus layer reward on the validators' behavior. 
Therefore, we 
describe the consensus layer reward mechanism below.

According to the Altair upgrade~\cite{altair} of the beacon chain, a validator's consensus layer reward is made up of three sub-rewards~\cite{rewards}:

\noindent
    \textbf{Attestation reward:} A validator receives this for attesting to its view of the chain. 
    Let validator $\val$ be a member of the committee assigned to slot $s$.
    It sends an attestation consisting of 3 different votes:
    (i) the vote for a source checkpoint specified by Casper FFG,
    (ii) the vote for a target checkpoint specified by Casper FFG, and
    (iii) the \emph{head vote} for the block specified by LMD GHOST (\cf~Section~\ref{sec:lmd-ghost-background}).
    
    \noindent
    \textbf{Proposing reward:} A slot leader receives this for proposing a block if the proposed block gets included in the beacon chain. 
    
    \noindent
    \textbf{Sync committee reward:} Every 256 epochs, a group of 512 validators is randomly selected as the members of the sync committee. 
    They sign every beacon chain block header published during the assigned 256 epochs to help light clients track the beacon chain. 
    Sync committee members receive a reward for every slot in which they sign a block.


The attestation rewards constitute the largest part, \ie, $84.4\%$, of the consensus layer rewards~\cite{rewards}.
For a validator to receive the attestation reward,
the attestation should be included in the beacon chain and be correct and timely as defined below~\cite{rewards}. 

Correctness implies that the attestation should agree with the view of the block proposer that includes the attestation in its block. Let an attestation belong to slot $s$ in epoch $e$. Then, formally,

    \smallskip
    \noindent
    \textbf{The source checkpoint vote} is correct if the source checkpoint vote matches the justified checkpoint of epoch $e$ (\eg, the block root of the starting slot in epoch $e-1$ if epoch $e-1$ was justified).
    
    \smallskip
    \noindent
    \textbf{The target checkpoint vote} is correct if the source checkpoint vote is correct and the target checkpoint vote matches the target checkpoint of epoch $e$, \ie, the block root of the starting slot in epoch $e$.
    
    \smallskip
    \noindent
    \textbf{The head vote} is correct if the target checkpoint vote is correct and the head vote matches the head of the beacon chain at slot $s$, namely the block root of slot $s$~\cite{altair}.
    
The head vote correctness rule implies that if assuming block $B_{t+1}$ is the finalized block of slot $t+1$ in the beacon chain, then a head vote of slot $t$ is correct if and only if the head vote matches the parent root of block $B_{t+1}$.

Timeliness means that the attestation belonging to slot $s$ should get included in the beacon chain within a pre-defined slot range after slot $s$.
This range differs for the three votes included in the attestation. 
Let $t_1$ represent the slot of an attestation (by a validator assigned to the slot $t_1$ committee), 
and $t_2$ represent the slot at which the attestation gets included in the beacon chain. 
Then,
\begin{itemize}
    \item Head vote is timely if $t_2=t_1+1$.
    \item Source checkpoint vote is timely if $t_1+1 \le t_2 \le t_1+5$.
    \item Target checkpoint vote is timely if $t_1+1 \le t_2 \le t_1+32$~\cite{altair}.
\end{itemize}
The tightest slot range belongs to the head vote timeliness: if an attestation does not get included in the next slot’s block, the validator loses the head vote reward. 
To incentivize leaders to include attestations in their blocks, the proposing reward is made proportional to the total reward of the attestations and the total reward of the sync committee outputs included in the block.

Note that when the adversary reorgs a slot $t$ block $B_t$ with its slot $t+1$ block $B_{A}$, it can also include the slot $t-1$ attestations that were previously included by $B_{t}$, even by non-compliant attestors, in $B_A$ to obtain the inclusion reward for the slot $t-1$ source and target checkpoint votes.

\subsection{Slashing}
\label{sec:slashing}
Ethereum imposes financial punishments for certain detectable protocol violations~\cite{casperffg,slashing}.
If a validator is observed to have
\begin{itemize}
    \item proposed two different blocks for the same slot, or
    \item sent two or more head votes with the same slot number for distinct blocks, or
    \item sent two pairs of votes for source and target checkpoints such that one pair \emph{surrounds} the other,
\end{itemize}
then, the validator is \emph{slashed}: it is removed from the validator set, and its stake is partially burned.
Slashing rules are typically used to enforce a notion of safety called \emph{accountable safety}~\cite{casperffg,snapchat,forensics}:
\begin{definition}
\label{def:accountable-safety}
A consensus protocol satisfies accountable safety with resilience $f$, if when there is a safety violation, (i) at least $f$ adversarial validators can be identified as protocol violators with the help of a cryptographic proof, and (ii) no honest validator is identified except with negligible probability.
\end{definition}
Examples of consensus protocols with accountable safety include Ethereum, PBFT~\cite{pbft}, Tendermint~\cite{tendermint,tendermint-accountability} and HotStuff~\cite{hotstuff}.
They remain safe if there are fewer than $f$ adversarial validators (as in traditional safety); since then it would not be possible to detect $f$ adversarial validators.
These protocols come with a set of slashing rules as in Ethereum; such that if a validator violates them, then it is identified as a protocol violator and slashed.
In the following sections, we assume that the rational (as well as honest) validators never satisfy any of the slashing rules of their protocols.

\subsection{Reorgs}
\label{sec:appendix-reorgs}

Suppose an Ethereum block $B$ proposed at some time $t'$ is no longer part of the canonical chain for the first time at some time $t \geq t'$.
Then, we say that the block $B$ is \emph{reorged} at time $t$.
If $B$ proposed at some time $t'$ never becomes part of the canonical chain, it is said to be reorged at time $t'$.
Reorging of a single honestly proposed block need not imply a liveness violation.
For instance, Nakamoto consensus~\cite{bitcoin,backbone} is prone to reorgs as shown by the selfish mining attack~\cite{selfish_mining}.
Yet, in every sufficiently large period, an honest block enters and stays in the longest chain indefinitely implying liveness under an honest majority.
Conversely, any attack that can reorg all blocks proposed during an arbitrarily long period $T$ results in a liveness violation.
If these blocks are reorged after the period $T$, 
the attack would also violate safety under any reasonable confirmation rule.
Note that this does not imply a safety violation for the Ethereum blocks \emph{finalized} by Casper FFG. However, the safety of LMD GHOST is necessary for the stability of the chain and liveness of Casper FFG.

Ethereum is said to satisfy reorg resilience if an honest or rational leader's block enters and stays in the canonical chain at all times.
%
Reorg resilience implies safety and liveness of the protocol assuming that there are frequent honest or rational leaders (as is the case in RANDAO) and their blocks are not empty.

\section{Attack Quantification}\label{sec:appendix_attack_quantification}

In this section, we provide an estimate of the average reward that the adversary receives via inclusion rewards and MEV capture upon a successful attack.
We also quantify the cost of an unsuccessful attack due to the exclusion of non-compliant votes.

We first analyze the attestation inclusion reward. Each attestation consists of three different votes: a head vote, a source vote, and a target vote. If there is no attack, a slot $t+1$ block collects the inclusion reward for all three votes from a slot $t$ attestation. 
Given the number of active Ethereum validators $N = 1,073,375$ (as of October 20, 2024~\cite{ethereum_info}) and the reward distribution formula presented in~\cite{rewards}, the average attestation inclusion reward for a block, in the case of no attack, is equal to $0.0446 \, \text{ETH}$ (107 EUR\footnote{As of 23 October 2024, the price of 1 ETH is approximately 2400 EUR~\cite{eth_price}.}). 
If, under the simple attack scenario, the slot $t$ block $B_t$ is reorged, the adversary's block $B_A$ in slot $t+1$ can include the attestations from both slots $t-1$ and $t$. 
In this case, $B_A$ can collect the inclusion reward for all three votes of a slot $t$ attestation. However, it can only collect the inclusion reward for the source and target votes of a slot $t-1$ attestation, as the head vote from slot $t-1$ would not be considered timely if included in the slot $t+1$ block $B_A$. 
Therefore, the average attestation inclusion reward for an adversarial block under a successful simple attack, where all slot $t$ votes are compliant, can increase to $0.0777 \, \text{ETH}$ (186.4 EUR).

Besides the attestation inclusion reward, the execution layer reward can also increase under a successful attack. 
Figure~\ref{fig: MEV_reward_comparison} depicts the average MEV boost (execution layer reward) over the period from October 2023 to September 2024 for blocks whose previous block was either reorged or missing, compared to blocks whose parent block was proposed in the previous slot (data obtained from~\cite{slot_missed,ethereum_info}).
The former captures the reward of an adversarial block under a successful attack, while the latter gives an estimate for the adversarial block reward under a failed attack. 
As can be seen in Figure~\ref{fig: MEV_reward_comparison}, a successful simple attack can increase the MEV reward of a block from an average value of $0.082 \, \text{ETH}$ (196 EUR) to an average value of $0.12 \, \text{ETH}$ (288 EUR).
\begin{figure}[!t]
    \centering
    \centering
    \includegraphics[height=1.5in]{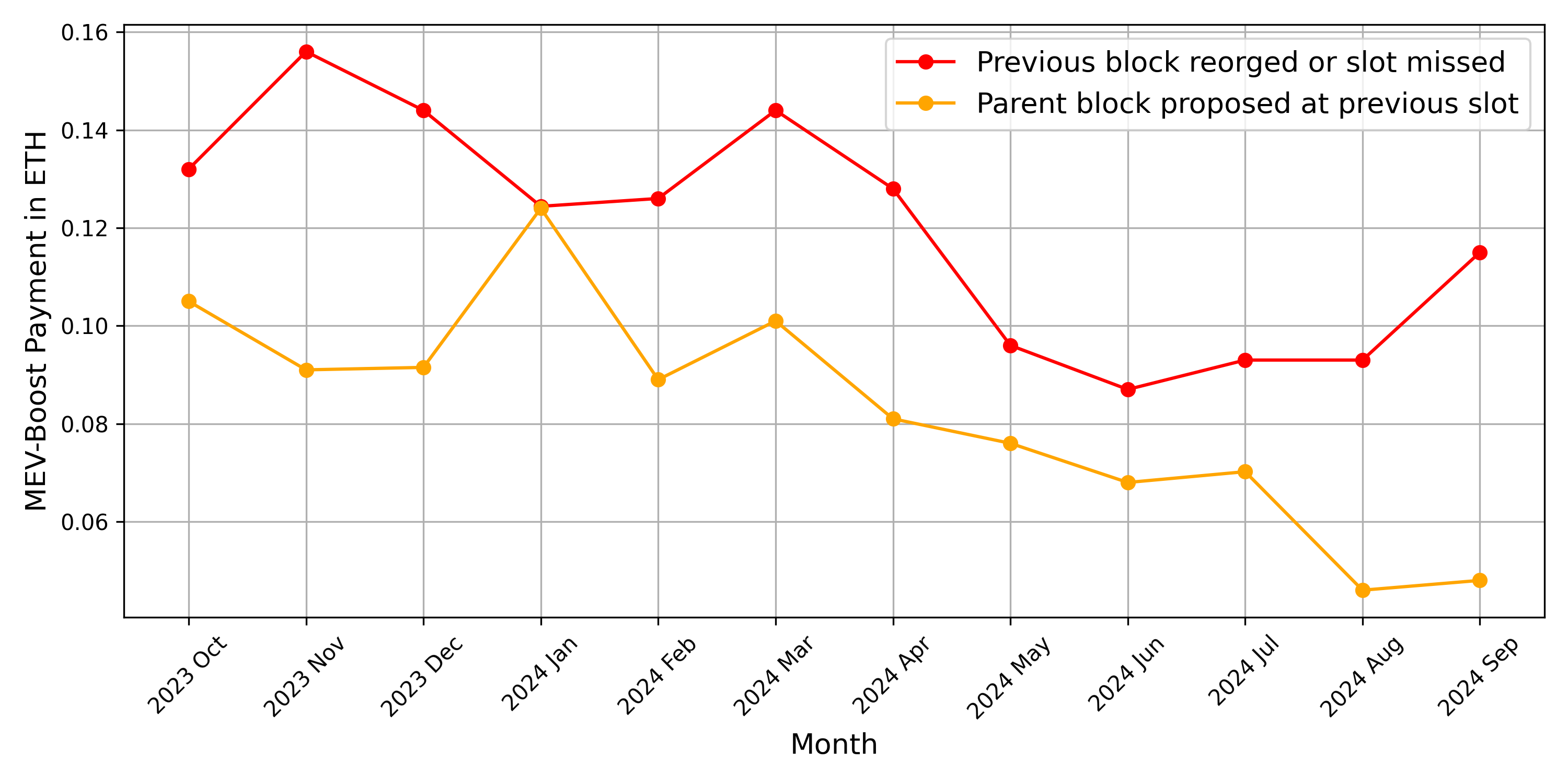}
    \caption{MEV reward comparison under a successful and failed simple attack.}
    \label{fig: MEV_reward_comparison}
\end{figure}
Therefore, considering both the inclusion and MEV rewards, the simple attack can increase the block reward from an average value of $0.1246 \, \text{ETH}$ to an average value of $0.1977 \, \text{ETH}$, representing an increase of $0.0711 \, \text{ETH}$ ($56\%$) in the block reward (170.6 EUR). Note that if a staking pool manages to reorg the slot $t$ block, it will lose the head vote rewards of those slot $t-1$ attestors that are under the control of the staking pool. Considering the largest staking pool\footnote{As of October 23, 2024, the largest staking pool is LIDO, holding $27.8$ of stake shares.}, the maximum amount of head vote rewards it loses is equal to $0.0225 \, \text{ETH}$. Therefore, by reorging a single block, a staking pool can obtain an average extra reward of at least $0.0486 \, \text{ETH}$ (116.6 EUR).

If the simple attack fails and the slot $t$ block is not reorged, $B_A$ in slot $t+1$ cannot include any attestations from slot $t-1$ and should also exclude the non-compliant attestations from slot $t$. 
If all slot $t$ attestations are non-compliant, the adversary loses, on average, the attestation inclusion reward of $0.0446 \, \text{ETH}$, but it can still get the MEV reward of a block whose parent block was proposed in time.
Note that 
if the adversary starts the simple attack on a slot $t$, where both slots $t+1$ and $t+2$ are adversarial, it can include the non-compliant attestations of slot $t$ in its slot $t+2$ block. 
In this case, the adversary can still induce a credible threat for slot $t$ attestors while only losing the inclusion reward of slot $t$ head votes, which is, on average, equal to $0.0115 \, \text{ETH}$ (27.6 EUR).

\section{Proof of Theorem~\ref{thm:repeated-game-theorem}}
\label{sec:appendix-proof-extended-game}

\begin{proof}[Proof of Theorem~\ref{thm:repeated-game-theorem}]
Suppose all solo leaders and attestors follow the adversary's game rule and propose and vote for compliant blocks in slots $1, \ldots, p$.
We show that this is a subgame perfect Nash equilibrium by backward induction.
The extended game consists of subgames played by the leaders and attestors of the slots $i \in [p]$, denoted by $\texttt{G}^L_i$ and $\texttt{G}^\val_i$ respectively.

\begin{figure*}[h]
    \centering
    \begin{subfigure}[t]{0.48\textwidth}
        \centering
        \includegraphics[height=1in]{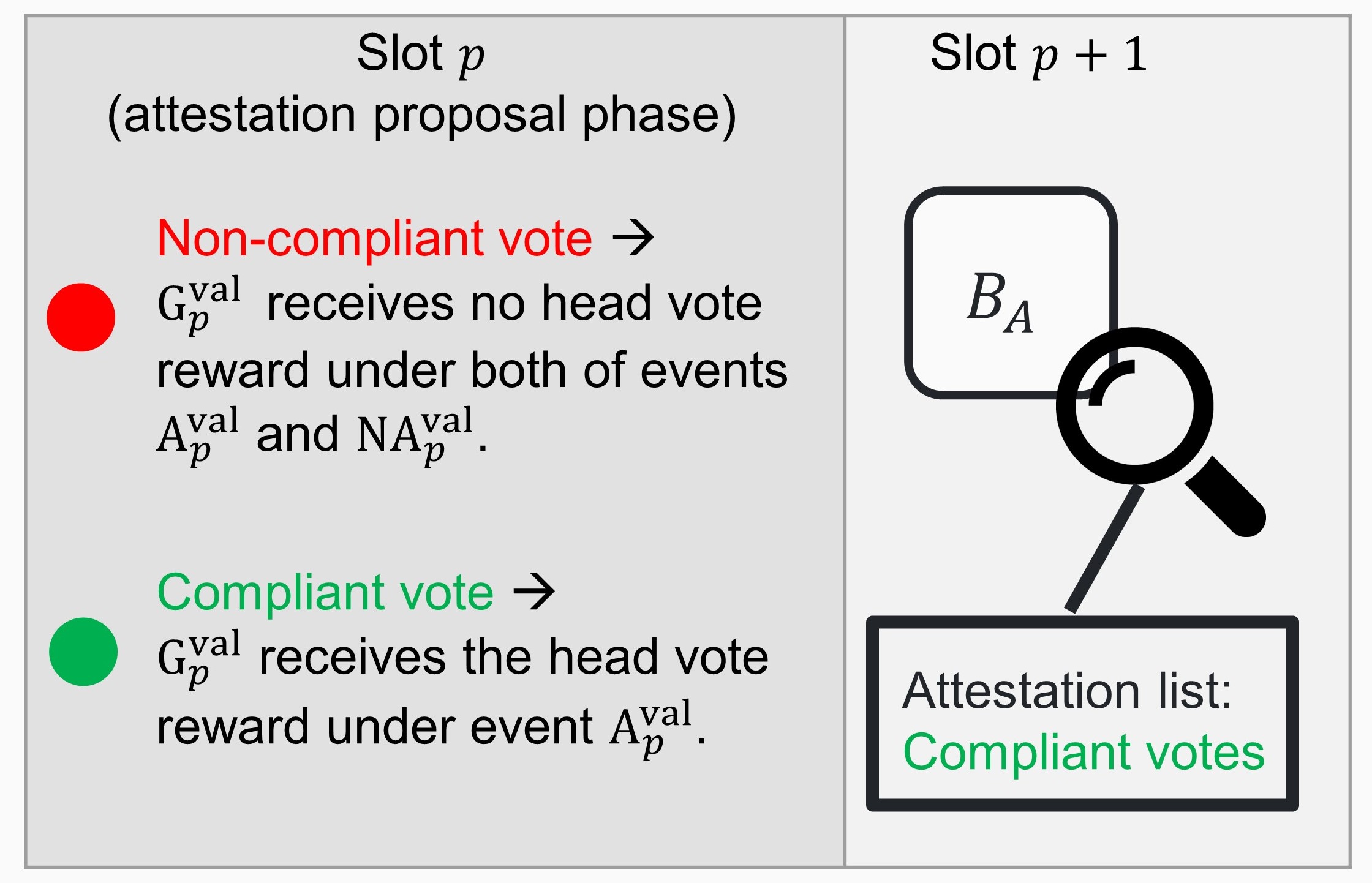}
        \caption{Subgame $\texttt{G}^\val_p$: $C^\val_p$ or $NC^\val_p$?}
        \label{fig:extended_game_subgame1}
    \end{subfigure}%
    \hfill
    \begin{subfigure}[t]{0.48\textwidth}
        \centering
        \includegraphics[height=1in]{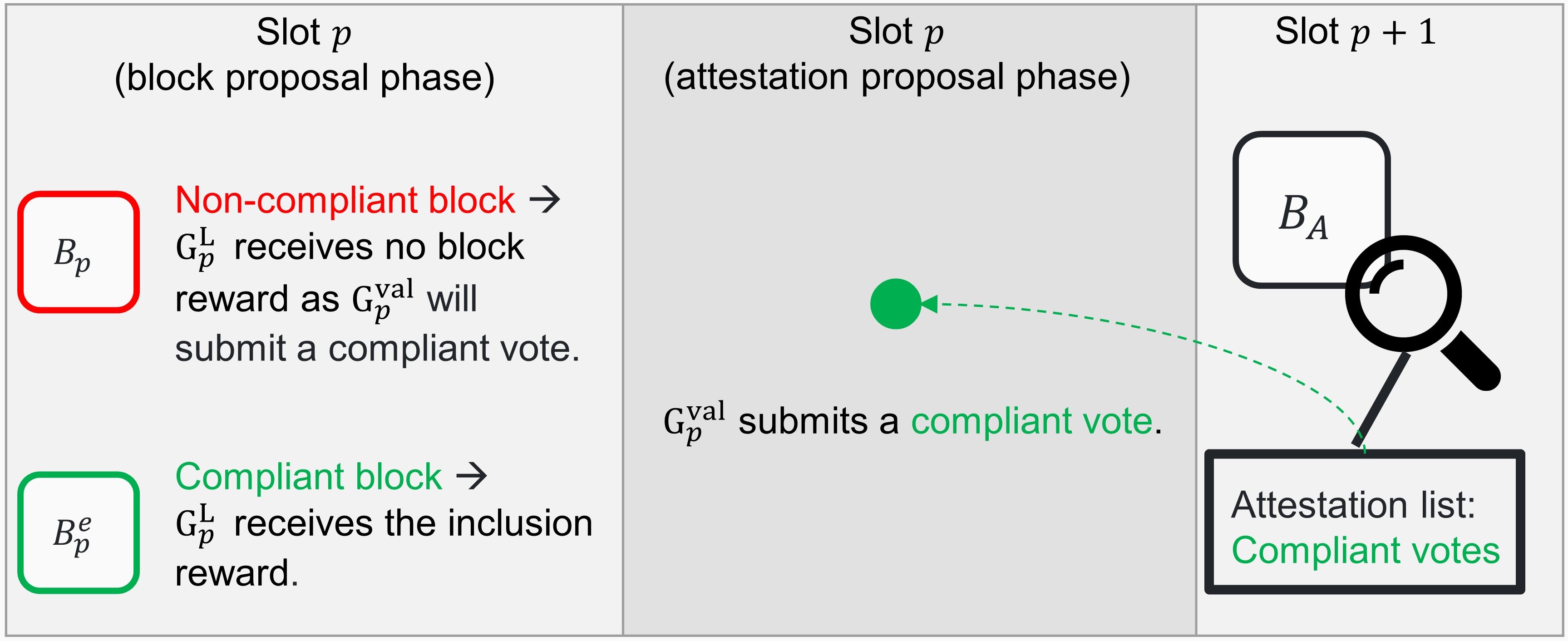}
        \caption{Subgame $\texttt{G}^L_p$: $C^L_p$ or $NC^L_p$?}
        \label{fig:extended_game_subgame2}
    \end{subfigure}
    
    \vskip\baselineskip 
    
    \begin{subfigure}[t]{0.48\textwidth}
        \centering
        \includegraphics[height=1in]{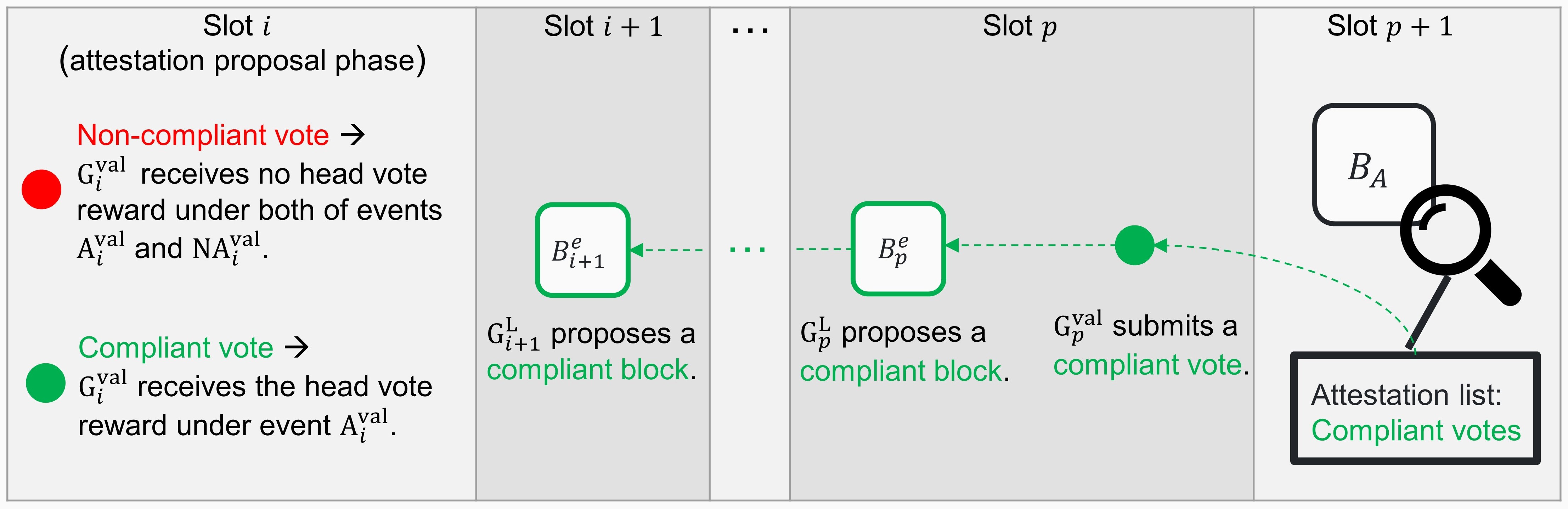}
        \caption{Subgame $\texttt{G}^\val_i$: $C^\val_i$ or $NC^\val_i$?}
        \label{fig:extended_game_subgame_2(p-i)+1}
    \end{subfigure}%
    \hfill
    \hfill
    \begin{subfigure}[t]{0.48\textwidth}
        \centering
        \includegraphics[height=0.9in]{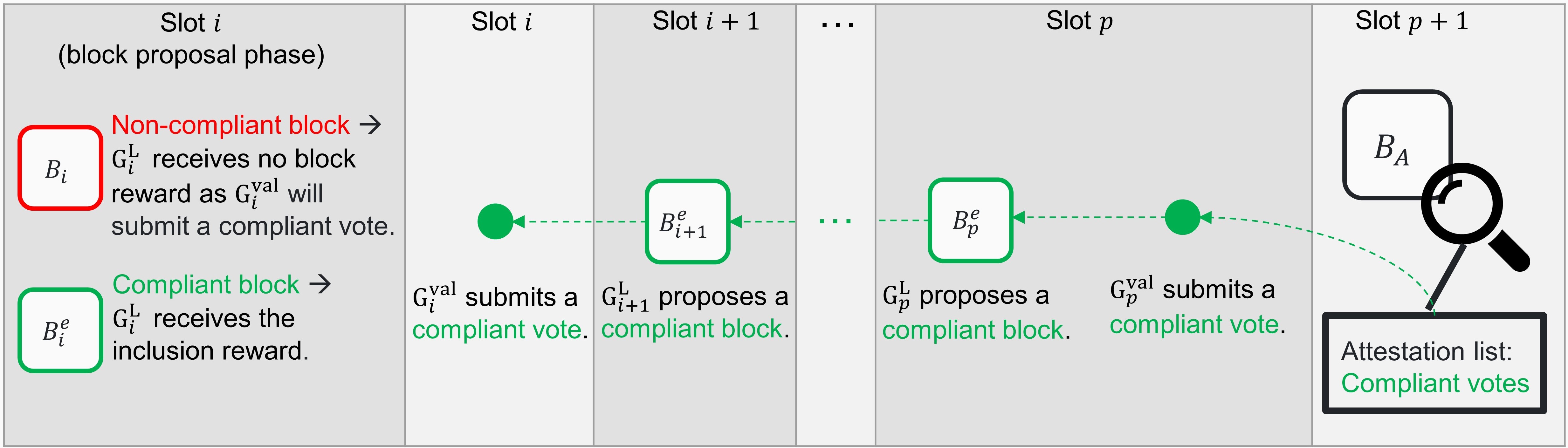}
        \caption{Subgame $\texttt{G}^L_i$: $C^L_i$ or $NC^L_i$?}
        \label{fig:extended_game_subgame_2(p-i)+2}
    \end{subfigure}
    
    \caption{Subgame perfect Nash equilibrium}
\end{figure*}

The payoff matrix of a solo slot $p$ attestor $\val$ in $\texttt{G}^\val_p$ is presented in Table~\ref{table:payoff_matrix-repeated_game_p} (Fig.~\ref{fig:extended_game_subgame1}).
\begin{table}[t]
    \centering
    \caption{Payoff matrix of a solo slot $p$ validator $\val$ in the extended game. 
    Let $C^\val_p$ denote the event that $\val$ sends a compliant vote.
    Let $A^\val_p$ denote the event that all slot $p$ attestors send compliant votes.}
    \begin{tabular}{|c|c|c|}
        \hline
          & $C^\val_p$ & $NC^\val_p$ \\
         \hline
         $A^\val_p$ & $r$ & $0$ \\
         \hline
         $NA^\val_p$ & $\leq r$ & $0$\\
         \hline
    \end{tabular}
    \label{table:payoff_matrix-repeated_game_p}
\end{table}
A non-compliant vote is never included in $B_A$ and cannot be timely, which implies a payoff of $0$ ($NC^\val_p$ column).
If $\val$ sends a compliant vote for some block $B^*$ ($C^\val_p$ column) and all slot $p$ attestors send compliant votes\footnote{In practice, the adversary can use the $\Wp$ boost of its block. Moreover, the blocks $\mathcal{A}$ tries to reorg will not all have the maximum $W$ votes, implying that not all of the $W$ slot $p$ votes will be needed.}, which must be for $B^*$ as ties are broken deterministically ($A^\val_p$ row), then upon running Alg.~\ref{alg.compliant}, $\mathcal{A}$ obtains the same block $B^*$ as its compliant tip and extends it with its block $B_A$.
In this case, $\val$ gains a payoff of $r$ ($A^\val_p$-$C^\val_p$ corner).
Hence, for $\val$, sending a compliant vote weakly dominates any other action, and it follows the game rule.

Suppose the slot $p$ attestors send compliant votes.
Then, the slot $p$ leader $L$'s payoff matrix in $\texttt{G}^L_p$ is presented in Table~\ref{table:payoff_matrix-repeated_game_p_leader} (Fig.~\ref{fig:extended_game_subgame2}).
Let $R$ represent the inclusion reward that a slot leader receives for proposing a block in the canonical chain.
If $L$ proposes a compliant block $B^e_p$, all slot $p$ attestors vote for it.
Then, the compliant tip obtained by $\mathcal{A}$ includes $B^e_p$ in its prefix, which implies a payoff of $R$ for $L$ ($C^L_p$ corner).
However, if $L$ proposes a non-compliant block, the block does not gather any slot $p$ votes.
In this case, the compliant tip obtained by $\mathcal{A}$ does not include $L$'s block in its prefix, implying a payoff of $0$ for $L$ ($NC^L_p$ corner).
Thus, proposing a compliant block $B^e_p$ weakly dominates proposing any other block, and $L$ follows the adversary's game rule.

Suppose all leaders and attestors of the slots $i+1, \ldots, p$ follow the adversary's game rule, \ie, cast compliant votes and propose compliant blocks.
Then, the payoff matrix of a solo slot $i$ attestor $\val$ in $\texttt{G}^\val_{i}$ is presented in Table~\ref{table:payoff_matrix-repeated_game_i} (Fig.~\ref{fig:extended_game_subgame_2(p-i)+1}).
Since the slot $i+1$ block is compliant, a non-compliant vote would not be included in the block of slot $i+1$, and thus cannot be timely, which implies a payoff of $0$ ($NC^\val_i$ column).

When all leaders and attestors of the slots $i+1, \ldots, p$ follow the adversary's game rule, if $\val$ sends a compliant vote for some block $B^*$, ($C^\val_i$ row) and all slot $i$ attestors send compliant votes (for $B^*$, $A^\val_i$ row), all of the blocks $B^e_{j}$, $j \in \{i+1,\ldots,p\}$ and $B_A$ extend $B^*$, which implies a payoff of $r$ ($C^\val_i$-$A^\val_i$ corner).
Hence, for $\val$, sending a compliant vote weakly dominates any other action, and it follows the game rule.
\begin{table}[b]
    \centering
    \caption{Payoff matrix of the slot $p$ leader $L$ in the extended game. Let $C^L_p$ denote the event that $L$ proposes a compliant block.
    }
    \begin{tabular}{|c|c|}
        \hline
         $C^L_p$ & $NC^L_p$ \\
         \hline
         $R$ & $0$ \\
         \hline
    \end{tabular}
    \label{table:payoff_matrix-repeated_game_p_leader}
\end{table}
\begin{table}[t]
    \centering
    \caption{Payoff matrix of a solo slot $i$ validator $\val$ in the extended game. 
    Let $C^\val_i$ denote the event that $\val$ sends a compliant vote.
    Let $A^\val_i$ denote the event that all slot $i$ attestors send compliant votes.}
    \begin{tabular}{|c|c|c|}
        \hline
          & $C^\val_i$ & $NC^\val_i$ \\
         \hline
         $A^\val_i$ & $r$ & $0$ \\
         \hline
         $NA^\val_i$ & $\leq r$ & $0$\\
         \hline
    \end{tabular}
    \label{table:payoff_matrix-repeated_game_i}
\end{table}

Suppose all leaders of the slots $i+1, \ldots, p$, and all attestors of the slots $i, \ldots, p$, follow the adversary's game rule, \ie, cast compliant votes and propose compliant blocks.
Then, the slot $i$ leader $L$'s payoff matrix in $\texttt{G}^L_i$ is presented in Table~\ref{table:payoff_matrix-repeated_game_i_leader} (Fig.~\ref{fig:extended_game_subgame_2(p-i)+2}).
If $L$ proposes a compliant block $B^e_i$, all slot $i$ attestors vote for it.
When all leaders and attestors of the slots $i, i+1, \ldots, p$ follow the adversary's game rule, all of the blocks $B^e_{j}$, $j \in \{i+1,\ldots,p\}$ and $B_A$ extend $B^e_i$, which implies a payoff of $R$ ($C^L_i$ corner).
However, if $L$ proposes a non-compliant block, the block does not gather any slot $p$ votes.
In this case, the compliant tip obtained by any subsequent leader, including $\mathcal{A}$, does not include $L$'s block in its prefix by Alg.~\ref{alg.compliant}, implying a payoff of $0$ for $L$ ($NC^L_i$ corner).
Thus, proposing a compliant block $B^e_i$ weakly dominates proposing any other block, and $L$ follows the adversary's game rule.
\begin{table}[b]
    \centering
    \caption{Payoff matrix of the slot $i$ leader $L$ in the extended game. Let $C^L_i$ denote the event that $L$ proposes a compliant block.
    }
    \begin{tabular}{|c|c|}
        \hline
         $C^L_i$ & $NC^L_i$ \\
         \hline
         $R$ & $0$ \\
         \hline
    \end{tabular}
    \label{table:payoff_matrix-repeated_game_i_leader}
\end{table}

Therefore, there exists a subgame perfect Nash equilibrium, where all solo leaders and attestors follow the adversary's game.
Then, the chain ending at the sequence of blocks $B^e_1, \ldots, B^e_p, B_A$ becomes the canonical chain after the end of slot $p$, and the adversary wins the game.
In this case, safety and liveness are violated since the blocks $B_{-p+1}, \ldots, B_{0}$ are reorged.
\end{proof}

\begin{note}
\label{note:5.1}
There are many subgame perfect Nash equilibria, where the adversary $\mathcal{A}$ is not successful. 
Suppose the slot $i \in [p]$ leaders propose non-empty blocks $B_i$ extending the blocks $B_0, \ldots, B_{i-1}$, and over $1+W/2$ slot $i$ attestors vote for $B_{i}$.
Then, $\mathcal{A}$ is unsuccessful, and no slot $i$ vote for a block other than $B_i$ can be both correct and timely, whereas due to the adversary's commitment, no slot $p$ vote can receive a positive payoff for any action.
Thus, at each subgame, 
voting for $B_i$ weakly dominates any other action for the slot $i$ committee, and 
proposing a non-empty block $B_i$ extending $B_{i-1}$ weakly dominates any other action for the slot $i$ leader.
Therefore, the situation above is a SPNE.
\end{note}


\begin{note}
\label{note:5.3}
Under an honest minority assumption (\ie, number of honest validators per slot $W_h<(W/2)$), LMD GHOST is still susceptible to the extended attack, \ie, there might be a SPNE where the adversary succeeds.
In the extended game, the adversary can run the game for $p'=pW/(W-2W_h)$ blocks, gathering enough votes on a conflicting chain of $p'$ blocks from slots $1, \ldots, p'$ to reorg the $p$ blocks from slots $-p+1, \ldots, 0$.
\end{note}

\section{The Simple Attack in the Presence of Staking Pools}
\label{sec:appendix-proof-simple-game-fixed}


\begin{proof}[Proof of Theorem~\ref{thm:fixed-validator-set-simple-game-theorem}]
To analyze the simple game in the presence of staking pools, we calculate the payoff matrix of a staking pool $P$. 
Suppose the staking pool $P$ has $m < W_p$ attestors in each slot, and recall that $r$ represents the reward of an attestor for a correct and timely vote included in the canonical chain.
Assume further that the slot $t$ leader is not under $P$'s control.
The pay-off that a staking pool receives in the simple game is equal to the sum of the vote rewards that all its $2m$ attestors receive in both slots $t-1$ and $t$. 
The payoff matrix of staking pool $P$ is presented in Table~\ref{table:pay-off-matrix_simple-game_fixed-validator-set}. 
Each entry in the table is calculated as the sum of two components: The left component represents the vote reward of slot $t-1$, while the right component represents the vote reward of slot $t$.
Overall, 4 different scenarios can occur:
    
    \smallskip
    \noindent
    $\mathbf{\texttt{Fail}-NC_P}$ \textbf{corner (Fig.~\ref{fig:simple_game_staking_valool_AV}):} $\Wp$ or more slot $t$ attestors and staking pool $P$ validators vote for $B_t$.
    In this case, $B_t$ cannot be reorged by $B_{t+1}$, and $P$'s votes for slot $t-1$ are included in $B_t$.
    Thus, $P$ receives a payoff of $mr$ for slot $t-1$. 
    However, since the $P$'s votes for slot $t$ are non-compliant, 
    the adversary $\mathcal{A}$ excludes $P$'s votes from its slot $t+1$ block $B_A$. 
    Therefore, $P$'s reward for slot $t$ is equal to $0$.
    
    \smallskip
    \noindent
    $\mathbf{\texttt{Fail}-C_P}$ \textbf{corner (Fig.~\ref{fig:simple_game__staking_valool_ANV}):} $\Wp$ or more slot $t$ attestors vote for $B_t$, but the staking pool $P$ validators do not vote for $B_t$, instead voting for $B_t$'s parent.
    In this case, $B_t$ does not get reorged, and $P$ receives a payoff of $mr$ for slot $t-1$. 
    As over $\Wp$ slot $t$ attestors have voted for $B_t$, $\mathcal{A}$ proposes its block on top of $B_t$.
    Therefore, $P$ does not receive any payoff for its slot $t$ votes for $B_{t-1}$, as they are not correct. 
    
    \smallskip
    \noindent
    $\mathbf{\texttt{Succeed}-NC_P}$ \textbf{corner (Fig.~\ref{fig: simple_game__staking_valool_NAV}):} $\Wp$ or less slot $t$ attestors vote for $B_t$, but $P$ votes for $B_t$.
    Then, the attack is successful, and $B_t$ gets reorged. As a result, $P$'s slot $t-1$ vote reward, \ie, payoff, is equal to $0$.
    Since $P$'s slot $t$ votes are non-compliant, its payoff for slot $t$ is also equal to $0$.
    
    \smallskip
    \noindent
    $\mathbf{\texttt{Succeed}-C_P}$ \textbf{corner (Fig.~\ref{fig:simple_game_staking_valool_NANV}):} $\Wp$ or less slot $t$ attestors vote for $B_t$, and $P$ does not vote for $B_t$.
    Then, the attack is successful, and $\mathcal{A}$ reorgs $B_t$ with its block. 
    Since block $B_t$ gets reorged, $P$'s payoff for slot $t-1$ is equal to $0$, but it receives a payoff of $mr$ for its compliant slot $t$ votes.
%
\begin{table}[t]
    \centering
    \caption{Payoff matrix of a staking pool $P$ in the simple game. Let $C_P$ denote the event that the staking pool $P$'s validators vote for $B_{t-1}$ (\ie, $P$ is compliant). Let \texttt{Succeed} denote the event that less than $\Wp$ slot $t$ attestors vote for $B_t$ (attack succeeds), and let \texttt{Fail} denote the event that $\Wp$ or more slot $t$ attestors vote for $B_t$ (attack fails).}
    \begin{tabular}{|c|c|c|}
        \hline
          & $C_P$ & $NC_P$ \\
         \hline
         $\texttt{Succeed}$ & $0+mr$ & $0+0$ \\
         \hline
         $\texttt{Fail}$ & $mr+0$ & $mr+0$\\
         \hline
    \end{tabular}
    \label{table:pay-off-matrix_simple-game_fixed-validator-set}
\end{table}

As can be seen in Table~\ref{table:pay-off-matrix_simple-game_fixed-validator-set}, similar to the solo validator analysis, voting for the parent block $B_{t-1}$ is a weakly dominant strategy for a staking pool. This indicates that if rational validators constitute over $W-\Wp$ of the committees, the simple attack would be successful even in the presence of staking pools. 
\end{proof}

\section{The Extended Attack in the Presence of Staking Pools}
\label{sec:appendix-proof-extended-game-fixed}


\begin{proof}[Proof of Theorem~\ref{thm:fixed-validator-set-extended-game-theorem}]
Consider a staking pool $P$ that controls $m<W_p$ attestors in each slot, and recall that $r$ represents the reward of an attestor for a correct and timely vote included in the canonical chain.
Suppose all slot $1, \ldots, p$ leaders propose compliant blocks and all slot $1, \ldots, p$ attestors vote for compliant blocks, resulting in the reorg of the blocks from slots $-p+1, \ldots, 0$. 
We show that this is a subgame perfect Nash equilibrium by backward induction.

Consider a branch of actions taken by the players, in which slot $1, \ldots, p$ leaders propose compliant blocks and all slot $1, \ldots, p$ attestors except those in $P$ vote for compliant blocks (those in $P$ might or might not have voted for compliant blocks).
Let $B^e_p$ denote the compliant block voted by the slot $p$ attestors.
Then upon running Alg.~\ref{alg.compliant}, $\mathcal{A}$ obtains the same block $B^e_p$ as its compliant tip (as $W_s < W_p$) and extends it with its block $B_A$.
Now, if $P$ sends compliant votes, which must be for $B^e_p$ as ties are broken deterministically, then its votes are included by $B_A$, and it achieves a total payoff of $mr$ for slot $p$.
If $P$ does not send compliant votes, its total payoff for slot $p$ becomes $0$ as $B_A$ would not include its votes.
Hence, $P$ also sends compliant votes for block $B^e_p$.

Since slot $p$ attestors send compliant votes, the slot $p$ leader $L$'s block becomes part of the canonical chain only if it proposes a compliant block $B^e_p$, in which case it attains a reward of $R$.
Otherwise, its reward is $0$.
Thus, it proposes a compliant block, \ie, $B^e_p$.

Next, consider a branch of actions taken by the players, in which slot $1, \ldots, i$ leaders propose compliant blocks and all slot $1, \ldots, i$ attestors except those in $P$ vote for compliant blocks (those in $P$ might or might not have voted for compliant blocks).
In this case, let $B^e_i$ denote the compliant block voted upon by the slot $i$ attestors.
Then, the canonical chain at the start of slot $p+1$ contains $B^e_i$ and all the compliant blocks extending $B^e_i$ by backward induction.
Now, if $P$ sends compliant votes, which must be for $B^e_i$ as ties are broken deterministically, then the votes are included by the compliant block $B^e_{i+1}$ extending $B^e_i$ in the canonical chain, and $P$ achieves a total payoff of $mr(p-i)$ for slots $i, \ldots, p$ by backward induction.
If $P$ does not send compliant votes, its total payoff for slots $i, \ldots, p$ becomes $mr(p-i-1)$ as $B^e_{i+1}$ would not include the votes.
Hence, $P$ also sends compliant votes for block $B^e_i$.

Since slot $i$ attestors send compliant votes, the slot $i$ leader $L$'s block becomes part of the canonical chain only if it proposes a compliant block $B^e_i$, in which case it attains a reward of $R$.
Otherwise, its reward is $0$.
Thus, it proposes a compliant block, \ie, $B^e_i$.
This concludes the proof.
\end{proof}

\begin{note}
\label{note:extended_game_fixed_validator}
There are SPNE, similar to those in Note~\ref{note:5}, where the adversary $\mathcal{A}$ is not successful.
\end{note}

\section{Performance Comparison: The Optimistic Scenario}
\label{sec:appendix-optimistic}

In this section, we estimate the storage, computation, and communication overheads of our solution and compare them with the overheads of the current Ethereum version in an \emph{optimistic scenario}, where the majority of slot $t+1$ aggregators sign the same slot $t$ aggregate.

\textbf{Block space}: As of October 23, 2024, each block of Ethereum contains 128 aggregated attestation signatures as well as 128 aggregation lists, where the size of each signature and each aggregation list is equal to 96 bytes and $N^\text{att}=524$ bits, respectively. Therefore, the total size of
    aggregated signatures and aggregation lists (aggregates)
    in a single block is 20.672 kilobytes. 
    The average Ethereum block size from January 1, 2024, to October 23, 2024, is approximately $101.5$ kilobytes~\cite{Block_size}.
    
    Under the optimistic scenario, each block includes $128$ aggregated evidences for an earlier slot (one per sub-committee in the earlier slot). Each piece of aggregated evidence comprises 4 parts: the aggregated evidence signature ($96$ bytes), the evidence aggregation list ($N^\text{agg}$ bits), the aggregated attestation signature ($96$ bytes), and the attestation aggregation list ($N^\text{att}$ bits). 
    Assuming that $N^\text{att}=524$, the total size of aggregated evidences in each block is equal to $32.96+ 0.016*N^\text{agg}$ kilobytes. 
    Suppose the parameters $N^\text{agg}$ and $N^\text{limit}$ are set such that the probability that the number of non-adversarial aggregators in a sub-committee becomes less than or equal to $N^\text{limit}$ is less than $10^{-4}$.
    \begin{itemize}[leftmargin=*]
        \item Assuming the adversarial stake share is less than or equal to $0.1$, $N^\text{agg}$ and $N^\text{limit}$ can be set to 16 and 8, respectively. In this case, the total size of evidences in a single block would be equal to $33.216$ kilobytes. As these evidences replace the 20.672 kilobyte aggregates in the current Ethereum version, the block size increases by 12.544 kilobytes. This results in a 12.3\% increase in the block size.
        \item Assuming the adversarial stake share is less than or equal to $1/3$, $N^\text{agg}$ and $N^\text{limit}$ can be set to 128 and 64, respectively. In this case, the total size of evidences per block would be equal to $35.008$ kilobytes, which is 14.336 kilobytes higher than the current size of block aggregates. This results in an 14.1\% increase in the block size.
    \end{itemize}

\noindent
 \textbf{Computational overhead}: We assume $N^\text{att}=524$ and $N^\text{agg}=16$. Let $C^\text{add}$, $C^\text{mul}$, and $C^\text{pair}$ denote the computational cost of an addition, a scalar multiplication, and a pairing operation over the elliptic curve, respectively.
    \smallskip
    \noindent
    \emph{Aggregators:} In the current Ethereum version, an aggregator must first verify the $N^\text{att}$ single attestation signatures of the corresponding sub-committee, where each signature verification requires $2$ pairing operations. 
    Then, the aggregator has to perform $N^\text{att}-1$ additions over the elliptic curve to aggregate attestations and $1$ scalar multiplication over the elliptic curve to sign the aggregated attestation. Therefore, its computational cost becomes $2N^\text{att}C^\text{pair}+(N^\text{att}-1)C^\text{add} + C^\text{mul}$. 
    In the practical version of our solution, in addition to these operations, a slot $t$ aggregator must verify and sign $1$ aggregated attestation (the one that includes the greatest number of attestations) from slot $t-1$ to generate an evidence. 
    The aggregated attestation can be verified by checking its aggregated signature that contains the signatures of the attestors in the corresponding sub-committee from slot $t-1$.
     To check this signature combining $N^\text{att}$ individual signatures, a verifier needs to perform $N^\text{att}-1$ additions over the elliptic curve to calculate the aggregated public key of the participating attestors, along with 2 pairing operations. Thus, an aggregator's computational cost in our practical version becomes $2(N^\text{att}+1) C^\text{pair} + 2(N^\text{att}-1)C^\text{add}+ 2C^\text{mul}$. Based on our implementation results\footnote{These results were achieved through implementation on a laptop equipped with an Intel Core i7 CPU operating at a processor base frequency of 2.70GHz.}, this computation takes 0.699 seconds.
        
    \smallskip
    \noindent
    \emph{Block proposers:} In our practical version, each block proposer needs to verify and aggregate $N^\text{agg}$ evidences per each sub-committee to generate an aggregated evidence, in addition to standard operations of the current Ethereum. The computational costs of verification and aggregation of evidences per sub-committee are equal to $2N^\text{agg}C^\text{pair}$ and $(N^\text{agg}-1)C^\text{add}$, respectively. As the number of sub-committees per slot is $64$, a block proposer in total incurs an additional computational cost of $128N^\text{agg}C^\text{pair}+64(N^\text{agg}-1)C^\text{add}$. Based on our implementation results, this additional computation takes $1.36$ seconds.
    
    \smallskip
    \noindent
    \emph{Verifiers:} To verify the consensus layer-related data of a block in the current Ethereum version, a verifier must check $64$ aggregated signatures, each belonging to one of the sub-committees. This results in a total computational cost of $64(N^\text{att}-1)C^\text{add} + 128C^\text{pair}$. In our practical version, in addition to this verification, a verifier must verify $1$ aggregated evidence per each sub-committee, where each aggregated evidence contains $N^\text{agg}$ evidences. All the evidences of a sub-committe are created over the same aggregated attestation that includes $N^\text{att}$ attestations. This implies that the evidence verification cost per sub-committee is equal to $(N^\text{att}+N^\text{aagg}-2)C^\text{add} + 4C^\text{pair}$. Therefore, and a verifier incurs the computational cost of $64(2N^\text{att}+N^\text{agg}-3)C^\text{add} + 384C^\text{pair}$ in the practical version. Based on our implementation, this computation takes $0.402$ seconds.
    %

    \noindent\textbf{Communication overhead}: Under the optimistic scenario, each aggregator needs to send an additional pair of messages: the aggregated attestation  ($96$ bytes) from the previous slot and its signature on that attestation  ($96$ bytes). This implies that each aggregator has to send $192$ extra bytes compared to the current Ethereum version.

\section{Performance Comparison: The Worst Case Scenario}
\label{sec:appendix-worst-case}

In this section, we estimate the storage, computation, and communication overheads of our solution and compare them with the overheads of the current Ethereum version in the \emph{worst-case scenario}, which happens if the slot $t+1$ block is missing or has excluded the attestations, and the slot $t+1$ aggregators sign different aggregates.
Processing these diverse evidences can impose additional overheads on the blockchain network. 
In the following, we present the analysis of the overheads incurred by our solution in the worst-case scenario:

    \smallskip
    \noindent
    \textbf{Block space}:
    In the worst-case scenario, each block includes $128*N^\text{agg}$ individual evidences ($N^\text{agg}$ evidences per each sub-committee). Each piece of individual evidence comprises 3 parts: the evidence signature ($96$ bytes), the aggregated attestation signature ($96$ bytes), and the attestation aggregation list ($N^\text{att}$ bits).
    Assuming that $N^\text{att}=524$, the total size of evidences in each block is equal to $32.96*N^\text{agg}$ kilobytes. 
    Suppose the parameters $N^\text{agg}$ and $N^\text{limit}$ are set such that the probability that the number of non-adversarial aggregators in a sub-committee becomes less than or equal to $N^\text{limit}$ is less than $10^{-4}$.
    \begin{itemize}[leftmargin=*]
        \item Assuming the adversarial stake share is less than or equal to $0.1$, $N^\text{agg}$ and $N^\text{limit}$ can be set to 16 and 8, respectively. In this case, the total size of evidences in a single block would be equal to 527.36 kilobytes. 
        \item Assuming the adversarial stake share is less than or equal to $1/3$, $N^\text{agg}$ and $N^\text{limit}$ can be set to 128 and 64, respectively. In this case, the total size of evidences in a single block would be equal to $4.218$ megabytes.
    \end{itemize}
    
    \smallskip
    \noindent
     \textbf{Computational overhead}: We assess the computational overhead under the worst-case scenario from the perspective of the following parties:
    \begin{itemize}[leftmargin=*]
        \item Aggregators: It is similar to the optimistic scenario.
        \item Block proposers: In the worst-case scenario, each block proposer needs to verify $N^\text{agg}$ evidences per each sub-committee to generate an aggregated evidence, in addition to standard operations in the current Ethereum version. The computational cost of evidence verification per sub-committee is equal to $2N^\text{agg}C^\text{pair}$. As the number of sub-committees per slot is equal to $64$, a block proposer in total incurs an additional computational cost of $128N^\text{agg}C^\text{pair}$.
        \item Verifiers: Under the worst-case scenario, in addition to the existing verification process in the current Ethereum version, a verifier needs to verify $N^\text{agg}$ single evidences per each sub-committee, where each single evidence is created over an aggregated attestation that includes $N^\text{att}$ attestations. This implies that the evidence verification cost per sub-committee is equal to $(N^\text{att}-1)N^\text{agg}C^\text{add} + 4N^\text{agg}C^\text{pair}$. Therefore, a verifier incurs the computational cost of $64(N^\text{att}-1)(N^\text{agg}+1)C^\text{add} + 64(2+4N^\text{agg})C^\text{pair}$ in the worst-case scenario.
    \end{itemize}

    
    \smallskip
    \noindent
    \textbf{Communication overhead} is the same as the optimistic case.

\section{Impact of Future Ethereum Changes}
\label{sec:appendix_discussion}
%

In this section, we analyze the impact of potential future changes in the Ethereum protocol~\cite{roadmap} on our attacks. 
None of these modifications can mitigate the attack.
The attack can even become more destructive with certain changes, such as single-slot finality.

    
\subsection{\text{[block, slot]-voting}~\cite{block_slot_voting}} 
In the current Ethereum, if no block is proposed during a slot, the slot's assigned committee votes for the previous slot's block as the head block.
In the [block, slot]-voting mechanism, the attestors of a slot with no block vote on an \emph{empty slot}. 
The [block, slot]-voting might be adopted by the Ethereum protocol in the future as it 
facilitates the implementation of an in-protocol proposer-builder-separation mechanism. 
    
A slightly modified version of our attack is still applicable to the protocol with the [block, slot]-voting mechanism. 
For instance, in the simple game of Section~\ref{sec:simple-game}, the game rule only needs to specify ``the attestors of slot $t$ to vote for the empty slot" rather than the previous block.
Similarly, the extended game would stipulate the attestors to vote for empty slots during slots in $[p]$.
    
\subsection{Secret Leader Election (SLE)~\cite{secretLeaderElection}} 
Currently, the list of upcoming block proposers in Ethereum is public, meaning that a validator can predict future slot leaders. 
SLE can ensure that only the selected slot leader knows its role as a block proposer in one of the upcoming slots. 
SLE can improve the resistance of Ethereum against the denial-of-service (DOS) attack.

SLE cannot mitigate our attack as the attacker only needs to be aware of the future time slots in which they \emph{themselves} are selected as the slot leader, which is feasible even with the adoption of SLE.
Moreover, the attack could still be modified to work under protocol changes where leaders themselves do not know their slots until the slot starts.
For instance, the adversary could deploy a contract that commits to rewarding a slot $t$ attestor, whenever the attestor can prove that the adversary is the leader of some slot $t+1$ and that the attestor satisfies the compliance conditions with its slot $t$ votes.
The contract can also be permissionlessly triggered to slash the adversary if it happens to exclude compliant votes.
    
\subsection{In-protocol Proposer--Builder Separation (ePBS)~\cite{PBS}} 
Currently, Ethereum proposers (slot leaders) are responsible for creating a bundle of transactions and including them in the proposed block. 
They may decide to outsource the responsibility of bundle building to builders operating outside of the Ethereum protocol. 
In this context, ePBS is a prominent direction (among many~\cite{mev-directions}) considered for the future of MEV allocation.
It splits the task of building a transaction bundle and proposing a block between two validators: a block builder and a block proposer. 
Block builders become responsible for creating transaction bundles and offering them to block proposers. 
Block proposers become responsible only for proposing and sending out the blocks to peers on the network without knowing the content of the included transaction bundle. 
ePBS can offer a fairer distribution of MEV opportunities~\cite{enshrine_PBS}.

Our attack can also threaten the ePBS version of Ethereum. 
A malicious slot $t$ leader can incentivize the attestors of slot $t-1$ to vote on an unrevealed transaction bundle. 
By doing so, the victim block builder of slot $t-1$, who has published his transaction bundle honestly, may get penalized, and the included MEV in the victim transaction bundle can be stolen.


\subsection{Single-Slot Finality~\cite{singleSlotFinality}} 
In present-day Ethereum, which uses Casper FFG as finality gadget, a block has to wait for 2 epochs to get finalized. 
The concept of single-slot finality aims to achieve the goal of finalizing blocks within the same slot they were proposed. 
It can thus significantly reduce the transaction confirmation time. 
    
The security of protocols with single-slot finality~\cite{DAmatoZ23} can be proven based on the honest majority assumption for online validators~\cite{simple_singleSlotFinality}. 
In our attacks, the adversary need not control any majority.
Thus, the attacks would succeed in creating equilibria, where liveness can be violated for arbitrarily long slot sequences, on Ethereum with single-slot finality as long as the majority of the validators are profit-maximizing (rational) agents.
Even though the attacks would not be able to reorg a long sequence of finalized blocks (recall that in the case of LMD GHOST, a long sequence could be reorged before the Casper FFG finalizations catch up, \cf, Section~\ref{sec:solo-validators-extended}),
if the adversary is only interested in violating liveness but not safety, 
it can induce subgame perfect equilibria, where a long sequence of empty blocks are finalized.

\section{Simple Attack without Proposer Boost}
\label{sec:appendix-simple-attack-wo-boost}
The simple attack introduced in Section~\ref{sec:simple-game} benefits from a high proposer boost. However, even reducing the proposer boost to $0$ cannot completely mitigate this attack. In this section, we introduce a modified version of the simple attack that works under the assumption of $\Wp = 0$.

Assume $\Wp=0$. The attack applies to some slot $t$ where slots $t-1$ and $t+1$ are adversarial. Let $B_0$ denote the latest block within the canonical chain before the start of slot $t-1$. In this attack, the adversary does not publish block $B^\mathcal{A}_{t-1}$ at slot $t-1$ and delays publishing it to the next slot. At the beginning of slot $t$, when the leader of slot $t$ proposes block $B_t$ on top of block $B_0$, the attacker simultaneously publishes its block $B^\mathcal{A}_{t-1}$ proposed on top of $B_0$. The adversary sets the game rule for the attestors of slot $t$ as follows: "The attestors of slot $t$ should vote for the block $B^\mathcal{A}_{t-1}$. Otherwise, their votes will not be included in the block of slot $t+1$". At the start of slot $t+1$, adversary $\mathcal{A}$ takes the following action:
\begin{itemize}[leftmargin=*]
    \item If over half of slot $t$ attestors vote for $B_{t}$, it proposes its block $B^\mathcal{A}_{t+1}$ on top of block $B_{t}$ and refrains from including the non-compliant votes. In this scenario, $\mathcal{A}$ is unsuccessful.
    \item If over half of slot $t$ attestors vote for $B^\mathcal{A}_{t-1}$, it proposes its block $B^\mathcal{A}_{t+1}$ on top of block $B^\mathcal{A}_{t-1}$ and includes the compliant votes. Then, $\mathcal{A}$ is successful.
\end{itemize}
Using the same analysis as presented in Section~\ref{sec:simple-game-analysis}, one can show as long as more than half of slot $t$ attestors are rational, the adversary can succeed in the simple attack without proposer boost. 
If $\Wp>0$, this attack can succeed even with fewer than 50\% rational attestors.

\end{document}